\title{Linear Additives}
\author{Gianluca Curzi
\institute{University of Turin\\
Torino, Italy}
\email{curzi@di.unito.it}
}
\definecolor{mygreen}{rgb}{0, 0.5, 0}
\definecolor{myred}{rgb}{0.5, 0, 0}
\definecolor{myblue}{rgb}{0, 0, 0.5}
\definecolor{lightblue}{rgb}{0.4, 0.6, 0.8}
\definecolor{myviolet}{rgb}{0.59, 0.29, 0}
\theoremstyle{plain}
\newtheorem{thm}{Theorem}
\newtheorem{lem}[thm]{Lemma}
\newtheorem{prop}[thm]{Proposition}
\newtheorem{cor}[thm]{Corollary}
\theoremstyle{definition}
\newtheorem{defn}{Definition}
\newtheorem{exmp}{Example}
\theoremstyle{remark}
\newtheorem{rem}{Remark}
\newcommand{\pcs}[1]{\langle \! \langle {#1}\rangle \! \rangle}
\newcommand{\s}[1]{\vert {#1}\vert}
\newcommand{\red}{\rightsquigarrow}
\begin{document}
\maketitle

\begin{abstract}
We introduce $\mathsf{LAM}$,  a subsystem of  $\mathsf{IMALL}_2$  with restricted  additive rules  able to manage duplication linearly, called  \textit{linear additive rules}. $\mathsf{LAM}$ is presented as the type assignment system for a  calculus endowed with copy constructors, which deal with substitution in a linear fashion. As opposed to the standard additive rules, the linear additive rules do not affect the complexity of term reduction: typable terms of $\mathsf{LAM}$  enjoy linear strong  normalization. Moreover, a mildly weakened version of cut-elimination for this system is proven which takes a cubic number of steps. Finally, we  define a  sound translation from  $\mathsf{LAM}$'s proofs into $\mathsf{IMLL}_2$'s linear lambda terms, and we study its complexity.

\end{abstract}

\section{Introduction}
\textit{Linear Logic} ($\mathsf{LL}$) is a refinement of both classical and intuitionistic  logic introduced by Girard in~\cite{girard1987linear}. A central role in $ \mathsf{LL} $ is played by the  \textit{exponential} modalities $\oc$ and $ \wn$, which give a logical status to the structural rules of weakening and contraction. The exponentials allow us to discriminate between \textit{linear} resources,  consumed exactly once, and \textit{non-linear} resources,  reusable at will. Moreover, since the uncontrolled use  of the structural rules is forbidden, conjunction and disjunction come with two distinct presentations:    the \textit{multiplicative} version (resp.~$\otimes$ and $ \parr$) and the \textit{additive} one (resp.~$\with$ and  $ \oplus $). 

The presence of multiple formulations of the same connective has prompted the analysis of specific  \emph{fragments} of Linear Logic, i.e.~subsystems of  $ \mathsf{LL} $  that illustrate the  behavior of a specific group of connectives. The simplest fragment of $ \mathsf{LL} $ is $\mathsf{MLL}$ (\textit{Multiplicative Linear Logic}),  i.e.~the modal-free subsystem of $ \mathsf{LL} $ with inference  rules for $ \otimes $, $ \parr $. Another one is  $\mathsf{MALL}$ (\textit{Multiplicative Additive Linear Logic}), obtained by extending $ \mathsf{MLL}$ with  \textit{additive rules}, i.e.~the inference rules for $ \with $, $ \oplus $.

As pointed out in~\cite{matsuoka2004nondeterministic}, according to the \enquote{computation-as-normalization} paradigm, the additive rules of $ \mathsf{LL} $ can be used to  express  non-deterministic program executions. This intuition has been further investigated  in the field of ICC (\textit{Implicit Computational Complexity}), a branch of computational complexity devising calculi that abstract from machine models and characterize complexity classes without imposing \enquote{external} resource bounds. In this setting, several variants of the additive rules able to express  non-deterministic computation more explicitly have been proposed to capture the class $\mathsf{NP}$. Examples are~\cite{matsuoka2004nondeterministic, gaboardi2008soft, maurel2003nondeterministic}, all based on \emph{light logics}, i.e.~subsystems of (second-order) $ \mathsf{LL} $ with weaker exponential rules that induce a complexity bound on proof normalization. 

Using variants of the additive rules to characterize the non-deterministic polytime problems raises some issues, because these inference rules affect the complexity of cut-elimination/normalization, which may require an exponential cost. A standard approach to circumvent this fact is to focus on a specific cut-elimination strategy called \textit{lazy}~\cite{girard2017proof}, which \enquote{freezes} those commuting cut-elimination steps that involve   duplication of (sub)proofs.    A similar technique can be found in~\cite{gaboardi2008soft}, where the  type system $\mathsf{STA}_+$ is introduced  to capture  the complexity class $\mathsf{NP}$ in the style of ICC. $\mathsf{STA}_+$ extends \textit{Soft Type Assignment}~\cite{gaboardi2009light}  with the \textit{sum rule} below 
\begin{equation}\label{eqn:  sum rule}
	\AxiomC{$\Gamma \vdash M:A$}
\AxiomC{$\Gamma \vdash N:A$}
\RightLabel{$sum$}
\BinaryInfC{$\Gamma \vdash M+ N:A$}
\DisplayProof
\end{equation}
   and the  non-deterministic reductions  $M \leftarrow M+N\rightarrow N$ for the choice operator   $+$. The sum rule is   close to the additive rules, and  suffers the same drawbacks. Consequently, to prove that  $\mathsf{STA}_+$ is sound for $\mathsf{NP}$, a particular reduction strategy is needed to avoid exponential computations. 

In this paper we present a different solution to the complexity-theoretical issues caused by the additive rules. We start focusing on the second-order intuitionistic formulation of $\mathsf{MALL}$, i.e.~$\mathsf{IMALL}_2$. We shall look at the latter  as a type system, essentially by considering formulas as types and by decorating logical derivations with $\lambda$-terms endowed with pairs and projections. The analysis of the non-linear features of the  additive rules   leads  to the new type system $\mathsf{LAM}$  (\textit{Linearly Additive Multiplicative Type Assignment}). This  is a subsystem of $\mathsf{IMALL}_2$  obtained by imposing some conditions on types and by replacing the standard additive rules with  weaker versions, called \textit{linear additive rules}. To some extent, $\mathsf{LAM}$ is more expressive than the \enquote{lazy} restriction of $\mathsf{IMALL}_2$. Indeed,  the linear additive rules  allow some forms of duplication that lazy evaluation  forbids. Nonetheless, these special additive rules are able to maintain control on the complexity of normalization, preventing exponential explosions and recovering \textit{linear} \textit{strong} normalization.  

The cut-elimination rules for $\mathsf{LAM}$ are constrained to copy-cat the reduction rules on terms, making the  cut rule  no longer  admissible. We then  identify  a class of types, called \textit{$\forall$-lazy},  whose derivations can always be turned into cut-free ones in cubic time.  This result is analogous to Girard's  restricted (lazy) cut-elimination theorem for $\mathsf{MALL}$ in~\cite{girard2017proof}, but somehow more permissive.

Last, following essentially~\cite{curzi2019type}, we introduce a computationally sound translation mapping a derivation of  $\mathsf{LAM}$  into a linear $\lambda$-term of $\mathsf{IMLL}_2$  whose  size can be exponentially bigger than the  source derivation. The translation exploits the mechanisms of linear duplication and erasure studied by Mairson and Terui in~\cite{mairsonlinear, mairson2003computational}, and shows how $\mathsf{LAM}$  is able to express such mechanisms in a very compact and elegant way.

\section{From Standard  Additives to Linear  Additives}
In this section we briefly recall the    $(\multimap, \with,  \forall)$ fragment of $\mathsf{IMALL}_2$ as a type  system,  and we show how nesting instances of the additive rules in a derivation  produces an exponential blow up in normalization. To circumvent this issue we introduce \emph{linear additives}, weaker versions of the standard additives permitting restricted forms of  duplication without affecting the complexity of normalization.

\subsection{The System  $\mathsf{IMALL}_2$} \label{subsec: IMALL}
We present   $\mathsf{IMALL}_2$  as a type assignment system for  the  calculus $\Lambda_{\pi, \langle \rangle}$, whose terms are defined by the following grammar:
 \begin{equation}\label{eqn: grammar of terms in IMALL}
M := x \ \vert \ \lambda x. M \ \vert \ MM \ \vert \ \langle M, M\rangle   \ \vert \ \pi_1 (M) \ \vert \ \pi_2(M)   
\end{equation}
where $x$ is taken from a denumerable set of variables. The set of free variables of a term $M$ is written $FV(M)$,   and  the  meta-level substitution for terms is denoted $M[N/x]$.  The   \textit{size} $\vert M \vert$ of a term $M$ is inductively defined as follows:
\begin{equation}\label{sizetermsMALL}
\begin{aligned}
\vert x \vert &\triangleq 1\\
\vert \lambda x. M \vert &\triangleq \vert \pi_i (M)\vert \triangleq \vert M \vert +1 && i \in \{ 1,2\} \\
\vert MN \vert &\triangleq \vert \langle M, N \rangle \vert \triangleq \vert M \vert + \vert N \vert +1.
\end{aligned}
\end{equation}
 The \emph{one-step relation $\rightarrow_{\beta }$} is the binary relation over $\Lambda_{\pi, \langle \rangle}$    defined by:
\begin{equation}\label{eqn: reduction rules lambda calculus with pairs}
\begin{aligned}
(\lambda x. M)N &\rightarrow_{\beta } M[N/x]  \qquad \qquad 
\pi_i\langle M_1, M_2\rangle & \rightarrow_{\beta } M_i \qquad  i \in \lbrace 1,2 \rbrace  
\end{aligned}
\end{equation}
Its reflexive and transitive closure is $\rightarrow^*_\beta$.  If  $M$ $\beta$-reduces to $N$  in exactly $n$ steps we write  $M \rightarrow^n_{\beta }N$. As usual, a $\lambda$-term is in (or is a)  \emph{normal form} whenever no reduction rule applies to it. 

The  set $\Theta_\with$ of types of $\mathsf{IMALL}_2$ is generated by the following grammar:
\begin{equation}
A:= \ \alpha \ \vert \ A \multimap A \ \vert \ A \with A \ \vert \ \forall \alpha. A    
\end{equation}
where $\alpha$ belongs to a denumerable set  of type variables. The set of  free type variables of a type $A$ is written $FV(A)$, and the standard meta-level  substitution for types is  denoted $A\langle B / \alpha \rangle$.  A type $A$ is \emph{closed} if $FV(A)=\emptyset$. The   \textit{size} $\vert A \vert$ of a type $A$  is inductively defined as follows:
\begin{equation}
	\begin{aligned}
	\vert \alpha \vert &\triangleq 1 \\
	\vert A \multimap B \vert &\triangleq  \vert A \with B\vert \triangleq  \vert A \vert + \vert B \vert +1 \\
	  \vert \forall \alpha . A\vert &\triangleq \vert A \vert + 1 .
	\end{aligned}
\end{equation}
We define the notions of \emph{positive subtype} and of \emph{negative subtype} of a type $A$ by simultaneous induction on the structure of $A$:
\begin{itemize}
\item $A$ is a positive subtype of itself;
\item if $B \multimap C$ is a positive (resp.~negative) subtype of $A$, then $B$ is a negative (resp.~positive) subtype of $A$, and $C$ is a positive (resp.~negative) subtype of $A$;
\item if $B \with C$ is a positive (resp.~negative) subtype of $A$, then so are $B$ and $C$;
\item if $\forall \alpha. B$ is a positive (resp.~negative) subtype of $A$, then so is $B$.
\end{itemize}
We say that a type $A$ has \emph{positive (resp.~negative) occurrences of $\forall$} if there exists a positive (resp.~negative) subtype of $A$ with shape $\forall \alpha. B$. We define in a similar way positive and negative occurrences of $\multimap$ and $\with$.

The system $\mathsf{IMALL}_2$ (\textit{Intuitionistic  Second-Order Multiplicative Additive Linear Logic}) is displayed in Figure~\ref{fig: The system IMALL2}, where   $\with$R and $\with$L$i$ are the  \textit{additives rules}. It derives \textit{judgements} with form $\Gamma \vdash M:A$, where  $M\in \Lambda_{\pi, \langle \rangle}$,  $A \in \Theta_\with$ and $\Gamma$ is a \textit{context}, i.e.~a finite multiset of \textit{assumptions} $x:A$.  The system requires the \textit{linear constraint}  $FV(\Gamma)\cap FV(\Delta)= \emptyset$ in both  $cut$  and $\multimap$L, where  $FV(\Gamma)$ denotes the set of all  free type variables in $\Gamma$.  With $\mathcal{D}\triangleleft \Gamma \vdash M: A$ we denote a derivation $\mathcal{D}$ of  $\Gamma \vdash M: A$, and in this case we say that $M$ is an \textit{inhabitant} of $A$ (or that $ A $ is \textit{inhabited} by $ M $). The \textit{size} $\vert \Gamma  \vert$ of a context $\Gamma= x_1: A_1, \ldots, x_n:A_n $ is $\sum^n_{i=1} \vert A_i \vert$, and the   \textit{size} $\vert \mathcal{D}\vert$ of a derivation $\mathcal{D}$ is the number of its rules applications.

We recall that  $\mathsf{IMLL}_2$ (\textit{Intuitionistic  Second-Order Multiplicative Linear Logic}) is obtained from  $\mathsf{IMALL}_2$ by excluding the additive rules from  Figure~\ref{fig: The system IMALL2}. It gives a type   \textit{exactly} to the class of \textit{linear $\lambda$-terms} (see~\cite{hindley1989bck,mairson2003computational}), i.e.~those terms $M$ from the standard $\lambda$-calculus such that: 
\begin{enumerate}[(i)]
\item each free variable of $M$ occurs in it exactly once, and
\item  for each subterm $\lambda x. N$ of $M$, $x$ occurs in $N$ exactly once. 
\end{enumerate}

 Tensors ($\otimes$) and  units  ($\mathbf{1}$) can be introduced in  $\mathsf{IMLL}_2$ (and hence in $\mathsf{IMALL}_2$)  by means of second-order definitions:
 \allowdisplaybreaks	
 \begin{equation}\label{eqn: tensor and unit}
 \begin{aligned}
&\textbf{1} \triangleq \forall \alpha. (\alpha \multimap \alpha) &  &A \otimes B \triangleq 
\forall \alpha. (A \multimap B \multimap \alpha) \multimap \alpha\\
&\mathbf I \triangleq \lambda x.x &  &M \otimes N \triangleq \lambda z. z\,M\,N \\
&\mathtt{let}\ M \mathtt{\ be \ }\mathbf I \mathtt{\ in \ }N \triangleq MN &&\mathtt{let}\ M \mathtt{\ be \ }x\otimes y \mathtt{\ in \ }N
\triangleq M(\lambda x. \lambda y. N) .
\end{aligned}	
 \end{equation}
 Hence, the  inference rules for $\otimes$ and $\mathbf{1}$, and the reduction rules 
 \allowdisplaybreaks	
 \begin{equation}\label{eqn: tensor unit reduction rules}
 \begin{aligned}
 \mathtt{let}\ \mathbf{I} \mathtt{\ be \ }\mathbf I \mathtt{\ in \ }N&\rightarrow_\beta N\\
 \mathtt{let}\  M_1\otimes  M_2  \mathtt{\ be \ }x_1\otimes x_2 \mathtt{\ in \ }N&\rightarrow_\beta N[M_1/x_1, M_2/x_2]
 \end{aligned}	
 \end{equation}
are  derivable in $\mathsf{IMLL}_2$.

\begin{figure}[t]
\centering
\begin{framed}
\begin{mathpar}
\inferrule*[Right=$ax$]{\\}{x: A \vdash x: A} \and \inferrule*[Right=$cut$]{\Gamma \vdash N: A \\ \Delta, x: A \vdash M: C}{\Gamma, \Delta \vdash M[N/x]: C}\\
\inferrule*[Right=$\multimap$R]{\Gamma, x: A \vdash M: B}{\Gamma \vdash \lambda x. M : A \multimap B}\and
\inferrule*[Right=$\multimap$L]{\Gamma \vdash N:A   \\ \Delta, x:B \vdash M:C}{\Gamma,  y: A \multimap B, \Delta \vdash M[yN/x]: C}  \\
 \inferrule*[Right=$\with$R]{\Gamma  \vdash M_1: A_1\\ \Gamma \vdash M_2: A_2}{\Gamma  \vdash \langle M_1, M_2\rangle : A_1 \with A_2}  \and 
 \inferrule*[Right=$\with$L$i$]{\Gamma, x_i: A_i \vdash M:C   \ \ \ \  i \in \lbrace1,2 \rbrace}{\Gamma,  y :A_1 \with A_2  \vdash M[\pi_i(y)/x_i] :C } \\
\inferrule*[Right=$\forall$R]{\Gamma \vdash M: A\langle \gamma /\alpha \rangle \ \ \ \ \gamma \not \in \mathrm{FV}(\Gamma)}{\Gamma \vdash M: \forall \alpha. A}\and
\inferrule*[Right=$\forall$L]{\Gamma, x: A \langle B/\alpha \rangle \vdash M: C}{\Gamma, x: \forall \alpha . A \vdash M: C}
\end{mathpar}
\caption{The system $\mathsf{IMALL}_2$.}
\label{fig: The system IMALL2}
\end{framed}
\end{figure}

\subsection{Exponential Blowup and Lazy Cut-Elimination for $ \mathsf{IMALL}_2$}\label{subsec:blowuplazy}

The additive rule $\with$R in Figure~\ref{fig: The system IMALL2} affects the complexity of normalization in $\mathsf{IMALL}_2$, letting the size of typable terms and the number of their  redexes grow exponentially during reduction. Definition~\ref{defn: nesting pairs} and Proposition~\ref{prop: lemma size/computation explosion} show an example. 


 \begin{defn}[Nesting $\mathsf{IMALL}_2$ terms]\label{defn: nesting pairs} Let  $A \in \Theta_\with$ and $M \in \Lambda_{\pi, \langle \rangle}$.  For all $n \in \mathbb{N}$, we define $A_{[n]}$ and $M_{[n]}$ as follows:
	\begin{equation}\label{eqn: type An and term Mn}
	A_{[n]} \triangleq \begin{cases}  A &n=0\\
	A_{[n-1]} \with A_{[n-1]} &n>0
	\end{cases}
	\qquad \qquad 
	M_{[n]} \triangleq \begin{cases}
	M &n=0\\
	\langle M_{[n-1]}, M_{[n-1]}\rangle &n>0
	\end{cases} 
	\end{equation}
	 Notice that $\langle M, M\rangle_{[n]}= M_{[n+1]}$. Moreover, for all $n \in \mathbb{N}$ we define $\mathtt{add}^x_{n}$  as follows:
	\begin{equation}\label{eqn: term add}
	\mathtt{add}^x_{n}\triangleq \begin{cases}x &n=0\\
	(\lambda y. \mathtt{add}^y_{n-1})\langle x,x\rangle &n>0
	\end{cases} 
	\end{equation}
\end{defn}

 \begin{prop}[Exponential blow up]\label{prop: lemma size/computation explosion}  Let $A\in \Theta_\with$,  and let $M\in \Lambda_{\pi, \langle \rangle}$ be of type $A$. For all $n \in \mathbb{N}$:
	\begin{enumerate}
		\item \label{enum: size explosion 1}   $\vdash  \lambda x.\mathtt{add}^x_{n}:A\multimap A_{[n]}$ is derivable in $\mathsf{IMALL}_2$;
		\item \label{enum: size explosion 2}  $(\lambda x.\mathtt{add}^x_n)M\rightarrow^{n+1}_\beta  M_{[n]}$, where $\vert (\lambda x.\mathtt{add}^x_n)M \vert= \mathcal{O}(n\cdot \vert M \vert) $ and $\vert M_{[n ]}\vert= \mathcal{O}(2^{n\cdot \vert M \vert})$.
	\end{enumerate}
\end{prop}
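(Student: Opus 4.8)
The plan is to prove the two parts separately, in order, using straightforward inductions on $n$ that exploit the recursive structure of $A_{[n]}$, $M_{[n]}$, and $\mathtt{add}^x_n$ from Definition~\ref{defn: nesting pairs}.

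\textbf{Part 1 (typability).} I would argue by induction on $n$ that $x:A \vdash \mathtt{add}^x_n : A_{[n]}$ is derivable in $\mathsf{IMALL}_2$, from which the statement follows by one application of $\multimap$R. The base case $n=0$ is the axiom $x:A \vdash x:A$, since $\mathtt{add}^x_0 = x$ and $A_{[0]}=A$. For the inductive step, assume $x:A\vdash \mathtt{add}^x_{n-1}:A_{[n-1]}$. Renaming, we also have $y:A\vdash \mathtt{add}^y_{n-1}:A_{[n-1]}$. Applying $\multimap$R gives $\vdash \lambda y.\mathtt{add}^y_{n-1} : A\multimap A_{[n-1]}$. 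Now I use $\with$R twice — or rather, I first need a derivation of $x:A \vdash \langle x,x\rangle : A\with A$, which comes from the axiom $x:A\vdash x:A$ used in both premises of $\with$R (this is legitimate because $\with$R shares the context $\Gamma$). Finally, combining $\vdash \lambda y.\mathtt{add}^y_{n-1}:A\multimap A_{[n-1]}$, which we may instead present as the $\multimap$L-ready pair, with $x:A\vdash \langle x,x\rangle:A\with A$ via $cut$ (or directly via $\multimap$L on the function applied to the argument $\langle x,x\rangle$), yields $x:A \vdash (\lambda y.\mathtt{add}^y_{n-1})\langle x,x\rangle : A_{[n-1]}$. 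But this is not quite $A_{[n]} = A_{[n-1]}\with A_{[n-1]}$; the correct reading is that $\mathtt{add}^y_{n-1}$ already has type $A_{[n-1]}$ when $y:A$, and substituting $\langle x,x\rangle$ (of type $A\with A$) is not what builds the extra $\with$. I would therefore double-check the indexing: the nesting in $A_{[n]}$ is produced by the fact that $\mathtt{add}^y_{n-1}$ returns $A_{[n-1]}$ and we want the whole term at type $A_{[n]}$, so actually the function $\lambda y.\mathtt{add}^y_{n-1}$ must be typed $A\with A \multimap A_{[n]}$ — meaning $y$ should be assigned $A\with A$ via a $\with$L rule feeding into the induction hypothesis applied at the larger index. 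The cleanest formulation: show $y:A\with A \vdash \mathtt{add}^y_n : A_{[n]}$ is \emph{not} the right invariant either. The robust choice is to prove directly by induction that $\vdash \lambda x.\mathtt{add}^x_n : A\multimap A_{[n]}$, handling the step by noting $\mathtt{add}^x_n = (\lambda y.\mathtt{add}^y_{n-1})\langle x,x\rangle$ with $\langle x,x\rangle : A_{[1]}$ and, crucially, re-using the induction hypothesis with the \emph{type} $A$ replaced by $A\with A$ — since $(A\with A)_{[n-1]} = A_{[n]}$ — so that $\vdash \lambda y.\mathtt{add}^y_{n-1}:(A\with A)\multimap (A\with A)_{[n-1]} = (A\with A)\multimap A_{[n]}$, then apply $\multimap$L (or $cut$) against $x:A\vdash\langle x,x\rangle:A\with A$. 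This is the subtle point and the one I would write out most carefully.

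\textbf{Part 2 (reduction and size bounds).} For the reduction count, I again induct on $n$. When $n=0$, $(\lambda x.\mathtt{add}^x_0)M = (\lambda x.x)M \rightarrow_\beta M = M_{[0]}$, which is $1 = 0+1$ step. For $n>0$, $(\lambda x.\mathtt{add}^x_n)M \rightarrow_\beta \mathtt{add}^M_n = (\lambda y.\mathtt{add}^y_{n-1})\langle M,M\rangle$ in one step; by the induction hypothesis applied to the term $\langle M,M\rangle$, we get $(\lambda y.\mathtt{add}^y_{n-1})\langle M,M\rangle \rightarrow^n_\beta \langle M,M\rangle_{[n-1]} = M_{[n]}$, using the identity $\langle M,M\rangle_{[n-1]} = M_{[n]}$ noted in Definition~\ref{defn: nesting pairs}. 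Total: $1 + n = n+1$ steps. For the size bounds, the estimate $\vert M_{[n]}\vert = \mathcal{O}(2^{n\cdot\vert M\vert})$ follows from the recurrence $\vert M_{[n]}\vert = 2\vert M_{[n-1]}\vert + 1$, giving $\vert M_{[n]}\vert = 2^n\vert M\vert + (2^n - 1)$, which is $\mathcal{O}(2^n\cdot\vert M\vert)$ and a fortiori $\mathcal{O}(2^{n\cdot\vert M\vert})$. For $\vert(\lambda x.\mathtt{add}^x_n)M\vert$, I compute $\vert\mathtt{add}^x_n\vert$ by the recurrence $\vert\mathtt{add}^x_n\vert = \vert\mathtt{add}^y_{n-1}\vert + \vert\langle x,x\rangle\vert + 2 = \vert\mathtt{add}^y_{n-1}\vert + 5$ (using $\vert\langle x,x\rangle\vert = 3$ and the two extra constructors for the application and the $\lambda$), hence $\vert\mathtt{add}^x_n\vert = \mathcal{O}(n)$, and then $\vert(\lambda x.\mathtt{add}^x_n)M\vert = \vert\mathtt{add}^x_n\vert + \vert M\vert + 2 = \mathcal{O}(n + \vert M\vert) = \mathcal{O}(n\cdot\vert M\vert)$.

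\textbf{Main obstacle.} The only real delicacy is in Part 1: getting the induction invariant right so that the extra $\with$ layer in $A_{[n]}$ is accounted for. The trick — quantifying the induction over the base type $A$ and using $(A\with A)_{[n-1]} = A_{[n]}$ — resolves it, but it requires stating the induction hypothesis with enough generality (over all types $A$, not a fixed one), and verifying that the linear constraint $FV(\Gamma)\cap FV(\Delta)=\emptyset$ on $cut$/$\multimap$L is vacuously satisfied here since all contexts involved contain at most the single assumption $x:A$ and the $cut$ splits it with the empty context. Everything else is routine arithmetic on the size recurrences.
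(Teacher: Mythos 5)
Your proposal is correct and follows essentially the same route as the paper: the paper resolves the Part~1 indexing issue by proving $\lambda x.\mathtt{add}^x_n : A_{[k]}\multimap A_{[k+n]}$ for all $k$, which is the same strengthening as your quantification over the base type $A$ together with $(A\with A)_{[n-1]}=A_{[n]}$, and your size computations ($\vert\mathtt{add}^x_n\vert=5n+1$, $\vert M_{[n]}\vert=2^n\vert M\vert+2^n-1$) match the paper's. The only cosmetic difference is in Part~2, where the paper first establishes the auxiliary fact $\mathtt{add}^x_n\rightarrow^n_\beta x_{[n]}$ and then substitutes $M$ for $x$, whereas you induct directly on the applied term; both yield the $n+1$ step count.
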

\begin{proof}
	Concerning point~\ref{enum: size explosion 1}, we prove by induction on $n $ that $\lambda x. \mathtt{add}^x_{n}$ has type  $A_{[k]} \multimap A_{[k + n]}$, for all $k \in \mathbb{N}$. If $n=0$ then  $\mathtt{add}^x_0=x$, so that $\lambda x.x$ has type $A_{[k]}\multimap A_{[k]}$. Let us consider the case  $n>0$.  By induction hypothesis,   $\lambda y. \mathtt{add}^y_{n-1}$ has type $A_{[k+1]}\multimap A_{[k+n]}$. If $x$ has type $A_{[k]}$ then  $\langle x, x\rangle $ has type $A_{[k+1]}$ and  $(\lambda y. \mathtt{add}^y_{n-1})\langle x,x\rangle $ has type $A_{[k+ n]}$. Therefore,   $\lambda x. \mathtt{add}^x_{n}=\lambda x. ((\lambda y. \mathtt{add}^y_{n-1})\langle x,x\rangle )$ has type $A_{[k]} \multimap A_{[k+n]}$.  Concerning point~\ref{enum: size explosion 2}, it suffices to prove by induction on $n $ that $\mathtt{add}^x_n\rightarrow^n_{\beta}x_{[n]}$. The base case is trivial. If $n>0$ then $\mathtt{add}^x_{n}= (\lambda y. \mathtt{add}^y_{n-1})\langle x,x\rangle\rightarrow_\beta \mathtt{add}^y_{n-1}[\langle x,x\rangle /y]$. Since by induction hypothesis $\mathtt{add}^y_{n-1}\rightarrow^{n-1}_\beta y_{[n-1]}$ then $\mathtt{add}^y_{n-1}[\langle x,x\rangle /x]\rightarrow^{n-1}_\beta y_{[n-1]}[\langle x,x\rangle /y]=\langle x,x\rangle_{[n-1]}=x_{[n]}$.  Finally,  we notice that, for all $n \in \mathbb{N}$, it holds that $\vert \mathtt{add}^x_n \vert= (5\cdot n)+1$ and $\vert M_{[n]} \vert = 2^{n \cdot \vert M \vert} + \sum^{n-1}_{i=0}2^{i}$. 
\end{proof}


Proposition~\ref{prop: lemma size/computation explosion} is about the complexity of normalization for  $\mathsf{IMALL}_2$'s typable terms,   but it can be  easily restated for cut-elimination. To prevent the   problem of exponential computation,  Girard introduced in~\cite{girard2017proof} the so-called \emph{lazy cut-elimination},  a rewriting procedure for both $\mathsf{MALL}$'s proof nets and their second-order versions that requires only a linear number of steps.  Intuitively, in the sequent calculus presentation of $\mathsf{IMALL}_2$,  lazy cut-elimination   \enquote{freezes} the commuting cut-elimination steps applied to those instances of $cut$ whose right premise is the conclusion of $\with$R, as they  involve duplication of (sub)proofs (see Figure~\ref{fig:commutativeMALL}). 
\begin{figure}[t]
	\centering
	\begin{framed}
		
		\begin{multline*}
				\AxiomC{$\mathcal{D}$}
			\noLine
			\UnaryInfC{$\Gamma \vdash M:A$}
			\AxiomC{\vdots}
			\noLine
			\UnaryInfC{$\Delta, x:A \vdash N: B$}
			\AxiomC{\vdots}
			\noLine
			\UnaryInfC{$\Delta, x:A \vdash P: C$}
			\RightLabel{$\with$R}
			\BinaryInfC{$\Delta, x:A \vdash \langle N, P \rangle: B\with C$}
			\RightLabel{$cut$}
			\BinaryInfC{$\Gamma, \Delta \vdash \langle N[M/x], P[M/x]\rangle : B \with C $}
			\DisplayProof\\
				\rightsquigarrow  \\
				\AxiomC{$\mathcal{D}$}
			\noLine
			\UnaryInfC{$\Gamma \vdash M:A$}
				\AxiomC{\vdots}
			\noLine
			\UnaryInfC{$\Delta, x:A \vdash N: B$}
			\RightLabel{$cut$}
			\BinaryInfC{$\Gamma, \Delta \vdash N[M/x]: B $}
				\AxiomC{$\mathcal{D}$}
			\noLine
			\UnaryInfC{$\Gamma \vdash M:A$}
			\AxiomC{\vdots}
			\noLine
			\UnaryInfC{$\Delta, x:A \vdash P: C$}
			\RightLabel{$cut$}
			\BinaryInfC{$\Gamma, \Delta \vdash P[M/x]: C $}
				\RightLabel{$\with$R}
				\BinaryInfC{$\Gamma, \Delta \vdash \langle N[M/x], P[M/x]\rangle : B \with C $}
				\DisplayProof
						\end{multline*}	

	\caption{Commuting cut-elimination step (with term annotation) involving duplication of $\mathcal{D}$.}
	\label{fig:commutativeMALL}
		\end{framed}
	\end{figure}
Since   lazy cut-elimination  cannot perform certain  cut-elimination steps, it may fail to produce a cut-free proof. Nonetheless, Girard showed that a cut-elimination theorem for $\mathsf{MALL}_2$'s proof nets holds if we stick to those receiving   ($\with, \exists$)-free types  (see~\cite{girard2017proof}). Following~\cite{mairson2003computational}, we call these special types  \enquote{lazy}, and we define their intuitionistic counterparts as follows:

\begin{defn}[Lazy types] A type $A \in \Theta_\with$ is \emph{lazy} if it contains no negative occurrences  of  $\forall$ and no positive occurrences of $\with$.
\end{defn}

Note that the restriction on negative occurrences of $\forall$ in the above definition is just to forbid the hiding of $\with$ connectives by $\forall$L in $\mathsf{IMALL}_2$. 

Lazy cut-elimination for $\mathsf{IMALL}_2$ can be reformulated as a reduction on terms by introducing  some   form of sharing, e.g.~explicit substitution  (see~\cite{DBLP:conf/popl/AbadiCCL90}). The idea is to replace the meta-notation $M[N/x]$ with a term constructor of the following shape:
\begin{equation}\label{eqn:explicitsubstitution}
M \pcs{ x: =N }
\end{equation}
and to add reduction rules   able to introduce and perform substitution  stepwise, such as:
\begin{equation}\label{eqn:suspendedsubstitution}
	\begin{aligned}
	(\lambda x. M)N &\rightarrow M  \pcs{ x: =N }\\
x  \pcs{ x: =N }&\rightarrow  N\\
y  \pcs{ x: =N }&\rightarrow  y &&y\neq x\\
(\lambda y.M) \pcs{ x: =N }&\rightarrow \lambda y.(M\pcs{ x: =N} ) &&y \neq x \text{ and }y \not \in FV(N)\\
(MP)\pcs{x:=N}&\rightarrow M\pcs{x:=N}P\pcs{x:=N}\\
\pi_i (M)\pcs{x:=N}&\rightarrow \pi_i(M\pcs{x:=N}) &&i\in \{1,2\}\\
\langle M, P \rangle  \pcs{ x: =N }&\rightarrow \langle M \pcs{ x: =N } , P \pcs{ x: =N }  \rangle .
	\end{aligned}
\end{equation}
\emph{Lazy reduction} is then obtained  by forbidding substitutions of  terms in a pair, i.e.~by ruling out the last rewriting rule of~\eqref{eqn:suspendedsubstitution}, since it mimics the cut-elimination step in Figure~\ref{fig:commutativeMALL}. As an example, consider the following typable terms of $\mathsf{IMALL}_2$:
\begin{equation}\label{eqn:MandM}
\begin{aligned}
M &\triangleq  (\lambda x. \pi_1\langle x, x \rangle )N\\
M'&\triangleq (\lambda x. \langle x, x \rangle )N.
\end{aligned}
\end{equation}
If we applied standard $\beta$-reduction to $M$ we would obtain $\pi_1\langle N, N \rangle$, thus creating  a (useless) copy of $N$. By contrast, thanks to explicit substitution, lazy reduction enables a better control on the parameter-passing mechanism and reduces $M$ without making new copies of $N$:
\begin{equation*}
(\lambda x. \pi_1\langle x, x \rangle )N\rightarrow \pi_1\langle x, x \rangle  \pcs{ x: =N } \rightarrow x \pcs{ x: =N } \rightarrow N.
\end{equation*}
The above reasoning does not apply to $M'$, as the  substitution $\langle x,x \rangle \pcs{ x: =N }$ would remain unperformed, so that lazy reduction may fail to  produce a substitution-free term, in general. Again, as in the case of cut-elimination, when a term has lazy type in $\mathsf{IMALL}_2$ each  pair  $\langle P,Q\rangle $  occurring in it  eventually turns into a redex $\pi_i\langle P, Q \rangle $,  so that all   suspended substitutions are sooner or later carried out.
Going back to the terms in~\eqref{eqn:MandM}, notice that  $M$  has lazy type in $\mathsf{IMALL}_2$  whenever  $N$ has, while only (non-lazy) types with shape $A \with A$ can be assigned to $M'$.

Explicit substitution and other forms of sharing  play a fundamental role in the study of reasonable cost models of the untyped  $\lambda$-calculus, where $\beta$-reduction and meta-level substitution cause size explosions similar to Proposition~\ref{prop: lemma size/computation explosion} (see e.g.~\cite{DBLP:journals/entcs/Accattoli18, DBLP:conf/rta/AccattoliL12, DBLP:conf/cie/LagoM06}). As an example, in~\cite{DBLP:conf/rta/AccattoliL12}  Accattoli and Dal Lago have shown  that $\lambda$-terms with explicit substitutions can be managed in reasonable (i.e.~polynomial) time, without having to unfold the sharing (that would re-introduce an exponential size blow up). Explicit substitutions have been introduced to cover the gap between the  $\lambda$-calculus and concrete implementations, but they can  also produce pathological behaviors in a typed setting, as shown by Melliès in~\cite{10.5555/645892.671591}.

\subsection{Linear Additives}\label{subsec: linear additives}
To prevent the exponential blow up discussed in the previous section we introduce weaker versions of  the standard additive rules called \emph{linear additive rules}, which give types to $\mathtt{ copy}$ constructs. The linear additive rules are  displayed  in  Figure~\ref{fig: linearadditives},  
\begin{figure}[t]
	\centering
	\begin{framed}
		\begin{mathpar}
				\inferrule*[Right=$\with$R$0$]{\vdash M_1: A_1 \\ \vdash M_2: A_2 }{\vdash \langle M_1, M_2 \rangle: A_1 \with A_2} \and
			\inferrule*[Right=$\with$L$i$]{\Gamma, x_i: A_i \vdash M:C   \ \ \ \  i \in \lbrace1,2 \rbrace}{\Gamma,  y :A_1 \with A_2  \vdash M[\pi_i(y)/x_i] :C } \\
				\inferrule*[Right=$\with$R$1$]{x_1: A \vdash M_1:A_1 \\ x_2: A \vdash M_2:A_2\\ \vdash V:A}{x: A \vdash \mathtt{copy}^{V}  x \mathtt{\ as \ }x_1, x_2 \mathtt{ \ in\ } \langle M_1,M_2 \rangle: A_1 \with A_2}  
		\end{mathpar}
		\caption{Linear additive rules, where $V\in \mathcal{V}$ and $A, A_1, A_2$ are closed and $ \forall$-lazy types.}
		\label{fig: linearadditives}
	\end{framed}
\end{figure}
where $V$ is a \emph{value}, i.e.~a closed and normal term free from instances of  $ \mathtt{copy} $ and projections, and the types $A, A_1, A_2$ are closed and \emph{$ \forall$-lazy}, according to the following definition:
\begin{defn}[$\forall$-lazy types] A type $A \in \Theta_\with$ is \emph{$\forall$-lazy} if it contains no negative occurrences of $\forall$. We say that  $x_1: A_1,\ldots , x_n: A_n \vdash M:B$ is a \textit{$\forall$-lazy judgement} if $A_1 \multimap \ldots \multimap A_n \multimap B$ is a $\forall$-lazy type.  Finally, we  say that $\mathcal{D}\triangleleft \Gamma \vdash M:A$ is a \textit{$\forall$-lazy derivation} if $ \Gamma \vdash M:A$ is a $\forall$-lazy judgement. 
\end{defn}
The reduction rule corresponding to $ \mathsf{copy} $ will be of the following form:
\begin{equation}\label{eqn: reduction rules LAML}
\mathtt{copy}^{V} U \mathtt{\ as \ }x_1, x_2 \mathtt{\ in \ }\langle M_1, M_2\rangle  \rightarrow \langle M_1[U/x_1], M_2[U/x_2]\rangle  \qquad U, V \in \mathcal{V} 
\end{equation}
where $\mathcal{V}$ is the set of all values. Notice that the inference rule $\with$R$0$ in Figure~\ref{fig: linearadditives} is introduced to let the above reduction rule  preserve types, i.e.~to assure Subject reduction. Intuitively, the operator $\mathtt{copy}$ behaves as a suspended substitution, quite like in  lazy reduction discussed in Section~\ref{subsec:blowuplazy}: the crucial  difference  is that lazy reduction  forbids  \emph{any} substitution of a term  $ N $ in a pair  $\langle M_1, M_2 \rangle$, while $\mathtt{copy}$ allows it when $N$ is turned into a value $U$.  

Hence, some limited forms of duplication are permitted by the linear additive rules. Nonetheless, they do not affect the complexity of normalization. On the one hand, indeed, the reduction rule in~\eqref{eqn: reduction rules LAML} can only copy values, i.e.~normal terms, so that no redex is duplicated. This allows  linear time normalization. On the other hand, since  any  \emph{$ \forall $-lazy} type $ A $ is inhabited by finitely many values (see Proposition~\ref{prop: properties of eta logn nf}.\ref{point: properties eta long nf 2}),  by  taking $V$ in Figure~\ref{fig: linearadditives} as the largest one of that type, we enable the size of   $\mathtt{copy}^V$ to bound the size of the new copy of $U$ produced by  applying~\eqref{eqn: reduction rules LAML} (since $U$ has type $A$). This lets  reduction strictly decrease the size of terms, recovering linear space normalization.

As a final remark, let us observe that the linear additive rule $\with$R$1$ introduces a seemingly severe restriction:  context-sharing is permitted for premises having \emph{exactly} one assumption.  This constraint has no real impact on the  algorithmic expressiveness of linear additives, since a general inference rule with premises sharing  multiple  assumptions can be easily derived by exploiting the definitions in~\eqref{eqn: tensor and unit}. Indeed, tensors are able to turn a context with $n$ assumptions $x_1: A_1, \ldots, x_n: A_n$ into one with single assumption $x: A_1\otimes \ldots \otimes A_n$.  Narrowing context-sharing has its benefits: it avoids heavy notation produced by several occurrences of the construct $\mathtt{copy}$, each one expressing the sharing of a single assumption.

\section{A Type Assignment With Linear Additives} 
In this section we introduce \textit{Linearly Additive Multiplicative Type Assignment}, $\mathsf{LAM}$ for short. It  is a subsystem of $\mathsf{IMALL}_2$  endowed with the linear additive rules in Figure~\ref{fig: linearadditives}. Thanks to the controlled use of substitution, these rules can be freely nested in $\mathsf{LAM}$ without incurring exponential normalization. 

\subsection{The System $\mathsf{LAM}$}\label{subsec: LAM}
The following grammar generates \textit{raw terms}:
 \begin{align}
 M, N &:= \, x \ \vert \ \lambda x. M \ \vert \ MM  \ \vert \ \langle M, M \rangle \ \vert \ \pi_i(M) \ \vert \ \mathtt{copy}^{U} M \mathtt{\ as \ }x, y \mathtt{\ in \ } \langle M, M\rangle \label{eqn:rawterms}\\
 U, V&:= \, x \ \vert \ \lambda x. U \ \vert \ UU \ \vert \ \langle U, U \rangle  \label{eqn:values}
 \end{align}
where $\mathtt{copy}^{U} M \mathtt{\ as \ }x, y \mathtt{\ in \ } \langle P, Q\rangle$ binds both $x$ in $P$ and $y$ in $Q$. A \emph{value} is any closed raw term generated by the grammar~\eqref{eqn:values} that is \emph{normal} with respect to the reduction step $(\lambda x. U)V \rightarrow U[V/x]$. The set of all values is denoted $\mathcal{V}$. A raw term is a \emph{term}  if any occurrence of the $ \mathtt{copy}^{U}  $ operator in it is such that $U \in \mathcal{V}$. The set of terms is denoted $\Lambda_{\mathtt{copy}}$. We extend the definition of  $\vert M \vert $  in~\eqref{sizetermsMALL}  to the new clause:
\allowdisplaybreaks
\begin{align*}
\s{\mathtt{copy}^{U} M \mathtt{\ as \ }x, y \mathtt{ \ in \ }\langle P, Q \rangle } &= \vert U \vert + \vert M \vert + \vert  \langle P,    Q \rangle  \vert + 1 
\end{align*}
 The \textit{one-step reduction}  $\rightarrow$ on $\Lambda_{\mathtt{ copy}}$  extends $\rightarrow_{\beta}$ in~\eqref{eqn: reduction rules lambda calculus with pairs}  with the reduction rule in~\eqref{eqn: reduction rules LAML}, 
 and applies in any context. Its reflexive and transitive closure is  $\rightarrow^*$. If $M$ reduces to $N$ in exactly $n$ steps we write $M \rightarrow^n N$.  A term is in (or is a)  \textit{normal form} if no reduction applies to it.

The system $\mathsf{LAM}$ is essentially obtained by replacing the standard additive rules of $\mathsf{IMALL}_2$  with the linear additive rules in Figure~\ref{fig: linearadditives}, and by imposing a mild requirement on the inference rules $\multimap$L and $\forall$R that plays an crucial role to assure a weak form of cut-elimination.
\begin{defn}[The system $\mathsf{LAM}$]  \label{defn: the system LAM}  \label{defn: eta long nf} 
 The system $\mathsf{LAM}$  (\textit{Linearly Additive Multiplicative Type Assignment}) is the type assignment for $\Lambda_{\mathtt{copy}}$ obtained by extending $\mathsf{IMLL}_2$ with  the \textit{linear additive rules} in Figure~\ref{fig: linearadditives}, and by imposing the following \emph{closure conditions}:
\begin{enumerate}[(i)]
\item no instance of  $\multimap$L  has conclusion $\Delta, y: A \multimap B, \Gamma \vdash M:C$ with $FV(B)= \emptyset$ and $FV(A)\neq \emptyset$;
\item    no instance  of $\forall$R has conclusion $\Gamma \vdash M :\forall \alpha. A$ with  $FV(\forall \alpha. A)= \emptyset$ and $FV(\Gamma)\neq  \emptyset$.
\end{enumerate} 
 \end{defn}

As it stands, $\mathsf{LAM}$ is able to linearly bound the number of steps required to normalize typable terms, essentially because we can only duplicate values, which are redex-free. However, the system has no control over the size of terms, which may grow exponentially during normalization. What we need is to bound the size of the new copies of values that are produced by applying the reduction rule~\eqref{eqn: reduction rules LAML}. 
This is the goal of Proposition~\ref{prop: properties of eta logn nf} and Remark~\ref{rem: eta long nf}.

\begin{defn}[$\eta$-expansion] \label{defn: eta long nf} Given $\mathcal{D}\triangleleft \Gamma \vdash M:A$ a $\forall$-lazy and \emph{cut-free} derivation, we say that $\mathcal{D}$ is   \textit{$\eta$-expanded} if all its  axioms  are \textit{atomic}, i.e.~of the form $x:\alpha \vdash x: \alpha$ for some type variable $\alpha$. In this case, $M$ is called a \textit{$\eta$-long normal form} (note that $M$ must be a normal form).
\end{defn}
We now state some basic properties about $\forall$-lazy  and $\eta$-expanded derivations.

\begin{prop}[$\forall$-lazy derivations]\label{prop: lazyness propagation}{\ }
	\begin{enumerate}
		\item \label{point 1: laziness propagation} If a premise of one  among the rules $\multimap$R, $\multimap$L and  $\forall$R   is not $\forall$-lazy, then neither is its conclusion. Moreover, the conclusions of $\with$R$1$, $\with$L$i$ and  $\forall$L  are not $\forall$-lazy, while  the conclusion of $\with$R$0$ is.
		\item  \label{point 2: laziness propagation}  Let     $\mathcal{D}\triangleleft \Gamma \vdash M:A$ be   $\forall$-lazy and cut-free, then:
		\begin{itemize}
		\item  $\mathcal{D}$ contains no instances of $\with$R$1$, $\with$L$i$ and  $\forall$L;
		\item  $M$ is normal and contains  no occurrences of  $\mathtt{copy}$ and $\pi_i$; 
		\item  if $\Gamma=\emptyset$ then $M\in \mathcal{V}$.
		\end{itemize}
	\end{enumerate}
\end{prop}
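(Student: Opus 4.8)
The plan is to prove both points by induction on the structure of the derivation $\mathcal{D}$, using the syntax-directed shape of the rules of $\mathsf{LAM}$. Point~\ref{point 1: laziness propagation} is entirely local: it only inspects the types occurring in a single rule instance, so I would simply unfold the definition of $\forall$-lazy (no negative occurrences of $\forall$) against the shape of each rule. For $\multimap$R, going from $\Gamma, x:A \vdash M:B$ to $\Gamma \vdash \lambda x.M : A \multimap B$, the associated type $A_1 \multimap \cdots \multimap A_n \multimap A \multimap B$ of the premise and the type $A_1 \multimap \cdots \multimap A_n \multimap (A \multimap B)$ of the conclusion have the same positive/negative subtype structure, so laziness transfers. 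For $\multimap$L, the premise $\Delta, x:B \vdash M:C$ sits with the context type whose $B$-component is in positive position; in the conclusion $\Gamma, y:A\multimap B, \Delta \vdash M[yN/x]:C$ the subtype $A\multimap B$ occurs negatively, so $B$ stays positive and $A$ becomes positive as well — a negative $\forall$ in the premise $C$ (or in $\Delta$, or in the $B$ side) therefore remains a negative $\forall$ in the conclusion; similarly a negative $\forall$ arising from $\Gamma \vdash N:A$ (so $A$ carrying a positive $\forall$, i.e. a negative $\forall$ of the conclusion type since $A$ is in negative position) propagates. For $\forall$R, passing from $A\langle\gamma/\alpha\rangle$ to $\forall\alpha.A$ adds only a positive $\forall$, so again laziness is preserved. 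For the second half: $\with$R$1$ and $\with$L$i$ both have a conclusion with a $\with$-formula in the context, i.e. in negative position, and the $\with$-formula of $\with$R$1$ also appears positively in the succedent; but the real point is that their conclusions carry an assumption $y:A_1\with A_2$ — wait, that is not a $\forall$ issue. Here I should be careful: the claim is that these conclusions are \emph{not} $\forall$-lazy. For $\forall$L the conclusion has $x:\forall\alpha.A$ in the context, hence a negative occurrence of $\forall$, so it is not $\forall$-lazy — this is immediate. For $\with$L$i$ and $\with$R$1$ the non-laziness must come from the closure conditions plus the side condition that $A,A_1,A_2$ in $\with$R$1$ are \emph{closed} $\forall$-lazy types: actually the cleanest route is to observe that these rules are only used inside derivations where the closure conditions force the relevant $\forall$'s to survive, but since the statement as phrased asserts it outright, I would instead argue that the hypotheses of these rules already place a $\with$ in a context position that, combined with how $\mathsf{LAM}$'s rules can subsequently only abstract or apply, cannot be removed without reintroducing a negative $\forall$ — I expect the intended short argument is simply that $\with$L$i$ and $\with$R$1$ have a premise of the form $\Gamma, x_i:A_i \vdash \cdots$ where moving $x_i$ to $y:A_1\with A_2$ puts a $\with$ negatively, and since $A_1\with A_2$ is not obtained by a $\forall$L (closure condition forbids hiding it), the judgement type contains $\with$ negatively but the assertion is about $\forall$; I will need to double-check with the authors' definitions that "not $\forall$-lazy" for these is really meant to follow from a hidden convention, and this mismatch is the point I would flag. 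Finally $\with$R$0$ has conclusion $\vdash \langle M_1,M_2\rangle : A_1 \with A_2$ with empty context and $A_1,A_2$ closed $\forall$-lazy, so $A_1\with A_2$ is closed and contains no negative $\forall$ — hence $\forall$-lazy.

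For point~\ref{point 2: laziness propagation}, I would induct on $\mathcal{D}$, which is $\forall$-lazy and cut-free. The cut-free hypothesis removes $cut$ from consideration; by point~\ref{point 1: laziness propagation}, the last rule of $\mathcal{D}$ cannot be $\with$R$1$, $\with$L$i$, or $\forall$L, since their conclusions are never $\forall$-lazy; and by point~\ref{point 1: laziness propagation} again, if the last rule is $\multimap$R, $\multimap$L, or $\forall$R, then both (all) premises are $\forall$-lazy, and they are cut-free, so the induction hypothesis applies to the subderivations. The remaining possibilities for the last rule are $ax$, $\with$R$0$, $\multimap$R, $\multimap$L, $\forall$R. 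In each case: (a) no $\with$R$1$, $\with$L$i$, $\forall$L occurs — immediate from the last-rule analysis plus IH on premises; (b) $M$ is normal and contains no $\mathtt{copy}$ or $\pi_i$ — the constructors $\mathtt{copy}$ and $\pi_i$ are introduced only by $\with$R$1$, $\with$L$i$ respectively (and $\pi_i$ never otherwise), so absence of those rules gives absence of those constructors; normality then follows because the only redexes are $(\lambda x.M)N$, $\pi_i\langle\cdots\rangle$, and the $\mathtt{copy}$-redex, and with no $\pi_i$/$\mathtt{copy}$ present, a $\beta$-redex $(\lambda x.M)N$ would require an application whose left side is a $\lambda$-abstraction subterm — I would argue this cannot arise in a cut-free $\forall$-lazy derivation because the only way to build an application is $\multimap$L, which inserts $yN$ with $y$ a variable, never a $\lambda$, so no $\beta$-redex is ever created (this is the standard "cut-free $\Rightarrow$ normal" fact for this style of $\multimap$L); (c) if $\Gamma = \emptyset$ then $M \in \mathcal{V}$ — by inspection of the terminal cases with empty context: $ax$ is impossible with empty context, $\with$R$0$ gives $\langle M_1,M_2\rangle$ with $M_1,M_2$ closed normal $\mathtt{copy}$-free terms hence values by IH, $\forall$R preserves the term and the emptiness of the context, $\multimap$R gives $\lambda x.M$ with the premise having context $\{x:A\}$ — not empty — so I need the subterm $M$ to itself be generated by the value grammar; here I would strengthen the IH to say that whenever the context is $x_1:A_1,\dots,x_n:A_n$ (possibly nonempty) the term $M$ lies in the value grammar~\eqref{eqn:values}, which is closed under the operations that $\multimap$R, $\multimap$L, $\with$R$0$, $\forall$R perform (abstraction, application with a variable head, pairing), and then closedness for $\Gamma=\emptyset$ gives $M\in\mathcal{V}$.

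The main obstacle I anticipate is the clause in point~\ref{point 1: laziness propagation} asserting that the conclusions of $\with$R$1$ and $\with$L$i$ are \emph{not} $\forall$-lazy: unlike $\forall$L this does not follow from a visible negative $\forall$, and it must rely on the closure conditions of Definition~\ref{defn: the system LAM} (which prevent a $\forall$L from later hiding the $\with$) together with a careful reading of what "$\forall$-lazy judgement" means when $\with$ appears negatively. I would resolve this by making precise the following sublemma: in any $\mathsf{LAM}$ derivation satisfying the closure conditions, a judgement whose associated type has a negative occurrence of $\with$ must also have a negative occurrence of $\forall$ unless that $\with$ is a closed subtype — and tracking which case the $\with$ of $\with$R$1$/$\with$L$i$ falls into. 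This bookkeeping, rather than any deep argument, is where the proof needs the most care; everything else is a routine structural induction driven by the shapes of the rules.
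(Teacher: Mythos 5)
Your treatment of point~\ref{point 2: laziness propagation}, and of the propagation claims for $\multimap$R, $\multimap$L, $\forall$R, $\with$R$0$ and $\forall$L in point~\ref{point 1: laziness propagation}, is essentially the intended routine induction (though your polarity bookkeeping for $\multimap$L is garbled: in the associated type of $\Delta, x:B \vdash M:C$ the assumption $B$ sits in \emph{negative} position, not positive --- harmless here only because $B$ keeps the same polarity in the conclusion). The genuine gap is exactly where you flag it: you never actually prove that the conclusions of $\with$R$1$ and $\with$L$i$ fail to be $\forall$-lazy, and the ``sublemma'' you sketch about negative occurrences of $\with$ forcing negative occurrences of $\forall$ is not the right mechanism and would not go through as stated (a negative $\with$ by itself implies nothing about $\forall$).

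The missing observation is elementary and does not involve the closure conditions of Definition~\ref{defn: the system LAM} at all. Every type in $\Theta_\with$ has at least one type-variable leaf, so a \emph{closed} type must contain at least one occurrence of $\forall$ to bind it; if that closed type is moreover $\forall$-lazy, all of its $\forall$'s are positive, so it contains a positive occurrence of $\forall$. The side condition on the linear additive rules (Figure~\ref{fig: linearadditives}) requires $A$, $A_1$, $A_2$ to be closed $\forall$-lazy types. Hence the conclusion of $\with$R$1$ has the closed type $A$ in its context, and the conclusion of $\with$L$i$ has the closed type $A_1 \with A_2$ in its context; in the associated $\multimap$-curried type of the judgement these context types occur negatively, turning their positive $\forall$ into a negative occurrence of $\forall$. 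This is precisely the argument the paper itself deploys later, in the proof of Lemma~\ref{lem: intermediate lemma} (``it suffices to find a closed type in the context of these judgments, since closed types must contain at least a $\forall$ in positive position''). With that one sentence inserted, your induction closes; without it, the second half of point~\ref{point 1: laziness propagation} --- and hence the case analysis driving point~\ref{point 2: laziness propagation} --- is unsupported.
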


\begin{prop}[$\eta$-expanded derivations]\label{prop: properties of eta logn nf} Let $\mathcal{D}\triangleleft \Gamma \vdash M:A$ be $\forall$-lazy and $\eta$-expanded. Then:
	\begin{enumerate}
		\item  \label{point: properties eta long nf 1}   $\vert M \vert \leq \vert \Gamma \vert  + \vert A \vert \leq 2 \cdot \vert \mathcal{D}\vert$;
		\item \label{point: properties eta long nf 2}  if  $\mathcal{D}'\triangleleft \Gamma \vdash N:A$ is  cut-free  for some $N\in \Lambda_{\mathtt{ copy}}$, then both $\vert N \vert \leq \s{M}$ and $\s{\mathcal{D}'}\leq \s{\mathcal{D}}$.
	\end{enumerate}
\end{prop}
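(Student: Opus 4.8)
The plan is to prove both inequalities by induction. Since $\mathcal D$ is $\eta$-expanded it is in particular cut-free and $\forall$-lazy, so by Proposition~\ref{prop: lazyness propagation}.\ref{point 2: laziness propagation} it is built only from the rules $ax$, $\multimap$R, $\multimap$L, $\with$R$0$ and $\forall$R, all of its axioms are atomic, and $M$ is a $\beta$-normal term containing no $\mathtt{copy}$ and no $\pi_i$. Moreover, the only rule among these that is not a rule of $\mathsf{IMLL}_2$ is $\with$R$0$, whose premises (and conclusion) have empty context; hence every occurrence of a free variable in $M$ (and in every subterm produced along $\mathcal D$) is \emph{linear}, in the sense that in every sub-derivation ending in $\Delta, x:B \vdash P:C$ the variable $x$ occurs exactly once in $P$. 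This linearity is what makes the term-size bookkeeping exact.

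For the first statement, I would induct on $\mathcal D$, proving $\s{M}\le\s{\Gamma}+\s{A}$ and $\s{\Gamma}+\s{A}\le 2\cdot\s{\mathcal D}$ together. In the base case $ax$ on a type variable both sides equal $2$ while $\s{\mathcal D}=1$. For $\multimap$R and $\forall$R the term grows by at most $1$, the quantity $\s{\Gamma}+\s{A}$ grows by exactly $1$, and $\s{\mathcal D}$ grows by $1$. For $\with$R$0$ the context is empty, so $\s{\langle M_1,M_2\rangle}=\s{M_1}+\s{M_2}+1$ is bounded by adding the two hypotheses and using $\s{A_1\with A_2}=\s{A_1}+\s{A_2}+1$. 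The case $\multimap$L is where linearity is used: as $x$ occurs once in $M$ we have $\s{M[yN/x]}=\s{M}+\s{N}+1$, and summing the two induction hypotheses for the premises together with $\s{A\multimap B}=\s{A}+\s{B}+1$ for the new assumption gives the claim; the factor-$2$ inequality is analogous, each binary rule merging two hypotheses and adding $1$ to $\s{\mathcal D}$.

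For the second statement the intuition is that an $\eta$-long normal form is the \emph{pointwise largest} cut-free inhabitant of its judgement, because $\eta$-expansion only enlarges terms and derivations. Concretely, I would first record (this is part of what Remark~\ref{rem: eta long nf} is for) the shape of an $\eta$-long form of a $\forall$-lazy judgement $\Gamma\vdash\cdot:A$: an ``unfolding phase'' of $\multimap$R/$\forall$R stripping the positive $\multimap$'s and $\forall$'s of $A$ down to a type variable (or a single $\with$R$0$ when $A$ is a closed $\with$), followed by a $\multimap$L on some assumption $y$ whose type is thereby fully decomposed to a variable, the arguments fed to $y$ being themselves $\eta$-long forms built over a partition of the remaining context. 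Then I would argue by induction on the cut-free derivation $\mathcal D'$ --- which is $\forall$-lazy since $\forall$-laziness depends only on $\Gamma$ and $A$, so Proposition~\ref{prop: lazyness propagation}.\ref{point 2: laziness propagation} applies to $\mathcal D'$ as well, except that its axioms need not be atomic --- doing a case analysis on the last rule of $\mathcal D'$. If it is a (non-atomic) $ax$, then $N=x$ and $\s N=1\le\s M$, $\s{\mathcal D'}=1\le\s{\mathcal D}$ trivially. If it is $\multimap$R, $\forall$R or $\with$R$0$, then the premise judgement(s) have the corresponding sub-derivation(s) of $\mathcal D$ as $\eta$-long forms and the induction hypothesis, followed by adding the common ``$+1$'', gives the inequalities. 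If it is $\multimap$L on an assumption $y$, one uses that $\mathcal D$ feeds $y$ at least as many arguments (its type is decomposed all the way to a variable) and that each argument supplied by $\mathcal D'$, being a cut-free $\forall$-lazy derivation, has term- and derivation-size bounded by the $\eta$-long form of its own judgement (induction hypothesis); summing over the arguments and over the residual sub-derivation dominates $\s N$ and $\s{\mathcal D'}$.

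I expect the $\multimap$L case to be the real obstacle, for the familiar reason that context splitting is non-deterministic: the sub-derivations feeding $y$ in $\mathcal D'$ need not carry the same contexts as the ones feeding $y$ in $\mathcal D$. The clean fix is to prove, as an auxiliary lemma, that the size $\s M$ of an $\eta$-long normal form and the size $\s{\mathcal D}$ of its derivation depend only on the \emph{multiset} of types occurring in $\Gamma$ together with $A$ (indeed they can be given by explicit recursive formulas on those types), so that any partition of the residual context yields $\eta$-long pieces of the same total size. With that lemma in hand the induction closes uniformly, and it simultaneously re-proves that $\eta$-long normal forms are $\le$-maximal among cut-free inhabitants.
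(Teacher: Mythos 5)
Your proposal is correct; point~1 is exactly the paper's (unspelled-out) ``straightforward induction on $\mathcal{D}$'', but for point~2 you take a genuinely different route. The paper proves a single exact counting identity valid for every cut-free $\forall$-lazy derivation $\mathcal{D}'\triangleleft\Gamma\vdash N:A$, namely $\s{N}+\forall(\Gamma,A)=\s{\mathcal{D}'}+\forall_{ax}(\mathcal{D}')$, where $\forall(\Gamma,A)$ counts the occurrences of $\forall$ in the end judgement and $\forall_{ax}(\mathcal{D}')$ those in the axioms; instantiating it at $\mathcal{D}'$ and at $\mathcal{D}$ and using $\s{\mathcal{D}'}+\forall_{ax}(\mathcal{D}')\le\s{\mathcal{D}}+\forall_{ax}(\mathcal{D})$ (which holds because $\mathcal{D}$ is $\eta$-expanded, so its axioms are atomic and it is the largest cut-free derivation of the judgement) yields $\s{N}\le\s{M}$ and $\s{\mathcal{D}'}\le\s{\mathcal{D}}$ simultaneously. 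You instead run a structural induction on $\mathcal{D}'$ and repair its two mismatches --- the last rules of $\mathcal{D}$ and $\mathcal{D}'$ need not agree, and $\multimap$L splits the context non-deterministically --- with an invariance lemma stating that the size of an $\eta$-long normal form and of its $\eta$-expanded derivation are functions of the multiset of types in $\Gamma,A$ alone. That lemma is true and provable exactly as you indicate (explicit recursive formulas on types; at bottom it is the same linearity-driven atom/connective bookkeeping that makes the paper's identity an equality rather than an inequality), so your induction closes. The trade-off is that the paper's identity is more economical, one induction establishing an equality from which both bounds drop out, whereas your route needs the extra lemma plus the construction, via Remark~\ref{rem: eta long nf}, of an $\eta$-expanded derivation of each premise judgement of $\mathcal{D}'$; in exchange it makes explicit \emph{why} $\eta$-long forms are pointwise maximal among cut-free inhabitants. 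One small caution: in your $\multimap$R, $\forall$R and $\with$R$0$ cases you appeal to ``the corresponding sub-derivation of $\mathcal{D}$'', which need not exist since $\mathcal{D}$'s last rule may differ from that of $\mathcal{D}'$; the invariance lemma is needed uniformly in every case, not only to handle the context splitting of $\multimap$L.
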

\begin{proof}
	By Definition~\ref{defn: eta long nf}  and Proposition~\ref{prop: lazyness propagation}.\ref{point 2: laziness propagation},   $\mathcal{D}$ is cut-free and without  instances of $\with$R$1$, $\with$L$i$,  $\forall$L,  with $M$  normal and free from  $\mathtt{copy}$ constructs. Point~\ref{point: properties eta long nf 1} is a straightforward  induction on $\mathcal{D}$.
	Concerning  point~\ref{point: properties eta long nf 2}, we have $\vert \mathcal{D}'\vert  \leq \s{\mathcal{D}}$, because $\mathcal{D}$ is $\eta$-expanded. Now, notice that $\s{N}+ \forall (\Gamma, A)= \s{\mathcal{D}'}+ \forall_{ax}(\mathcal{D}')$, where $\forall(\Gamma, A)$  denotes  the number of occurrences of $\forall$   in $\Gamma, A$, and $\forall_{ax}(\mathcal{D}')$ denotes  the number of occurrences of $\forall$ in the conclusions of the instances of  $ax$  in $\mathcal{D}'$.  We have, $\s{N}+ \forall (\Gamma, A)= \s{\mathcal{D}'}+ \forall_{ax}(\mathcal{D}') \leq \s{\mathcal{D}}+ \forall_{ax}(\mathcal{D}) = \s{M}+ \forall (\Gamma, A)$,  which implies $\vert N \vert \leq \vert M \vert$.
\end{proof}
\begin{rem}\label{rem: eta long nf} Given $\mathcal{D}\triangleleft \Gamma \vdash M:A$ a $\forall$-lazy and cut-free derivation, we can always construct a $\eta$-expanded derivation $\mathcal{D}^* \triangleleft \Gamma \vdash M^*:A$. Moreover,  Proposition~\ref{prop: properties of eta logn nf}.\ref{point: properties eta long nf 2} implies  both $\vert N \vert \leq \vert M^* \vert$ and $\vert \mathcal{D}' \vert\leq \vert \mathcal{D}^* \vert$, for any cut-free derivation $\mathcal{D}' \triangleleft \Gamma \vdash N:A$.  Henceforth, w.l.o.g.~we shall assume that the derivation of the premise $ \vdash V:A$ of $\with$R$1$ is $\eta$-expanded. 
\end{rem}

Remark~\ref{rem: eta long nf}  prevents the increase of size   during  normalization, since  the size of $V$ in $\with$R$1$ is always  bigger than then size of any value $U$   of type $A$, and so the construct $\mathtt{ copy}^V$ bounds the size of the new copy of $U$  produced  by applying the reduction rule in~\eqref{eqn: reduction rules LAML}. 

\subsection{Linear Additives Prevent the Exponential Blowup}\label{subsection: linear additives prevent blow up}
We can observe the benefits of moving from the  standard additive rules to the linear additive rules as soon as we adapt the constructions in   Definition~\ref{defn: nesting pairs} to $\mathsf{LAM}$.
\begin{defn}[Nesting $\mathsf{LAM}$ terms]\label{defn: nesting pairs LAM} Let  $V \in \mathcal{V}$ and $k \in \mathbb{N}$. For all $n \in \mathbb{N}$ we define $\mathtt{ladd}^{x, V_{[k]}}_{n}$ as follows:
	\begin{equation*}
	\mathtt{ladd}^{x, V_{[k]}}_{n}\triangleq \begin{cases}x &n=0\\
	(\lambda y. \mathtt{ladd}^{y, V_{[k+1]}}_{n-1}) (\mathtt{copy}^{V_{[k]}} x\  \mathtt{as}\ x_1, x_2 \mathtt{\ in \ }\langle x_1,x_2\rangle )&n>0
	\end{cases} 
	\end{equation*}
	where $V_{[k]}$ is as in~\eqref{eqn: type An and term Mn}.
\end{defn} 

The following proposition states that nesting instances of $\with$R$1$ in a derivation of $\mathsf{LAM}$ produces no exponential blow up.
\begin{prop}[Linearity of $\mathsf{LAM}$]\label{prop: linearity of LAM} Let $A$ be a closed and $\forall$-lazy type, and let $U, V\in \mathcal{V}$ be   inhabitants of $A$, with $V$ a $\eta$-long normal form. For all $n \in \mathbb{N}$:
	\begin{enumerate}
		\item \label{enum: size explosion 1} $\vdash \lambda x.\mathtt{ladd}^{x, V}_{n} :A\multimap A_{[n]}$ is derivable in $\mathsf{LAM}$;
		\item \label{enum: size explosion 4} $(\lambda x. \mathtt{ladd}^{x, V}_n)U\rightarrow^{2n+1}  U_{[n]}$, where  $\vert (\lambda x. \mathtt{ladd}^{x, V}_n)U\vert > \vert U_{[n]} \vert>2n+1$.
	\end{enumerate}
\end{prop}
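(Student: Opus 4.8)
The plan is to prove both points by induction on $n$, essentially mirroring the structure of the proof of Proposition~\ref{prop: lemma size/computation explosion}, but tracking the extra $\mathtt{copy}$ constructs and checking the side conditions of the linear additive rules. First, a preliminary observation: since $A$ is closed and $\forall$-lazy, so is each $A_{[k]}$ (a $\with$ of closed $\forall$-lazy types has no negative $\forall$), and the given $V$ is an $\eta$-long normal form inhabiting $A$; this is exactly what is needed to legitimately form $\mathtt{copy}^{V_{[k]}}$ with the types $A_{[k]}, A_{[k]}, A_{[k+1]}$ — note $V_{[k]}$ inhabits $A_{[k]}$ and remains an $\eta$-long normal form by construction.

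For point~\ref{enum: size explosion 1}, I would prove the more general claim that $\lambda x. \mathtt{ladd}^{x, V_{[k]}}_n$ has type $A_{[k]} \multimap A_{[k+n]}$ for all $k \in \mathbb{N}$, by induction on $n$. The base case $n=0$ gives $\lambda x. x : A_{[k]} \multimap A_{[k]}$ via $ax$ and $\multimap$R. For $n > 0$: from $x: A_{[k]} \vdash x: A_{[k]}$ twice, the rule $\with$R$1$ with premise $\vdash V_{[k]}: A_{[k]}$ yields $x: A_{[k]} \vdash \mathtt{copy}^{V_{[k]}} x \mathtt{\ as\ } x_1, x_2 \mathtt{\ in\ } \langle x_1, x_2\rangle : A_{[k+1]}$; by the induction hypothesis $\vdash \lambda y. \mathtt{ladd}^{y, V_{[k+1]}}_{n-1}: A_{[k+1]} \multimap A_{[k+n]}$, and composing via $\multimap$L (or $cut$) and then $\multimap$R gives $\vdash \lambda x. \mathtt{ladd}^{x, V_{[k]}}_n : A_{[k]} \multimap A_{[k+n]}$. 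I must check the closure conditions of Definition~\ref{defn: the system LAM}: the $\multimap$L instance here has both a closed argument-type component and a closed result, and the $\forall$R condition is vacuous since no $\forall$R is used; I would spell this out. Setting $k=0$ gives the statement.

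For point~\ref{enum: size explosion 4}, the key lemma is that $\mathtt{ladd}^{x, V_{[k]}}_n [U_{[k]}/x] \rightarrow^{2n} U_{[k+n]}$ — or more directly, that starting from $(\lambda x. \mathtt{ladd}^{x,V}_n) U$ one fires the outer $\beta$-redex, then alternately fires one $\mathtt{copy}$ step (via~\eqref{eqn: reduction rules LAML}, legitimate since $U_{[k]}, V_{[k]} \in \mathcal{V}$) and one $\beta$-redex, $n$ times each, landing on $U_{[n]}$; this is $1 + 2n$ steps. I would prove by induction on $n$ that $\mathtt{ladd}^{x, V}_n[U/x] \rightarrow^{2n} U_{[n]}$: the $\mathtt{copy}$ step turns $\mathtt{copy}^{V_{[k]}} U_{[k]} \mathtt{\ as\ } x_1, x_2 \mathtt{\ in\ } \langle x_1, x_2\rangle$ into $\langle U_{[k]}, U_{[k]}\rangle = U_{[k+1]}$, then the $\beta$-step feeds this into $\lambda y. \mathtt{ladd}^{y, V_{[k+1]}}_{n-1}$, and the induction hypothesis finishes. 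For the size inequalities, I would note $|U_{[n]}| = 2^n |U| + \sum_{i=0}^{n-1} 2^i$ as in Proposition~\ref{prop: lemma size/computation explosion}, so $|U_{[n]}| > 2^n - 1 \geq 2n+1$ for $n \geq 1$ (and the $n=0$ case is trivial since $|U| \geq 1$), while $|(\lambda x.\mathtt{ladd}^{x,V}_n)U| > |\mathtt{ladd}^{x,V}_n[U/x]|$ because each of the $n$ nested $\mathtt{copy}^{V_{[k]}}$ subterms contributes at least $|V_{[k]}| \geq |U_{[k]}|$ to the size (this is exactly the content of Remark~\ref{rem: eta long nf}: $V$ being an $\eta$-long normal form of type $A$ dominates the size of any value $U$ of type $A$), and the $\lambda x.$ plus the $n$ applications add further; so the source strictly dominates the normal form.

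The main obstacle I expect is the size-domination argument for $|(\lambda x.\mathtt{ladd}^{x,V}_n)U| > |U_{[n]}|$: one must be careful that at each nesting level $k$ the frozen value $V_{[k]}$ really is at least as large as the copy $U_{[k]}$ that gets spawned there, and that the accumulated $V_{[0]}, V_{[1]}, \ldots, V_{[n-1]}$ inside $\mathtt{ladd}$ already account for the full exponential $|U_{[n]}|$. This is where Proposition~\ref{prop: properties of eta logn nf}.\ref{point: properties eta long nf 2} (equivalently Remark~\ref{rem: eta long nf}) does the real work, giving $|U_{[k]}| \leq |V_{[k]}|$; summing $\sum_{k=0}^{n-1} |V_{[k]}| \geq \sum_{k=0}^{n-1} |U_{[k]}|$ together with $U_{[n]} = \langle U_{[n-1]}, U_{[n-1]}\rangle$ and a short arithmetic check closes the gap. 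Everything else is routine bookkeeping of reduction lengths and a direct induction on $n$.
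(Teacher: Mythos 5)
Your proposal follows essentially the same route as the paper's proof: the same generalization to $\lambda x.\mathtt{ladd}^{x,V_{[k]}}_n : A_{[k]}\multimap A_{[k+n]}$ by induction on $n$, the same key reduction lemma $\mathtt{ladd}^{x,V_{[k]}}_n[U_{[k]}/x]\rightarrow^{2n}U_{[k+n]}$ alternating one $\mathtt{copy}$ step with one $\beta$ step, and the same appeal to Proposition~\ref{prop: properties of eta logn nf}.\ref{point: properties eta long nf 2} (via Remark~\ref{rem: eta long nf}) to get $\vert U_{[k]}\vert\leq\vert V_{[k]}\vert$ for the size domination. The only blemish is in the easy size bound: the chain $\vert U_{[n]}\vert>2^n-1\geq 2n+1$ fails for $n\in\{1,2\}$, and the $n=0$ case needs $\vert U\vert>1$ rather than $\vert U\vert\geq 1$; both are repaired immediately by observing that a closed value has size at least $2$, so $\vert U_{[n]}\vert\geq 3\cdot 2^n-1>2n+1$.
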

\begin{proof}
	Concerning point~\ref{enum: size explosion 1}, we prove by induction on $n $ that $\lambda x.\mathtt{ladd}^{x, V_{[k]}}_{n}$ has type  $A_{[k]} \multimap A_{[k + n]}$, for all $k \in \mathbb{N}$. If $n=0$ then  $\mathtt{ladd}^{x, V_{[k]}}_{n}=x$, so that $\lambda x.x$ has type $A_{[k]}\multimap A_{[k]}$. Let us consider the case  $n>0$.  By induction hypothesis,   $\lambda y.\mathtt{ladd}^{y, V_{[k+1]}}_{n-1}$ has type $A_{[k+1]}\multimap A_{[k+n]}$. If $x,x_1,x_2$ have type $A_{[k]}$ then  $\mathtt{copy}^{V_{[k]}} x\  \mathtt{as}\ x_1, x_2 \mathtt{\ in \ }\langle x_1,x_2\rangle $ has type $A_{[k+1]}$ and  $(\lambda y.\mathtt{ladd}^{y, V_{[k+1]}}_{n-1})(\mathtt{copy}^{V_{[k]}} x\  \mathtt{as}\ x_1, x_2 \mathtt{\ in \ }\langle x_1,x_2\rangle )$ has type $A_{[k+ n]}$. Therefore,   $\lambda x.\mathtt{ladd}^{x, V_{[k]}}_{n}$ has type $A_{[k]} \multimap A_{[k+n]}$.  Concerning point~\ref{enum: size explosion 2}, we show  by induction on $n $ that $(\mathtt{ladd}^{x, V_{[k]}}_{n})[U_{[k]}/x]\rightarrow^{2n} U_{[n+k]}$, for all $k \in \mathbb{N}$. The base case is trivial. If $n>0$ then, since $U_{ [k]}\in \mathcal{V}$, we have: 
	\begin{equation*}
	\begin{aligned}
	(\mathtt{ladd}^{x, V_{[k]}}_{n})[U_{[k]}/x]&= (\lambda y. \mathtt{ladd}^{y, V_{[k+1]}}_{n-1})(\mathtt{copy}^{V_{[k]}} U_{ [k]} \mathtt{as}\ x_1, x_2 \mathtt{\ in \ }\langle x_1,x_2\rangle ) \\
	&\rightarrow  (\lambda y. \mathtt{ladd}^{y, V_{[k+1]}}_{n-1})\langle U_{[k]},U_{[k]}\rangle \\
	&\rightarrow \mathtt{ladd}^{y, V_{[k+1]}}_{n-1}[\langle U_{[k]},U_{[k]}\rangle /y]=\mathtt{ladd}^{y, V_{[k+1]}}_{n-1}[U_{[k+1]}/y]
	\end{aligned}
	\end{equation*}
	which reduces in $2(n-1)$ steps to $U_{[k+n]}$ by induction hypothesis.  Finally,  we notice   that  $\vert   \mathtt{ladd}^{x, V_{[k]}}_n \vert = 7n+   \sum^{n-1}_{i=0} \vert V_{[k+i]} \vert + 1$ and  $\vert U_{[n]} \vert = 2^{n \cdot \vert U \vert} + \sum^{n-1}_{i=0}2^{i}$, for all $n \in \mathbb{N}$. Since  $V$ is  a $\eta$-long normal form of $A$, we have  $\vert V \vert \geq \vert U \vert$ by  Proposition~\ref{prop: properties of eta logn nf}.\ref{point: properties eta long nf 2}, and the conclusion follows.
\end{proof}


\section{Computational and Proof-Theoretical Properties of $\mathsf{LAM}$}
In Section~\ref{subsection: linear additives prevent blow up} we have shown with the help of  a key example how linear additives  prevent exponential explosions in normalization. We now investigate further this point  by proving that  $\mathsf{LAM}$ enjoys both linear strong normalization (Theorem~\ref{thm:lazy normalization for LAML}) and a mildly weakened cubic cut-elimination property (Theorem~\ref{thm: cut elimination for LAM}). 

\subsection{Subject Reduction and  Linear Strong  Normalization}
Linear strong normalization for $\mathsf{LAM}$ is achieved by proving  that reduction preserves types and  shrinks the size of typable terms, i.e.~a slightly stronger version of Subject reduction. First, we need some straightforward preliminary lemmas:
\begin{lem}[Linearity] \label{lem: linearity}If $\mathcal{D}\triangleleft \Gamma, x:A \vdash M:C$ then $x$ occurs exactly once in $M$.
\end{lem}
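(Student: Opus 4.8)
The plan is to prove Lemma~\ref{lem: linearity} by straightforward induction on the structure of the derivation $\mathcal{D}\triangleleft \Gamma, x:A \vdash M:C$, since $\mathsf{LAM}$ is built on top of $\mathsf{IMLL}_2$, whose typable terms are exactly the linear $\lambda$-terms, and the only new rules ($\with$R$0$, $\with$R$1$, $\with$L$i$) are easily seen to respect the linearity bookkeeping. First I would fix the invariant: for every derivation $\mathcal{D}\triangleleft \Gamma\vdash M:C$, every variable $y$ declared in $\Gamma$ occurs exactly once (free) in $M$. The statement of the lemma is the instance of this invariant for the distinguished assumption $x:A$.

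For the base case, $ax$ gives $x:A\vdash x:A$, and $x$ occurs once in $x$. For the inductive step I would go rule by rule. For $\multimap$R, $\forall$R, $\forall$L the context assumption $x:A$ (when present in the conclusion) is inherited unchanged from the premise, the term changes only by adding a $\lambda$-abstraction over a \emph{different} variable (for $\multimap$R) or not at all (for $\forall$R, $\forall$L), so the occurrence count of $x$ is preserved by the induction hypothesis. For $cut$ and $\multimap$L the key point is the linear constraint together with the fact that substitution $M[N/x]$ (resp.\ $M[yN/x]$) replaces the single occurrence of the cut/active variable $x$ (which occurs exactly once in $M$ by IH on the right premise) with one copy of $N$; hence any variable of $\Gamma$ keeps its unique occurrence inside that unique copy of $N$, and any variable of $\Delta$ distinct from $x$ keeps its unique occurrence in $M$. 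For $\with$L$i$, the conclusion is $\Gamma, y:A_1\with A_2 \vdash M[\pi_i(y)/x_i]:C$ from premise $\Gamma, x_i:A_i\vdash M:C$: by IH $x_i$ occurs once in $M$, so $y$ occurs once in $M[\pi_i(y)/x_i]$, and variables of $\Gamma$ are untouched. For $\with$R$0$, both premises and the conclusion have empty context, so the invariant is vacuous. For $\with$R$1$, the conclusion is $x:A\vdash \mathtt{copy}^{V}x\mathtt{\ as\ }x_1,x_2\mathtt{\ in\ }\langle M_1,M_2\rangle:A_1\with A_2$ from premises $x_1:A\vdash M_1:A_1$, $x_2:A\vdash M_2:A_2$ (and $\vdash V:A$): by IH $x_i$ occurs once in $M_i$, the binders $x_1,x_2$ are bound inside the $\mathtt{copy}$ construct, and the single free occurrence of $x$ in the subject is the head occurrence $\mathtt{copy}^V x\,\dots$; note $V$ is closed so it contributes no free occurrences. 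Thus $x$ occurs exactly once. I should also remark that when the distinguished $x:A$ appears in one of the premises of $\with$R$1$ via the $\vdash V:A$ slot, it cannot — $V$ is closed and that premise has empty context — so no case is missed.

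I do not expect any genuine obstacle: the lemma is the standard linearity property, and all the work is confined to checking that the three additive rules and the substitution-carrying rules $cut$, $\multimap$L preserve the single-occurrence invariant. The only point requiring a little care is the treatment of the meta-level substitutions appearing in the subjects of $cut$, $\multimap$L, $\with$L$i$ and of the reduction-mimicking $\mathtt{copy}$ binder: there one must invoke the induction hypothesis precisely on the premise that declares the variable being substituted for, to know it occurs exactly once, so that the substitution neither duplicates nor deletes occurrences of the other context variables. Since the closure conditions of Definition~\ref{defn: the system LAM} only \emph{restrict} which instances of $\multimap$L and $\forall$R are allowed, they play no role here and can be ignored.
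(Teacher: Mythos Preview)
Your proposal is correct and is precisely the straightforward induction on $\mathcal{D}$ that the paper has in mind; the paper states the lemma without proof, treating it as routine. One small terminological point: the ``linear constraint'' $FV(\Gamma)\cap FV(\Delta)=\emptyset$ in the paper refers to free \emph{type} variables, so for $cut$ and $\multimap$L you should instead appeal to the usual Barendregt-style convention that term variables in $\Gamma$ and $\Delta$ are disjoint---but your argument goes through unchanged with that adjustment.
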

\begin{lem}[Generation] \label{lem: generation}  {\ }  
	\begin{enumerate}
		\item \label{lem:generation 1} 
		If $\mathcal{D} \triangleleft \Gamma \vdash \lambda x. M:A$ then 
		$A=  \forall \alpha_1 \ldots \forall \alpha_n . (B \multimap C)$  and, by permuting some rules of $\mathcal{D}$, we obtain a derivation $\mathcal{D}'$ of $\Gamma, x:B\vdash M: C$, followed by an instance of $\multimap$R and a sequence of $\forall$R.
		\item If $\mathcal{D} \triangleleft \Gamma, x: \forall \alpha. A \vdash M[xN/y]:B$ then $\mathcal{D}$ contains an instance of $\forall$L that introduces $  x: \forall \alpha. A$.
		\item If $\mathcal{D} \triangleleft \Gamma, x:  A \multimap B\vdash M[xN/y]:C$ then $\mathcal{D}$ contains an instance of $\multimap$L that introduces $x: A \multimap B$.
		\item \label{lem: generation withR0} If $\mathcal{D}\triangleleft \Gamma   \vdash \langle M_1, M_2\rangle :  A$ then $\Gamma=\emptyset$,  $A= B_1 \with B_2$, and the last rule of $\mathcal{D}$ is $\with$R$0$.
		\item \label{enum: generation for LAML pi} If $\mathcal{D}\triangleleft \Gamma, x:A_1\with A_2 \vdash M[\pi_{i}(x)/x_i ]:A$ then $\mathcal{D}$ contains an instance of  $\with$L$i$ that introduces $x:A_1\with A_2$.
		\item \label{lem: generation withR1} If $\mathcal{D}\triangleleft \Gamma, x:B \vdash M[\mathtt{copy}^V  x \mathtt{\ as \ }x_1,x_2 \mathtt{\ in \ } \langle N_1, N_2\rangle /y]:  A$, then   $\mathcal{D}$ contains an instance of $\with$R$1$ that introduces $x:B$.
	\end{enumerate}
\end{lem}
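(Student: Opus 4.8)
The plan is to prove all six items simultaneously (or in the natural dependency order) by induction on the structure of the derivation $\mathcal{D}$, using the fact that the last rule of $\mathcal{D}$ is constrained by the shape of the typed term in the conclusion. For each item, I would do a case analysis on the last inference rule applied in $\mathcal{D}$: the key observation is that the reduction-tracking syntax of $\mathsf{LAM}$ (terms of the form $M[xN/y]$, $M[\pi_i(x)/x_i]$, $\lambda x.M$, $\langle M_1,M_2\rangle$, $\mathtt{copy}^V x \mathtt{\ as\ } x_1,x_2 \mathtt{\ in\ }\langle N_1,N_2\rangle$) is rigid enough that only a few rules can produce a given syntactic shape. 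First I would handle the base case: an $ax$ rule produces only $x:A\vdash x:A$, which is compatible with none of the six premises except vacuously, so these cases are immediate.

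For item~\ref{lem:generation 1}, the term $\lambda x.M$ in the conclusion can only be introduced by $\multimap$R (giving $A = B\multimap C$ directly) or by $\forall$R (which does not change the term, so $A = \forall\alpha.A'$ and we recurse on the premise $\Gamma\vdash\lambda x.M:A'\langle\gamma/\alpha\rangle$), or the $\lambda$ could have been created inside a substitution performed by $cut$, $\multimap$L, $\with$L$i$ or $\with$R$1$ — but here I would invoke Lemma~\ref{lem: linearity} to argue that the substituted variable occurs exactly once, so a $\lambda$ appearing in the conclusion term either already appeared in a premise (recurse) or is the principal $\lambda$ introduced by $\multimap$R. Collecting the $\forall$R's encountered along the way, after permuting them below the $\multimap$R, yields the stated normal form $\forall\alpha_1\ldots\forall\alpha_n.(B\multimap C)$ with $\mathcal{D}'$ of $\Gamma,x:B\vdash M:C$. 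The permutation argument is routine since $\forall$R commutes down past $\multimap$R (its side condition $\gamma\notin FV(\Gamma)$ is preserved because $B$'s free variables are among $\Gamma$'s or freshly those of the new binding — I must check the closure condition (ii) of Definition~\ref{defn: the system LAM} doesn't obstruct this, which it does not since we only move $\forall$R, not introduce one).

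Items 2, 3, 5, 6 all have the same flavor: a subterm of the conclusion has the shape $xN$ (resp.\ $\pi_i(x)$, resp.\ a $\mathtt{copy}$ block) with a \emph{specific free} variable $x$, and I claim the rule that "created" this pattern must be the corresponding left/right rule on $x$. The argument: by Lemma~\ref{lem: linearity} $x$ occurs exactly once, so either the pattern is already present in a premise of the last rule (and $x$ is still free there, so recurse — noting the last rule is not the one introducing $x$ because that rule would bind/consume $x$), or the last rule is precisely $\multimap$L / $\forall$L / $\with$L$i$ / $\with$R$1$ acting on $x$ and producing exactly that substitution pattern in its conclusion term. The only subtlety is that a $cut$ could in principle substitute a term containing the pattern for some other variable — but then the pattern, together with its free occurrence of $x$, lives in the term being substituted in from the left premise, and we recurse there; $x$ cannot be the cut variable since cut variables are bound. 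For item~\ref{lem: generation withR0}, $\langle M_1,M_2\rangle$ can only be the conclusion term of $\with$R$0$ (which forces $\Gamma=\emptyset$ and $A = B_1\with B_2$) or of a rule that leaves the term unchanged or substitutes into it; again linearity plus the observation that $\with$R$1$ produces a $\mathtt{copy}$-headed term (not a bare pair) and $\forall$R/$\forall$L/$cut$/$\multimap$L either recurse or are impossible pins it down.

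The main obstacle I anticipate is the bookkeeping in the $cut$, $\multimap$L, $\with$L$i$, $\with$R$1$ cases, where the conclusion term is obtained by a meta-substitution $M'[N'/z]$ from the premises: one must argue carefully (using Lemma~\ref{lem: linearity} and the disjointness of bound and free variable names, up to $\alpha$-renaming) that the observed syntactic pattern either comes entirely from $M'$ with its $x$-occurrence intact — allowing the induction hypothesis to apply to the premise — or is exactly the pattern freshly introduced by that very rule. The linearity constraint is what makes this clean: because each variable occurs exactly once, patterns cannot be "split" across the substitution in pathological ways. Once that principle is isolated and stated once, all six items follow uniformly.
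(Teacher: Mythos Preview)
The paper gives no proof of this lemma, listing it alongside Linearity as a ``straightforward preliminary lemma''; your plan---structural induction on $\mathcal{D}$, case analysis on the last rule, and using Lemma~\ref{lem: linearity} to track the unique occurrence of the distinguished variable through the meta-substitutions introduced by $cut$, $\multimap$L, $\with$L$i$, and $\with$R$1$---is precisely the standard argument one would expect and is correct in outline.

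One small point worth tightening for item~\ref{lem: generation withR0}: after $\with$R$0$ one can in principle apply a vacuous $\forall$R (the context is empty and the type is closed, so nothing forbids it), which would make the last rule $\forall$R rather than $\with$R$0$. Your remark that $\forall$R ``recurses'' does not quite close this case, since the induction hypothesis on the premise gives a type of shape $B_1\with B_2$, whereas the conclusion then has type $\forall\alpha.(B_1\with B_2)$. This is more a technicality in the statement than a flaw in your method, and it is harmless for the paper's purposes: only item~\ref{lem: generation withR1} is actually invoked downstream, in the proof of Subject Reduction.
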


\begin{thm}[Subject reduction] 
	\label{thm: Subject Reduction for LAM} 
	If  $\mathcal{D}\triangleleft \Gamma \vdash  M_1: A$ and  $M_1 \rightarrow M_2$ then:
	\begin{itemize}
		\item $\s{M_2}<\s{M_1}$, and
		\item $\mathcal{D}^*\triangleleft \Gamma \vdash M_2:A$, for some  $\mathcal{D}^*$.   
	\end{itemize}  
\end{thm}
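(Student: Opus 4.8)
The plan is to prove both claims simultaneously by induction on the derivation $\mathcal{D}$, with a case analysis on the last rule applied and, when that rule is not the one creating the redex, an inner case analysis on where the redex $M_1 \to M_2$ occurs. The two statements must be proved together: the size-decrease $\s{M_2} < \s{M_1}$ is needed to keep the induction honest (for instance when a redex sits inside the value-argument of a $\mathtt{copy}$, or inside a premise that gets substituted), and conversely the existence of a retyping $\mathcal{D}^*$ is what lets us apply the inductive hypothesis to subderivations. I would first dispatch the easy \emph{context cases}: if the redex is entirely inside a single premise of the last rule (say $\multimap$R, $\forall$R, $\forall$L, $\pi_i$, one component of $\with$R$0$ or $\with$R$1$), apply the IH to that premise and reassemble; the closure conditions of Definition~\ref{defn: the system LAM} are preserved because the types in the conclusion are unchanged, and $\s{\cdot}$ is strictly monotone in each subterm, so the size strictly decreases.

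The substantive work is in the \emph{principal cases}, where the contracted redex is exactly the one introduced by the last rule — or, more precisely, where the redex pattern straddles the boundary between two rules and must be located using the Generation Lemma~\ref{lem: generation}. There are three such patterns. (1) The $\beta$-redex $(\lambda x.M)N$: by Lemma~\ref{lem: generation}.\ref{lem:generation 1} we may permute $\mathcal{D}$ so that the derivation of $\lambda x.M$ is an instance of $\multimap$R (possibly under a block of $\forall$R) over a derivation of $\Gamma, x:B \vdash M:C$, and the surrounding $\multimap$L supplies $\vdash N:B$ on the other branch; a $cut$ of these two premises yields $\Gamma,\Delta \vdash M[N/x]:C$, and the size drops because we delete the $\lambda$ and the application node ($\s{M[N/x]} = \s{M} + \s{N} - 1 < \s{(\lambda x.M)N} = \s{M}+\s{N}+1$, using Lemma~\ref{lem: linearity} so $x$ occurs exactly once in $M$, hence the substitution neither duplicates nor discards $N$). (2) The projection redex $\pi_i\langle M_1,M_2\rangle$: by Lemma~\ref{lem: generation}.\ref{lem: generation withR0} the pair is typed by $\with$R$0$ with empty context, and by Lemma~\ref{lem: generation}.\ref{enum: generation for LAML pi} the enclosing $\with$L$i$ introduces the variable being projected; composing the $i$-th premise $\vdash M_i:A_i$ of $\with$R$0$ with the premise $\Gamma, x_i:A_i \vdash M:C$ of $\with$L$i$ via $cut$ gives $\Gamma \vdash M[M_i/x_i]:C$, and the size strictly decreases since we discard $M_{3-i}$ (nonempty, size $\geq 1$) together with the pairing and projection nodes. (3) The $\mathtt{copy}$-redex: by Lemma~\ref{lem: generation}.\ref{lem: generation withR1} the relevant subderivation is an instance of $\with$R$1$ with premises $x_1:A \vdash M_1:A_1$, $x_2:A \vdash M_2:A_2$, $\vdash V:A$, and the term $U$ being copied is itself typed (at type $A$) by the immediately surrounding structure; since $U \in \mathcal{V}$, we may $cut$ the derivation of $\vdash U:A$ against each of the two premises to obtain $\vdash M_1[U/x_1]:A_1$ and $\vdash M_2[U/x_2]:A_2$, then apply $\with$R$0$ to get $\vdash \langle M_1[U/x_1], M_2[U/x_2]\rangle : A_1\with A_2$. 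Here the size bound is the delicate point and is exactly where Remark~\ref{rem: eta long nf} is used: we have arranged (w.l.o.g.) that $V$ is the $\eta$-long normal form of the $\forall$-lazy type $A$, so $\s{U} \leq \s{V}$ by Proposition~\ref{prop: properties of eta logn nf}.\ref{point: properties eta long nf 2}; then $\s{\langle M_1[U/x_1],M_2[U/x_2]\rangle} = \s{M_1}+\s{M_2}+2\s{U}-2+1$ (again using linearity of $x_1,x_2$), which is strictly smaller than $\s{\mathtt{copy}^V U \mathtt{\ as\ }x_1,x_2\mathtt{\ in\ }\langle M_1,M_2\rangle} = \s{V}+\s{U}+\s{M_1}+\s{M_2}+2 = \s{V}+\s{U}+\s{M_1}+\s{M_2}+2$, since $\s{U} \leq \s{V}$ forces $2\s{U}-1 < \s{V}+\s{U}+1$.

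Throughout the principal cases I must check that the reassembled derivation still satisfies the two closure conditions of $\mathsf{LAM}$: this is routine because the new $cut$s introduced are eliminable in the sense that they only compose existing subderivations without creating fresh $\multimap$L or $\forall$R instances, and the conclusion judgement — hence all its types — is unchanged, so any $\forall$-lazy or empty-free-variable side condition that held below still holds. Strictly speaking one should note that $\mathsf{LAM}$ has no primitive $cut$ rule, so each $cut$ I invoke above is really shorthand for the explicit substitution performed on the derivation tree (grafting the premise derivation at each leaf typing the cut variable, which by Lemma~\ref{lem: linearity} is a unique leaf); this grafting is well-defined precisely because of linearity and because values are closed, so no variable capture or free-variable clash with the $\forall$R/$\multimap$L side conditions can arise. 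I expect the \textbf{main obstacle} to be bookkeeping the size inequality in the $\mathtt{copy}$ case cleanly — making fully explicit that the $\eta$-long-normal-form assumption on $V$ (via Remark~\ref{rem: eta long nf} and Proposition~\ref{prop: properties of eta logn nf}.\ref{point: properties eta long nf 2}) is what rescues the strict decrease, since without it copying a value could in principle enlarge the term — together with the care needed to ensure the Generation Lemma is being applied to the correct subderivation when the redex straddles two rule instances (i.e.\ that the permutations it licenses do not themselves break the closure conditions).
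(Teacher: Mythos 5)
Your proposal is correct and follows essentially the same route as the paper: induction on $\mathcal{D}$, the Generation and Linearity Lemmas to locate and retype the redex, and Remark~\ref{rem: eta long nf} together with Proposition~\ref{prop: properties of eta logn nf}.\ref{point: properties eta long nf 2} to obtain $\s{U}\leq\s{V}$ and hence the strict size decrease in the $\mathtt{copy}$ case --- the only case the paper works out in detail, where $\mathcal{D}^*$ is built with two $cut$s and a $\with$R$0$ exactly as you describe. One small correction: $\mathsf{LAM}$ \emph{does} have a primitive $cut$ rule (inherited from Figure~\ref{fig: The system IMALL2}), so your closing caveat about reading each $cut$ as a grafting operation is unnecessary, and in fact grafting at ``the leaf typing the cut variable'' would be ill-defined, since that variable may be introduced by $\with$R$1$, $\forall$L or $\multimap$L rather than by an axiom.
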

\begin{proof}
	We proceed by structural induction on $ \mathcal{D} $. The crucial case is when the last rule of $\mathcal{D}$ is an instance of  $cut$ introducing the redex in $M_1$ that has been fired by the reduction step $M_1 \rightarrow M_2$.  We just consider the case where $ M_1= P[\mathtt{copy}^{U} V   \mathtt{\ as \ }x_1,x_2 \mathtt{\ in \ } \langle N_1, N_2\rangle /y]$ and $M_2= P[\langle N_1[V/x_1], N_2[V/x_2]\rangle /y]$. We have:
	\begin{prooftree}
		\AxiomC{$\mathcal{D}_1  \triangleleft  \vdash V: B$}
		\AxiomC{$\mathcal{D}_2 \triangleleft \Delta, x:  B \vdash  P[\mathtt{copy}^{U}  x \mathtt{\ as \ }x_1,x_2 \mathtt{\ in \ } \langle N_1, N_2\rangle /y]:A$}
		\RightLabel{$ cut $  }
		\BinaryInfC{$\mathcal{D} \triangleleft  \Delta \vdash P[ \mathtt{copy}^{U}  V \mathtt{\ as \ }x_1,x_2 \mathtt{\ in \ } \langle N_1, N_2\rangle /y]:A$}
	\end{prooftree}
	By applying Lemma~\ref{lem: generation}.\ref{lem: generation withR1},  $\mathcal{D}_2$ must be of the following form:
	\begin{prooftree}
		\def\defaultHypSeparation{\hskip .1cm}
		\AxiomC{$\mathcal{D}'\triangleleft x_1:B \vdash N_1: B_1$}
		\AxiomC{$\mathcal{D}'' \triangleleft x_2:B \vdash N_2: B_2$}
		\AxiomC{$\mathcal{D}''' \triangleleft \vdash U: B$}
		\RightLabel{$\with$R$1$}
		\TrinaryInfC{$x: B \vdash \mathtt{copy}^{U} x \mathtt{\ as \ }x_1, x_2 \mathtt{\ in \ }\langle N_1, N_2\rangle: B_1 \with B_2$}
		\noLine
		\UnaryInfC{$\vdots \ \gamma$}
		\noLine
		\UnaryInfC{$\mathcal{D}_2 \triangleleft \Delta, x:  B \vdash  P[\mathtt{copy}^{U}  x \mathtt{\ as \ }x_1,x_2 \mathtt{\ in \ } \langle N_1, N_2\rangle /y]:A$}
	\end{prooftree}
	where $\gamma$ is a sequence of rules. We construct $\mathcal{D}^*$  as the following derivation:
	\begin{prooftree}
		\def\defaultHypSeparation{\hskip .1cm}
		\AxiomC{$\mathcal{D}_1\triangleleft \vdash V:B$}
		\AxiomC{$\mathcal{D}'\triangleleft x_1:B \vdash N_1: B_1$}
		\RightLabel{$cut$}
		\BinaryInfC{$\vdash N_1[V/x_1]:B_1$}
		\AxiomC{$\mathcal{D}_1\triangleleft \vdash V:B$}
		\AxiomC{$\mathcal{D}''\triangleleft x_2:B \vdash N_2: B_2$}
		\RightLabel{$cut$}
		\BinaryInfC{$\vdash N_2[V/x_2]:B_2$}
		\RightLabel{$\with$R$0$}
		\BinaryInfC{$\vdash \langle N_1[V/x_1], N_2[V/x_2]\rangle : B_1 \with B_2$}
		\noLine
		\UnaryInfC{$\vdots \ \gamma$}
		\noLine
		\UnaryInfC{$ \Delta \vdash  P[\langle N_1[V/x_1], N_2[V/x_2]\rangle /y]:A$}
	\end{prooftree}
	By Remark~\ref{rem: eta long nf}, $U$ is a $\eta$-long normal form, so that $\s{V}\leq \s{U}$ by Proposition~\ref{prop: properties of eta logn nf}.\ref{point: properties eta long nf 2}.  By Lemma~\ref{lem: linearity}, $x_1$ and $x_2$ occur exactly once in $N_1$ and $N_2$, respectively. Hence, $ \s{N_i[V/x_i]}= \s{N_i}+ \s{V}-1$, for $i=1,2$.     We have:
	\begin{equation*}
	\begin{split}
	\s{\langle N_1[V/x_1], N_2[V/x_2]\rangle }&= \s{N_1[V/x_1]}+ \s{N_2[V/x_2]}+1 = 2\cdot \s{V} +  \s{N_1}+\s{N_2} -1\\
	&  \leq  \s{U}+ \s{V}+   \s{N_1}+\s{N_2} -1  <   \s{\mathtt{copy}^{U}  V \mathtt{\ as \ }x_1,x_2 \mathtt{\ in \ } \langle N_1, N_2\rangle }  
	\end{split}
	\end{equation*}
	and this implies $\s{M_2}<\s{M_1}$.
\end{proof}

Subject reduction entails linear strong normalization.
\begin{thm}[Linear strong normalization]\label{thm:lazy normalization for LAML} If $\mathcal{D}\triangleleft \Gamma \vdash M: A$ then $M$ reduces  to a  normal  form in at most $\s{M}$ reduction steps. 
\end{thm}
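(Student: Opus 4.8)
The plan is to derive this as an immediate corollary of Subject reduction (Theorem~\ref{thm: Subject Reduction for LAM}). The key observation is that Theorem~\ref{thm: Subject Reduction for LAM} gives us more than type preservation: it guarantees a \emph{strict} decrease of term size at every reduction step, namely $\s{M_2} < \s{M_1}$ whenever $M_1 \rightarrow M_2$ and $M_1$ is typable. Since the size of a term is a natural number, any reduction sequence starting from a typable $M$ must be finite, which already yields strong normalization; the quantitative bound follows by counting.

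First I would argue that every term appearing in a reduction sequence from $M$ remains typable. Starting from $\mathcal{D} \triangleleft \Gamma \vdash M : A$, Theorem~\ref{thm: Subject Reduction for LAM} ensures that if $M \rightarrow M'$ then $\Gamma \vdash M' : A$ is derivable (by some $\mathcal{D}^*$); iterating, every $M_i$ in a sequence $M = M_0 \rightarrow M_1 \rightarrow M_2 \rightarrow \cdots$ satisfies $\Gamma \vdash M_i : A$. Hence Subject reduction applies at each step, and we get the strictly decreasing chain of natural numbers $\s{M} = \s{M_0} > \s{M_1} > \s{M_2} > \cdots$. Next I would note that a strictly decreasing sequence of natural numbers starting at $\s{M}$ has length at most $\s{M}$ (one could even sharpen this, since sizes are positive, but $\s{M}$ suffices for the stated bound). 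Therefore no reduction sequence from $M$ can have more than $\s{M}$ steps: $M$ is strongly normalizing and reaches a normal form in at most $\s{M}$ steps.

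I do not expect any genuine obstacle here, as the statement is essentially a repackaging of the strengthened Subject reduction. The only mild subtlety worth spelling out is the propagation of typability along the whole sequence — one must invoke $\mathcal{D}^*$ at each step rather than assuming the reduct is typed a priori — but this is routine once Theorem~\ref{thm: Subject Reduction for LAM} is in hand. A second point worth a sentence is the use of linearity of $\rightarrow$ together with well-foundedness: since \emph{every} reduction step strictly decreases size regardless of which redex is contracted, the bound holds for all reduction strategies, giving \emph{strong} (not merely weak) normalization. No further lemmas beyond Theorem~\ref{thm: Subject Reduction for LAM} and the well-ordering of $\mathbb{N}$ are needed.
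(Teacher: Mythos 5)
Your proposal is correct and matches the paper's argument exactly: the paper derives this theorem directly from the strengthened Subject reduction (Theorem~\ref{thm: Subject Reduction for LAM}), using the strict size decrease $\s{M_2}<\s{M_1}$ together with type preservation along the sequence to bound the length of any reduction by $\s{M}$. Your write-up just makes explicit the routine induction propagating typability, which the paper leaves implicit.
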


\begin{rem}\label{rem: uniqueness nf}
	By Newman's Lemma (see~\cite{barendregt2013lambda}),  confluence of $\rightarrow$ for typable terms holds as a consequence of Theorem~\ref{thm:lazy normalization for LAML} and weak confluence, the latter being easy to establish. Therefore, each typable term  reduces to a \emph{unique} normal form.
\end{rem}

\subsection{Cubic $\forall$-Lazy Cut-Elimination}
$\mathsf{LAM}$ is a subsystem of $\mathsf{IMALL}_2$. Hence, 
 from a purely proof-theoretical viewpoint, the former inherits the  cut-elimination rules of the latter (see for example~\cite{brauner1996cut}). However, these proof rewriting rules for $\mathsf{LAM}$ would be more permissive than the  reduction rules for $\Lambda_{\mathtt{copy}}$, essentially because the  $\mathtt{ copy}$ construct can only duplicate values. The next examples illustrate this point.

\begin{exmp} Let us consider the following derivation of $\mathsf{LAM}$:
\begin{equation}\label{eqn: deadlocks in LAM}
\AxiomC{$y: \mathbf{1} \vdash y \mathbf{I}: \mathbf{1}$}
\AxiomC{$x_1: \mathbf{1}\vdash x_1: \mathbf{1}$}
\AxiomC{$x_2: \mathbf{1}\vdash x_2: \mathbf{1}$}
\AxiomC{$\vdash \mathbf{I}: \mathbf{1}$}
\RightLabel{$\with$R$1$  }
\TrinaryInfC{$ x: \mathbf{1}  \vdash \mathtt{copy} ^{\mathbf I}\, x \mathtt{ \   as \ } x_1,x_2 \mathtt{ \   in \ } \langle x_1, x_2 \rangle : \mathbf{1}\with  \mathbf{1} $}
\RightLabel{$cut$}
\BinaryInfC{$ y: \mathbf{1}  \vdash \mathtt{copy} ^{\mathbf{I}} \, y\mathbf{I} \mathtt{ \   as \ } x_1,x_2 \mathtt{ \   in \ } \langle  x_1, x_2 \rangle : \mathbf{1}\with  \mathbf{1} $}
\DisplayProof
\end{equation}
and let us apply the cut-elimination rule of $\mathsf{IMALL}_2$ moving the $cut$ upward. We get a derivation of $y: \mathbf{1}  \vdash  M: \mathbf{1}\with \mathbf{1}$, where $M$ is $ \mathtt{copy} ^{\mathbf{I}} \, y \mathtt{ \   as \ } y_1,y_2 \mathtt{ \   in \ } \langle y_1 \mathbf{I}, y_2 \mathbf{I}\rangle $. But $  \mathtt{copy} ^{\mathbf{I}} \, y\mathbf{I} \mathtt{ \   as \ } x_1,x_2 \mathtt{ \   in \ } \langle  x_1, x_2 \rangle \not \rightarrow^*M$.
\end{exmp} 
\noindent

\begin{exmp} Let us consider the following derivation of $\mathsf{LAM}$:
	\begin{equation}\label{eqn: copy-first in LAM}
	\AxiomC{$x_1: \mathbf{1}\vdash x_1: \mathbf{1}$}
	\AxiomC{$x_2: \mathbf{1}\vdash x_2: \mathbf{1}$}
	\AxiomC{$\vdash \mathbf{I}: \mathbf{1}$}
	\RightLabel{$\with$R$1$ }
	\TrinaryInfC{$ x: \mathbf{1}  \vdash \mathtt{copy} ^{\mathbf I}\, x \mathtt{ \   as \ } x_1,x_2 \mathtt{ \   in \ } \langle x_1, x_2 \rangle : \mathbf{1}\with  \mathbf{1} $}
	\AxiomC{$z: \mathbf{1}\vdash z: \mathbf{1}$}
	\RightLabel{$\with$L$i$}
	\UnaryInfC{$y: \mathbf{1}\with \mathbf{1}\vdash \pi_1(y): \mathbf{1}$}
	\RightLabel{$cut$}
	\BinaryInfC{$ x: \mathbf{1}  \vdash \pi_1(\mathtt{copy} ^{\mathbf{I}} \, x \mathtt{ \   as \ } x_1,x_2 \mathtt{ \   in \ } \langle  x_1, x_2 \rangle ): \mathbf{1}$}
	\DisplayProof
	\end{equation}
	and let us apply the principal cut-elimination rule  for $\with$ in $\mathsf{IMALL}_2$. We get a \textit{cut} with premises $x_1: \mathbf{1}\vdash x_1: \mathbf{1}$ and $z: \mathbf{1}\vdash z: \mathbf{1}$. However, no reduction rule rewrites $\pi_1(\mathtt{copy} ^{\mathbf{I}} \, x \mathtt{ \   as \ } x_1,x_2 \mathtt{ \   in \ } \langle  x_1, x_2 \rangle )$ into $x_1$.
\end{exmp}

To circumvent the above mismatch between proof rewriting  and term reduction, we introduce the  \emph{$\forall$-lazy cut-elimination rules}. They  never eliminate  instances of $cut$ like~\eqref{eqn: deadlocks in LAM} and~\eqref{eqn: copy-first in LAM}, so that cut-elimination fails in general. We then show that, by defining a special $\forall$-lazy cut-elimination strategy with cubic cost, cut-elimination can be recovered in the restricted case of derivations of $\forall$-lazy types (Theorem~\ref{thm: cut elimination for LAM}). This result is analogous to the restricted (lazy) cut-elimination property for derivations of lazy types  discussed in Section~\ref{subsec:blowuplazy}. The crucial difference is that, while Girard's lazy types rule out both negative occurrences of $\forall$ and  positive occurrences of $\with$,  the  $\forall$-lazy types only require the absence of negative $\forall$, as illustrated in Figure~\ref{fig: foralllazy}. This allows  us to nest instances of $\with$R$1$ in a  derivation of $\mathsf{LAM}$ without incurring  exponential proof normalization, as shown in Section~\ref{subsection: linear additives prevent blow up} in the case of  term reduction.

\begin{figure}[t]
	\centering
	\begin{tabular}{|c|c|c|}
		\cline{2-3}
		\multicolumn{1}{c|}{}&\multicolumn{1}{c|}{\textit{negative} $\forall$}&\multicolumn{1}{c|}{\textit{positive} $\with$}\\ \hline
		\textit{lazy types} &   $X$ &    $X$   \\ \hline
		\textit{$\forall$-lazy types} &  $X$ &  $\checkmark$ \\ \hline
	\end{tabular}
	\caption{Lazy types vs $\forall$-lazy types.}
\label{fig: foralllazy}
\end{figure}

\begin{defn}[$\forall$-lazy cut-elimination rules]{\ }
\label{defn:Lazy cut-elimination rules}
\begin{itemize}
\item  With $ (X,Y) $ we denote a $cut$ whose left (resp.~right) premise is the conclusion of an instance of the inference rule  $X$ (resp.~$Y$). Cuts are divided into four classes: the \emph{symmetric cuts} are $(\multimap\text{R},\multimap\text{L})$,  $(\with\text{R}0,\with\text{L}i)$,  $(\forall \text{R},\forall\text{L})$ and those of the form $(X,ax)$ or $(ax,Y)$, for some $ X$ and $ Y $; the \emph{copy-first cuts}  have form $(\with{R}1,\with\text{L}i )$;    the  \textit{critical cuts} have  form $(X, \with\text{R}1)$, for some rule $X$; finally, the  \emph{commuting cuts} are all the remaining instances of $ cut $.
\item Let the following be a critical cut:
\begin{prooftree}
	\AxiomC{$\mathcal{D}$}
	\noLine
\UnaryInfC{$\Gamma \vdash M: A$}
\AxiomC{$x_1: A\vdash N_1: B$}
\AxiomC{$x_2: A\vdash N_2: C$}
\AxiomC{$\vdash V: A$}
\RightLabel{$\with$R$1$}
\TrinaryInfC{$ x: A \vdash \mathtt{copy} ^{V}\, x \mathtt{ \   as \ } x_1,x_2 \mathtt{ \   in \ } \langle N_1, N_2 \rangle : B\with C $}
\RightLabel{$cut$}
\BinaryInfC{$ \Gamma  \vdash \mathtt{copy} ^{V} \, M \mathtt{ \   as \ } x_1,x_2 \mathtt{ \   in \ } \langle  N_1,N_2 \rangle :B\with  C$}
\end{prooftree}
It is called  \textit{safe} if $\Gamma=\emptyset$, and  \textit{deadlock} otherwise. Also, it is called  \textit{ready} if it is safe and $\mathcal{D}$ is cut-free. In this case, since $A$ is $\forall$-lazy,  Proposition~\ref{prop: lazyness propagation}.\ref{point 2: laziness propagation} implies $M \in \mathcal{V}$.
\item The \emph{$\forall$-lazy cut-elimination rules} are defined as follows:
\begin{itemize}
\item they correspond to the standard cut-elimination rules of $\mathsf{LL}$ for commuting cuts and  for the symmetric cuts $(\multimap\text{R},\multimap\text{L})$,  $(\forall \text{R},\forall\text{L})$, $(X,ax)$ and $(ax,Y)$ (see e.g.~\cite{brauner1996cut}); 
\item they are displayed in Figure~\ref{fig: cut elimination LAM} for the symmetric cut   $(\with\text{R}0,\with\text{L}i)$ and for those critical cuts $(X, \with\text{R}1)$ which are \textit{ready};
\item there is no $\forall$-lazy cut-elimination rule for copy-first cuts and the remaining critical cuts.
\end{itemize}
If $\mathcal{D}$ rewrites  to  $\mathcal{D}'$ by 
 a $\forall$-lazy cut-elimination rule, we write $\mathcal{D}\red \mathcal{D}'$.  The reflexive and transitive closure of $\red$ is $\red^*$. 
\end{itemize}
\end{defn}

The $\forall$-lazy cut-elimination rules prevent  duplication of terms that are not values,  restoring a match between proof rewriting and term reduction. What we are  going to show is that any $\forall$-lazy derivation $\mathcal{D}$ can  be rewritten into a cut-free one by a specific strategy of $\forall$-lazy cut-elimination steps. This implies that any instance of critical cut in $\mathcal{D}$ (e.g. the deadlock  in~\eqref{eqn: deadlocks in LAM}) is eventually turned into a ready cut, and that all instances of copy-first cuts like~\eqref{eqn: copy-first in LAM}  are eventually  turned into   $(\with\text{R}0, \with\text{L}i)$ cuts. The latter is due to the replacement of $\with$R$1$ with  $\with$R$0$ given by the  $\forall$-lazy cut elimination of ready cuts in Figure~\ref{fig: cut elimination LAM}.

\begin{figure}[t]
\begin{framed}
\scalebox{0.7}{$
\begin{aligned}
&(\with \text{R}0, \with\text{L}i) & & 
\def\ScoreOverhang{1pt}
	\AxiomC{$\vdash N_1: A_1$}
	\AxiomC{$\vdash N_2: A_2$}
	\BinaryInfC{$\vdash \langle N_1, N_2\rangle : A_1\with A_2$}
	\AxiomC{$\Gamma, x_i:A_i \vdash M: B$}
	\AxiomC{$i \in \{ 1,2\}$}
	\BinaryInfC{$\Gamma, x: A_1\with A_2\vdash M[\pi_i (x)/x_i]: B$}
	\RightLabel{$cut$}
	\BinaryInfC{$\Gamma \vdash M[\pi_i \langle N_1, N_2\rangle /x_i]: B$}
	\DisplayProof
 \red \ 
\def\ScoreOverhang{1pt}
\AxiomC{$\vdash N_i: A_i$}
	\AxiomC{$\Gamma, x_i:A_i \vdash M: B$}
	\RightLabel{$cut$}
	\BinaryInfC{$\Gamma \vdash M[N_i/x_i]: B$}
	\DisplayProof
\\
\\
&(X, \with \text{R}1) & & 
	\def\ScoreOverhang{1pt}
\AxiomC{$\mathcal{D}^\dagger$}
\noLine
\UnaryInfC{$\vdash  V: A$}
\AxiomC{$x_1: A \vdash N_1: B_1$}
\AxiomC{$x_2: A \vdash N_2: B_2$}
\AxiomC{$\vdash U:A$}
\TrinaryInfC{$x:A \vdash \mathtt{copy}^{U}  x \mathtt{\ as\ } x_1, x_2 \mathtt{\ in\ }\langle N_1, N_2\rangle :B_1 \with B_2$}
\RightLabel{$cut$}
\BinaryInfC{$ \vdash \mathtt{copy}^{U} V
 \mathtt{\ as\ } x_1, x_2 \mathtt{\ in \ }\langle N_1, N_2\rangle :B_1 \with B_2$}
\DisplayProof
\red \ 
\def\ScoreOverhang{1pt}
\AxiomC{$\mathcal{D}^\dagger$}
\noLine
\UnaryInfC{$\vdash  V: A$}
\AxiomC{$x_1: A \vdash N_1: B_1$}
\RightLabel{$cut$}
\BinaryInfC{$\vdash N_1[V/x_1]:B_1$}
\AxiomC{$\mathcal{D}^\dagger$}
\noLine
\UnaryInfC{$\vdash  V: A$}
\AxiomC{$x_2: A \vdash N_2: B_1$}
\RightLabel{$cut$}
\BinaryInfC{$\vdash N_2[V/x_2]:B_2$}
\RightLabel{$\with$R$0$}
\BinaryInfC{$\vdash \langle  N_1[V/x_1], N_2[V/x_2]\rangle :B_1 \with B_2$}
\DisplayProof
\end{aligned}$}
\caption{$\forall$-lazy cut-elimination rules for $(\with\text{R}0, \with\text{L}i)$ and for  ready cuts,  where $\mathcal{D^\dagger}$ is cut-free.}
\label{fig: cut elimination LAM} 
\end{framed}
\end{figure}

The  proposition below exploits the  \textit{closure} conditions introduced in  Definition~\ref{defn: the system LAM}.  
\begin{prop} \label{prop: closed context from closed conclusion}If $\mathcal{D}\triangleleft \Gamma \vdash M:A$ and $FV(A)= \emptyset$ then $FV(\Gamma)= \emptyset$. 
\end{prop}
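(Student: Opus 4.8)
The plan is to proceed by structural induction on the derivation $\mathcal{D}\triangleleft \Gamma \vdash M : A$, establishing the contrapositive-friendly statement directly: if $FV(A) = \emptyset$ then $FV(\Gamma) = \emptyset$. Most rules are immediate because they either do not introduce new free type variables in the context or do so in a way controlled by the conclusion; the delicate cases are exactly the ones the closure conditions of Definition~\ref{defn: the system LAM} were designed to handle, namely $\multimap$L and $\forall$R, together with the rules that split or share contexts ($cut$, $\multimap$L again, the linear additive rules).

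First I would dispatch the easy cases. For $ax$, $\Gamma = x:A$ and the claim is trivial. For $\multimap$R, if the conclusion type $A\multimap B$ is closed then so are $A$ and $B$, hence the premise conclusion $A\multimap B$ (equivalently the judgement $\Gamma, x:A \vdash M:B$) has closed right-hand side relative to the extended context; applying the induction hypothesis to the premise gives $FV(\Gamma, x:A) = \emptyset$, so $FV(\Gamma)=\emptyset$. For $\with$R$0$ the context is already empty. For $\with$L$i$ and $\with$R$1$, the types $A_1, A_2$ (and $A$) occurring in the active formulas are required to be \emph{closed} by the side condition in Figure~\ref{fig: linearadditives}, so the active assumption contributes no free type variables, and the induction hypothesis applied to the premise(s) handles the rest of $\Gamma$. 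For $\forall$L, the conclusion context contains $x:\forall\alpha.A$ while the premise has $x:A\langle B/\alpha\rangle$; since $FV(\forall\alpha.A) \subseteq FV(A\langle B/\alpha\rangle)$ is false in general — one must instead note that if the \emph{conclusion} type $C$ is closed, the induction hypothesis on the premise gives $FV(\Gamma, x:A\langle B/\alpha\rangle)=\emptyset$, whence $FV(\Gamma)=\emptyset$ and $FV(\forall\alpha.A)\subseteq FV(A\langle B/\alpha\rangle)=\emptyset$ as well. So $\forall$L is actually harmless for this direction.

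The two genuinely load-bearing cases are $\forall$R and $\multimap$L. For $\forall$R with conclusion $\Gamma \vdash M:\forall\alpha.A$ and $FV(\forall\alpha.A)=\emptyset$: closure condition (ii) forbids precisely the configuration $FV(\forall\alpha.A)=\emptyset$ with $FV(\Gamma)\neq\emptyset$, so $FV(\Gamma)=\emptyset$ is forced directly by the side condition — no appeal to the induction hypothesis is even needed. For $\multimap$L with conclusion $\Gamma, y:A\multimap B, \Delta \vdash M[yN/x]:C$ and $FV(C)=\emptyset$: the induction hypothesis applied to the right premise $\Delta, x:B\vdash M:C$ gives $FV(\Delta)=\emptyset$ and $FV(B)=\emptyset$. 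Now closure condition (i) says we cannot have $FV(B)=\emptyset$ and $FV(A)\neq\emptyset$; since $FV(B)=\emptyset$, we conclude $FV(A)=\emptyset$, hence $FV(A\multimap B)=\emptyset$. It remains to show $FV(\Gamma)=\emptyset$: the left premise is $\Gamma\vdash N:A$ with $FV(A)=\emptyset$, so the induction hypothesis applies and yields $FV(\Gamma)=\emptyset$. Similarly for $cut$ with conclusion $\Gamma,\Delta\vdash M[N/x]:C$: the induction hypothesis on the right premise $\Delta, x:A\vdash M:C$ gives $FV(\Delta)=\emptyset$ and $FV(A)=\emptyset$, and then the induction hypothesis on the left premise $\Gamma\vdash N:A$ gives $FV(\Gamma)=\emptyset$.

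The main obstacle — really the only subtlety — is recognizing that the induction must be threaded so that the right/auxiliary premise is analyzed first (extracting closedness of the cut formula or of $B$), and only then is the closedness side condition invoked to propagate closedness leftward to the other premise, whose induction hypothesis closes the remaining context. The role of the closure conditions is exactly to bridge the gap at $\forall$R (where no premise carries the relevant information) and at $\multimap$L (where the left argument type $A$ need not a priori be constrained by the conclusion $C$). Everything else is bookkeeping on which rules can enlarge $FV(\Gamma)$ relative to $FV(A)$, and the linear-additive side conditions (closed active types) and the linearity constraint together rule those out.
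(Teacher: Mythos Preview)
Your proposal is correct and follows essentially the same approach as the paper: structural induction on $\mathcal{D}$, with the closure conditions (i) and (ii) of Definition~\ref{defn: the system LAM} discharging the $\multimap$L and $\forall$R cases respectively, and the closedness side conditions on the linear additive rules handling $\with$R$1$ and $\with$L$i$. The paper's proof is terser (it only spells out $\multimap$L and $\with$L$i$), but the threading of the induction hypothesis through the right premise first to extract closedness of the active formula, then through the left premise, is exactly what you describe.
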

\begin{proof}
By induction on $\mathcal{D}$ using the closure conditions (i)-(ii) in Definition~\ref{defn: the system LAM} and the conditions on linear additives. We only consider some interesting cases.
 Suppose $\mathcal{D}$ ends with an instance of $\multimap$L with premises $\Gamma' \vdash N:B$   and $\Delta, x:C \vdash M':A$. If $FV(A)=\emptyset$ then, by induction hypothesis, 
 $FV(\Delta)= FV(C)=\emptyset$. By applying the closure condition (i), we have  $FV(B)=\emptyset$. By applying, again, the induction hypothesis we have $FV(\Gamma')=\emptyset$, and hence   $FV(\Gamma',  y: B \multimap C, \Delta)=\emptyset$.
  Let us now consider the case where $\mathcal{D}$ ends with an instance of $\with$L$i$ with premise $\Gamma', x_i: B_i \vdash M':A$ and conclusion $\Gamma',  y :B_1 \with B_2  \vdash M'[\pi_i(y)/x_i] :A
   $. If $FV(A)=\emptyset$ then $FV(\Gamma')=\emptyset$ by induction hypothesis. Moreover, since $B_1\with B_2$ is a closed $\forall$-lazy type, we conclude  $FV(\Gamma', y :B_1 \with B_2 )= \emptyset$.
\end{proof}
The next  lemmas are essential  to ensure  the restricted  cut-elimination result for $\forall$-lazy types.
\begin{lem}\label{lem: intermediate lemma}
Let $\mathcal{D}\triangleleft \Gamma \vdash M: A$ be a derivation whose only cuts are either deadlocks or copy-first. If one of those cuts exists in $\mathcal{D}$, then $\mathcal{D}$ is not $\forall$-lazy.
\end{lem}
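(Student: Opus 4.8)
The plan is a structural induction on $\mathcal{D}$, with a case analysis on its last rule $r$. It rests on Proposition~\ref{prop: lazyness propagation}.\ref{point 1: laziness propagation} (downward propagation of non-$\forall$-laziness), on Proposition~\ref{prop: closed context from closed conclusion}, and on one elementary fact that I would record first: \emph{every closed type contains a positive occurrence of $\forall$}. This is a short induction on the type: a type variable is not closed; $T_1 \multimap T_2$ and $T_1 \with T_2$, being closed, inherit a positive $\forall$ from $T_2$ (resp.~from $T_1$), which stays positive in the whole type; and $\forall \alpha. S$ is its own witness. Consequently, whenever a \emph{closed} type $A$ occurs in the context of a judgement $\Gamma \vdash M : C$, a positive $\forall$ of $A$ sits in negative position in the judgement type $\ldots \multimap A \multimap \ldots \multimap C$, so that judgement is not $\forall$-lazy.

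For the induction, assume $\mathcal{D}$ satisfies the hypotheses of the lemma. If $r$ is $\with$R$1$, $\with$L$i$ or $\forall$L, then the conclusion of $\mathcal{D}$ is already not $\forall$-lazy by Proposition~\ref{prop: lazyness propagation}.\ref{point 1: laziness propagation}. If $r$ is $\multimap$R, $\multimap$L or $\forall$R, then $r$ is not a cut, so the hypothesised deadlock or copy-first cut lies in one of the immediate subderivations, which inherits the hypotheses of the lemma; by the induction hypothesis that subderivation is not $\forall$-lazy, and Proposition~\ref{prop: lazyness propagation}.\ref{point 1: laziness propagation} propagates non-$\forall$-laziness down to $\mathcal{D}$. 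The case $r = ax$ is vacuous, since then $\mathcal{D}$ contains no cut. The case $r = \with$R$0$ cannot occur: each of its two immediate subderivations has empty context and concludes a closed, $\forall$-lazy type (by the side condition on $\with$R$0$ in Figure~\ref{fig: linearadditives}), hence is a $\forall$-lazy derivation; but one of them contains the hypothesised cut, so the induction hypothesis would force it to be non-$\forall$-lazy, a contradiction.

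It remains to treat the case where $r$ is itself a cut; then, by hypothesis, $r$ is a deadlock or a copy-first cut, and it suffices to show that its conclusion, which is the conclusion of $\mathcal{D}$, is not $\forall$-lazy. If $r$ is a deadlock, its left premise is a judgement $\Gamma' \vdash M' : A$ with $\Gamma' \neq \emptyset$ and with $A$ closed, since $A$ is the type carried by the $\with$R$1$ instance that feeds the right premise of $r$. By Proposition~\ref{prop: closed context from closed conclusion} every assumption type in $\Gamma'$ is then closed, hence contains a positive $\forall$, and by the observation above the conclusion $\Gamma' \vdash \mathtt{copy}^{V} M' \ldots : B \with C$ of $r$ is not $\forall$-lazy. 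If $r$ is a copy-first cut, its left premise is the conclusion of $\with$R$1$, namely $x : A \vdash \mathtt{copy}^{V} x \ldots : B \with C$ with $A$ closed; since a cut keeps its left premise's context, the assumption $x : A$ occurs in the context of the conclusion of $r$, and again the closed type $A$ supplies a negative $\forall$ there, so $\mathcal{D}$ is not $\forall$-lazy.

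I expect the main obstacle to be the $\with$R$0$ case: its conclusion is always $\forall$-lazy (Proposition~\ref{prop: lazyness propagation}.\ref{point 1: laziness propagation}), so the ``propagate the non-$\forall$-laziness downward'' strategy breaks there, and one has to argue instead that this case does not arise under the hypotheses, which is exactly what the induction hypothesis delivers: a derivation built only from deadlock and copy-first cuts cannot conclude a $\forall$-lazy judgement. A secondary delicate point is the deadlock case, where the closure conditions of $\mathsf{LAM}$ (invoked through Proposition~\ref{prop: closed context from closed conclusion}) are essential: without knowing that the left-premise context is closed, a deadlock could in principle conclude a $\forall$-lazy judgement, and the statement would fail.
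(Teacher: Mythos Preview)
Your proof is correct and follows essentially the same approach as the paper: first show that the conclusion of any deadlock or copy-first cut is not $\forall$-lazy (via Proposition~\ref{prop: closed context from closed conclusion} and the fact that closed types carry a positive $\forall$), then propagate non-$\forall$-laziness down to the root using Proposition~\ref{prop: lazyness propagation}.\ref{point 1: laziness propagation}. The only difference is organizational: the paper performs the downward propagation by induction on the length of the path from the conclusion to the offending cut, whereas you do a structural induction on $\mathcal{D}$; your formulation has the advantage of making the $\with$R$0$ case explicit (which the paper leaves implicit, relying on the fact that premises of $\with$R$0$ are always $\forall$-lazy and hence cannot appear on the path).
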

\begin{proof}
 First, we show that both the conclusion of a deadlock  and  the conclusion of a copy-first cut  cannot be   $\forall$-lazy.  It suffices to find a closed type in the context of these judgments, since closed types must contain at least a $\forall$ in positive position. Let 
\begin{prooftree}
	\AxiomC{$\Delta \vdash N: B$}
	\AxiomC{$ x:B\vdash  \mathtt{copy}^{V}  x \mathtt{\ as \ }x_1, x_2 \mathtt{ \ in\ } \langle P,Q \rangle:C \with D$ }
	\RightLabel{$cut$}
	\BinaryInfC{$ \Delta \vdash  \mathtt{copy}^{V}  N \mathtt{\ as \ }x_1, x_2 \mathtt{ \ in\ } \langle P,Q \rangle:C \with D$}
\end{prooftree}
be a deadlock. By definition, we have $\Delta \neq \emptyset$. Since $ x:B\vdash  \mathtt{copy}^{V}  x \mathtt{\ as \ }x_1, x_2 \mathtt{ \ in\ } \langle N_1,N_2 \rangle:C \with D$ is the conclusion of $\with$R$1$, we have $FV(B)=\emptyset$. Hence  $FV(\Delta)=\emptyset$, by Proposition~\ref{prop: closed context from closed conclusion}.  Moreover, let  
\begin{prooftree}
	\AxiomC{$x:B \vdash  \mathtt{copy}^{V}  x \mathtt{\ as \ }x_1, x_2 \mathtt{ \ in\ } \langle N_1,N_2 \rangle: B_1 \with B_2$}
	\AxiomC{$\Delta, x: B_1\with B_2\vdash M[\pi_i (x)/x_i]: C$}
	\RightLabel{$cut$}
	\BinaryInfC{$\Delta, x:B \vdash M[\pi_i ( \mathtt{copy}^{V}  x \mathtt{\ as \ }x_1, x_2 \mathtt{ \ in\ } \langle N_1,N_2 \rangle) /x_i]: C$}
\end{prooftree}
be a copy-first cut. Its leftmost premise is the conclusion of  $\with$R$1$, so $B$ must be closed by definition. 

Suppose now that $\mathcal{D}$ contains some cuts. Then it contains at least a deadlock or a copy-first cut. In both cases, $\mathcal{D}$ contains a judgment that is not $\forall$-lazy. Let $ R_1,\ldots R_n $ be the sequence of rule instances from  $\Gamma \vdash M: A$ up to this judgment. We prove by induction on $n$ that   $ \Gamma \vdash M: A$ cannot be  $\forall$-lazy. The case $n=0$ is trivial.  If $n>0$, then we have two cases depending on $R_n$. If $R_n$ is a cut, then it is either a deadlock or a copy-first cut,  and its conclusion cannot be $\forall$-lazy, so we  apply the induction hypothesis. If $R_n$ is not a  cut,   we apply Proposition~\ref{prop: lazyness propagation}.\ref{point 1: laziness propagation} and the induction hypothesis. 
\end{proof}
\begin{lem}[Existence of a safe cut] 
\label{lem: existence of the safe exponential cut} 
Let $\mathcal{D}\triangleleft \Gamma \vdash M: A$ be a $\forall$-lazy derivation whose only cuts are either critical or copy-first. Then:
\begin{enumerate}
\item\label{enum: safe} if $\mathcal{D}$ has critical cuts, then it  has safe cuts;
\item \label{enum: copy-first} if $\mathcal{D}$ is free from critical cuts, then it is free from copy-first cuts. 
\end{enumerate}
\end{lem}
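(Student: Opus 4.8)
The plan is to derive both items as short consequences of Lemma~\ref{lem: intermediate lemma}, arguing by contradiction. The only piece of bookkeeping I need is the classification of cuts from Definition~\ref{defn:Lazy cut-elimination rules}: a critical cut is either \emph{safe} or a \emph{deadlock} (and not both), while the hypothesis of the lemma says that the cuts occurring in $\mathcal{D}$ are only critical ones or copy-first ones --- in particular $\mathcal{D}$ has no symmetric and no commuting cuts.

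For item~\ref{enum: safe}, suppose $\mathcal{D}$ contains at least one critical cut but, towards a contradiction, no safe cut. Then every critical cut of $\mathcal{D}$ is a deadlock, so by the hypothesis every cut of $\mathcal{D}$ is a deadlock or a copy-first cut, and at least one such cut occurs (the given critical cut, which is therefore a deadlock). Lemma~\ref{lem: intermediate lemma} then entails that $\mathcal{D}$ is not $\forall$-lazy, contradicting the assumption. Hence $\mathcal{D}$ has a safe cut. For item~\ref{enum: copy-first}, suppose $\mathcal{D}$ is free from critical cuts; in particular it has no deadlock, so by the hypothesis every cut of $\mathcal{D}$ is copy-first. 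If some copy-first cut occurred, then again all cuts of $\mathcal{D}$ would be deadlocks or copy-first with at least one present, and Lemma~\ref{lem: intermediate lemma} would make $\mathcal{D}$ not $\forall$-lazy --- a contradiction. Therefore $\mathcal{D}$ contains no copy-first cuts.

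I do not expect a genuine obstacle here: the substantive reasoning --- propagating non-$\forall$-laziness through the inference rules via Proposition~\ref{prop: lazyness propagation}.\ref{point 1: laziness propagation}, and locating a closed type in the context of a deadlock or of a copy-first conclusion by Proposition~\ref{prop: closed context from closed conclusion} --- has already been carried out inside Lemma~\ref{lem: intermediate lemma}. The single subtlety to keep in mind is the gap between the hypotheses: the present lemma tolerates \emph{safe} critical cuts, which Lemma~\ref{lem: intermediate lemma} does not accommodate, and this is exactly why the contradiction argument must first rule out the existence of a safe cut (item~\ref{enum: safe}) or of any critical cut (item~\ref{enum: copy-first}) before Lemma~\ref{lem: intermediate lemma} can be invoked.
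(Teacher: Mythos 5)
Your proposal is correct and follows exactly the route the paper intends: its own proof of this lemma is the one-line remark that both points follow from Lemma~\ref{lem: intermediate lemma}, and your argument by contradiction (every non-safe critical cut is a deadlock, so the hypotheses of Lemma~\ref{lem: intermediate lemma} kick in and contradict $\forall$-laziness) is precisely the unfolding of that remark. You also correctly identify the only subtlety, namely that safe critical cuts fall outside the scope of Lemma~\ref{lem: intermediate lemma} and must be excluded by the contradiction hypothesis before that lemma can be applied.
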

\begin{proof}
 Both points follow from Lemma~\ref{lem: intermediate lemma}.
\end{proof}

\begin{defn}[Height and weight]
\label{defn:Height of an inference rule} Let $\mathcal{D}\triangleleft \Gamma \vdash M:A$ be a derivation of $\mathsf{LAM}$:
\begin{itemize}
\item  The \textit{weight of $\mathcal{D}$}, written $\with(\mathcal{D})$, is the number of instances of the rule $\with$R$1$ in $\mathcal{D}$.
\item Given a rule instance $R$ in $\mathcal{D}$, the  \emph{height of $R$}, written $ h(R) $,  is the number of rule instances from the conclusion $ \Gamma\vdash M:A $ of $ \mathcal{D} $ upward to the conclusion of $ R $.  The  \textit{height of $\mathcal{D}$}, written $h(\mathcal{D})$, is the largest $h(R)$ among its rule instances.
\end{itemize}
\end{defn}

\begin{lem}[Eliminating a ready cut]
\label{lem: eliminating a lazy cut}
 Let $\mathcal{D}\triangleleft \Gamma \vdash M: A$ be a $\forall$-lazy derivation whose only cuts are either critical or  copy-first. The following statements hold:
 \begin{enumerate}
 \item \label{enum: eliminating ready cut 1} if $\mathcal{D} $  has critical cuts,  then it has ready cuts; 
 \item \label{enum: eliminating ready cut 2} if $\mathcal{D}^*$ is obtained by eliminating  a ready cut in $\mathcal{D}$, then  $\vert\mathcal{D}^*\vert +  2\cdot \with(\mathcal{D}^*)<  \vert\mathcal{D}\vert+2\cdot \with(\mathcal{D})$.
 \end{enumerate}
\end{lem}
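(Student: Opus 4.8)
\textbf{Proof plan for Lemma~\ref{lem: eliminating a lazy cut}.}

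The plan is to prove the two points essentially separately, using the structural facts already established. For point~\ref{enum: eliminating ready cut 1}, I would argue as follows. By Lemma~\ref{lem: existence of the safe exponential cut}.\ref{enum: safe}, since $\mathcal{D}$ has critical cuts, it has at least one \emph{safe} critical cut, i.e.~one whose left premise is a derivation $\mathcal{D}'$ of a judgement $\vdash M:A$ with empty context. A ready cut is exactly a safe cut whose $\mathcal{D}'$ is moreover cut-free, so it suffices to find a safe cut sitting \enquote{topmost} among the critical cuts. Concretely, among all safe critical cuts in $\mathcal{D}$, pick one, say $\mathsf{c}$, of maximal height $h(\mathsf{c})$ (equivalently, one such that no other critical cut appears in the subderivation $\mathcal{D}'$ feeding its left premise). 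I then need to check that $\mathcal{D}'$ is cut-free: its only cuts would be critical or copy-first (as these are the only cuts allowed in $\mathcal{D}$ by hypothesis); a critical cut inside $\mathcal{D}'$ would contradict maximality once we observe it is again safe — here I invoke Lemma~\ref{lem: existence of the safe exponential cut}.\ref{enum: safe} applied to $\mathcal{D}'$ itself (which is $\forall$-lazy, since $\vdash M:A$ with $A$ $\forall$-lazy by the side condition on $\with$R$1$), so if $\mathcal{D}'$ had critical cuts it would have a safe one, of strictly larger height in $\mathcal{D}$, contradiction; and a copy-first cut inside $\mathcal{D}'$ is ruled out by Lemma~\ref{lem: existence of the safe exponential cut}.\ref{enum: copy-first} once we know $\mathcal{D}'$ is critical-cut-free. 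Hence $\mathcal{D}'$ is cut-free and $\mathsf{c}$ is ready.

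For point~\ref{enum: eliminating ready cut 2}, I would inspect the $\forall$-lazy cut-elimination rule for ready cuts displayed in Figure~\ref{fig: cut elimination LAM}. The rewrite removes one instance of $\with$R$1$, one instance of $cut$, and duplicates the cut-free subderivation $\mathcal{D}^\dagger$ of $\vdash V:A$, while adding one $\with$R$0$ and two new $cut$ instances. Thus, writing $d = \vert \mathcal{D}^\dagger \vert$, we have $\vert \mathcal{D}^* \vert = \vert \mathcal{D} \vert - 2 + d$ (we lose the old $\with$R$1$ and the old $cut$ — that is $-2$ — gain one $\with$R$0$ and two $cut$s — that is $+3$ — and gain one extra copy of $\mathcal{D}^\dagger$ — that is $+d$; net $\vert\mathcal{D}\vert + 1 + d$; one must be careful with the exact bookkeeping here), whereas $\with(\mathcal{D}^*) = \with(\mathcal{D}) - 1$ since the eliminated ready cut consumed the unique $\with$R$1$ it rewrote and $\mathcal{D}^\dagger$ contains no $\with$R$1$ (it is cut-free and its conclusion $\vdash V:A$ is $\forall$-lazy, so by Proposition~\ref{prop: lazyness propagation}.\ref{point 2: laziness propagation} it has no $\with$R$1$ at all). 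Therefore $\vert \mathcal{D}^* \vert + 2\cdot\with(\mathcal{D}^*) = \bigl(\vert\mathcal{D}\vert + c_0 + d\bigr) + 2\with(\mathcal{D}) - 2$ for a small constant $c_0$, and since $d$ is bounded — crucially, $\vert \mathcal{D}^\dagger\vert$ is bounded in terms of $\with(\mathcal{D})$ or absorbed by the $-2$ from the weight drop — the total strictly decreases. The clean way to see this is that the measure $\vert\mathcal{D}\vert + 2\with(\mathcal{D})$ was designed precisely so that the $+d$ contributed by duplicating $\mathcal{D}^\dagger$ is paid for by the coefficient $2$ on the removed $\with$R$1$: one shows $d \leq$ (size of the $\with$R$1$ subproof), and the factor $2$ leaves slack.

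The main obstacle I anticipate is the precise inequality in point~\ref{enum: eliminating ready cut 2}: one must verify that the coefficient $2$ in the measure $\vert\mathcal{D}\vert + 2\with(\mathcal{D})$ genuinely dominates the size $d$ of the duplicated cut-free subderivation $\mathcal{D}^\dagger$. This requires relating $\vert\mathcal{D}^\dagger\vert$ to structural data of the $\with$R$1$ instance being removed — presumably using that $\mathcal{D}^\dagger$ derives $\vdash V:A$ with $V$ a value and $A$ $\forall$-lazy, together with Proposition~\ref{prop: properties of eta logn nf} bounding the size of cut-free derivations of a fixed $\forall$-lazy judgement, or else a more direct combinatorial accounting of which rule instances are gained and lost. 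The argument for point~\ref{enum: eliminating ready cut 1} is more routine, amounting to a \enquote{topmost safe cut} extraction that bottoms out via Lemma~\ref{lem: existence of the safe exponential cut}; the only subtlety there is confirming that the chosen subderivation $\mathcal{D}'$ is itself $\forall$-lazy so that the lemma applies to it, which follows from the closedness and $\forall$-laziness side conditions on $\with$R$1$.
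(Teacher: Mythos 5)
Your argument for point~\ref{enum: eliminating ready cut 1} is correct and is essentially the paper's own: pick a safe cut of maximal height (Lemma~\ref{lem: existence of the safe exponential cut}.\ref{enum: safe}), observe that its left subderivation is $\forall$-lazy because the cut formula is a closed $\forall$-lazy type, and conclude it is cut-free by maximality together with Lemma~\ref{lem: existence of the safe exponential cut}.\ref{enum: copy-first}.

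Point~\ref{enum: eliminating ready cut 2} contains a genuine gap. Your bookkeeping gives $\vert\mathcal{D}^*\vert = \vert\mathcal{D}\vert + 1 + d$ with $d = \vert\mathcal{D}^\dagger\vert$, and you then claim the $+d$ is ``paid for by the coefficient $2$ on the removed $\with$R$1$.'' It cannot be: removing one instance of $\with$R$1$ decreases $2\cdot\with(\mathcal{D})$ by the fixed constant $2$, whereas $d$ is unbounded, so under your count $\vert\mathcal{D}^*\vert + 2\cdot\with(\mathcal{D}^*) = \vert\mathcal{D}\vert + 2\cdot\with(\mathcal{D}) + d - 1$, which is not a decrease for any $d\geq 1$. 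The observation you are missing is that the rewrite in Figure~\ref{fig: cut elimination LAM} \emph{discards} the third premise of $\with$R$1$, i.e.\ the subderivation $\mathcal{D}''\triangleleft\ \vdash U:A$ typing the value annotation, so the correct count is $\vert\mathcal{D}^*\vert = \vert\mathcal{D}\vert + 1 + \vert\mathcal{D}^\dagger\vert - \vert\mathcal{D}''\vert$. By Remark~\ref{rem: eta long nf} that premise is $\eta$-expanded (hence cut-free), and since $\mathcal{D}^\dagger$ is a cut-free derivation of a judgement with the same empty context and the same type, Proposition~\ref{prop: properties of eta logn nf}.\ref{point: properties eta long nf 2} yields $\vert\mathcal{D}^\dagger\vert\leq\vert\mathcal{D}''\vert$. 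It is this deletion --- not the weight coefficient --- that pays for duplicating $\mathcal{D}^\dagger$; the $2\cdot\with(\cdot)$ term is only needed to absorb the residual $+1$ in the rule count ($+3$ new rules against $-2$ old ones). You do cite Proposition~\ref{prop: properties of eta logn nf} in your ``obstacle'' paragraph, but you deploy it to bound $d$ against the whole $\with$R$1$ subproof (most of which survives the rewrite) rather than against the discarded value-annotation subderivation, so as written the inequality does not go through.
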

\begin{proof}
Concerning point~\ref{enum: eliminating ready cut 1},  by Lemma~\ref{lem: existence of the safe exponential cut}.\ref{enum: safe} 
$ \mathcal{D} $ contains at least one  safe cut. 
Let $R$ be the one with maximal height, we display as follows:
\begin{equation}
\label{eqn: above the cut}
\AxiomC{$\mathcal{D}'$}
\noLine
\UnaryInfC{$\vdash  N: B$}
\AxiomC{$\mathcal{D}_1$}
\noLine
\UnaryInfC{$x_1: B \vdash M_1: B_1$}
\AxiomC{$\mathcal{D}_2$}
\noLine
\UnaryInfC{$x_2: B \vdash M_2: B_2$}
\AxiomC{$\mathcal{D}''$}
\noLine
\UnaryInfC{$\vdash U:B$}
\RightLabel{$\with$R$1$}
\TrinaryInfC{$x:B \vdash \mathtt{copy}^{U}  x \mathtt{\ as\ } x_1, x_2 \mathtt{\ in\ }\langle M_1, M_2 \rangle :B_1 \with B_2$}
\RightLabel{$cut$}
\BinaryInfC{$ \vdash \mathtt{copy}^{U} N
 \mathtt{\ as\ } x_1, x_2 \mathtt{\ in \ }\langle M_1, M_2\rangle :B_1\with B_2$}
\DisplayProof
\end{equation}
Since $B$ is a $\forall$-lazy type, $\mathcal{D}'$ is a  $\forall$-lazy derivation. By Lemma~\ref{lem: existence of the safe exponential cut}.\ref{enum: safe} and maximality of $h(R)$, $\mathcal{D}'$ has no critical cut, hence  $\mathcal{D}'$ is cut-free  by Lemma~\ref{lem: existence of the safe exponential cut}.\ref{enum: copy-first}. Therefore,  $R$ is a ready cut.  

As for point~\ref{enum: eliminating ready cut 2}, let  $\mathcal{D}^*$ be the derivation obtained by eliminating a ready cut like~\eqref{eqn: above the cut} in $\mathcal{D}$ (see Figure~\ref{fig: cut elimination LAM}).  By Remark~\ref{rem: eta long nf}, $\mathcal{D}''$ is $\eta$-expanded, hence cut-free by definition. Since both $\mathcal{D}'$ and $\mathcal{D}''$ are $\forall$-lazy and cut-free, they  have no instances of $\with$R$1$ by Proposition~\ref{prop: lazyness propagation}.\ref{point 2: laziness propagation}, so that $\with(\mathcal{D^*})= \with(\mathcal{D})-1$.  Moreover,  $\vert \mathcal{D}' \vert \leq \vert \mathcal{D}''\vert$ by Proposition~\ref{prop: properties of eta logn nf}.\ref{point: properties eta long nf 2}. We have: 
$2 \cdot \vert \mathcal{D}'\vert +\vert \mathcal{D}_1\vert+\vert \mathcal{D}_2\vert +3 + 2 \cdot \with(\mathcal{D^*}) < \vert \mathcal{D}\vert+2 \cdot \with(\mathcal{D^*})+ 2=\vert \mathcal{D}\vert+2 \cdot( \with(\mathcal{D}^*)+1)$. Therefore,  $\vert \mathcal{D}^* \vert +  2\cdot \with(\mathcal{D^*}) <\vert \mathcal{D}\vert+2 \cdot \with(\mathcal{D})$.
\end{proof}

\begin{thm}[Cubic $\forall$-lazy cut-elimination]
\label{thm: cut elimination for LAM}
Let $\mathcal{D}\triangleleft \Gamma \vdash M: A$ be a $\forall$-lazy derivation. 
Then, the $\forall$-lazy cut-elimination reduces $\mathcal{D}$ to a cut-free $\mathcal{D}^\dagger\triangleleft \Gamma \vdash M^\dagger:A$ in $\mathcal{O}(\vert \mathcal{D} \vert^3)$ steps.  
\end{thm}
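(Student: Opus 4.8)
The plan is to prove this in stages, first eliminating the cuts that are neither critical nor copy-first (the commuting cuts and the symmetric cuts other than $(\with\mathrm{R}1,\with\mathrm{L}i)$), then dealing with the critical and copy-first cuts, and finally aggregating the step counts. For the first stage, I would define a suitable multiset/ordinal measure on derivations — essentially a lexicographic combination of the number of non-critical, non-copy-first cuts and the sizes of the subderivations above them — and show that each such $\forall$-lazy cut-elimination step (following the standard $\mathsf{LL}$ cut-elimination rules for commuting and symmetric cuts, plus the $(\with\mathrm{R}0,\with\mathrm{L}i)$ rule of Figure~\ref{fig: cut elimination LAM}) strictly decreases this measure. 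The key point here is that the $\forall$-lazy property is preserved by all these steps: by Proposition~\ref{prop: lazyness propagation}.\ref{point 1: laziness propagation}, none of the subderivations produced by pushing a cut upward can become non-$\forall$-lazy if the conclusion is $\forall$-lazy, so after this stage we reach a $\forall$-lazy derivation $\mathcal{D}'$ whose only cuts are critical or copy-first. A careful bookkeeping shows this takes $\mathcal{O}(\vert\mathcal{D}\vert^2)$ steps, since each commuting step can at worst push a cut past $\mathcal{O}(\vert\mathcal{D}\vert)$ rules and the total work telescopes; moreover $\vert\mathcal{D}'\vert = \mathcal{O}(\vert\mathcal{D}\vert^2)$ because commuting a cut past a $\with$R$0$ rule duplicates a subderivation (but only a \emph{cut-free} one bounded by Proposition~\ref{prop: properties of eta logn nf}).

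For the second stage, I would iterate the elimination of \emph{ready} cuts. By Lemma~\ref{lem: eliminating a lazy cut}.\ref{enum: eliminating ready cut 1}, as long as $\mathcal{D}'$ has critical cuts it has ready cuts; by part~\ref{enum: eliminating ready cut 2} of that lemma, eliminating one strictly decreases the measure $\vert\mathcal{D}'\vert + 2\cdot\with(\mathcal{D}')$. However, eliminating a ready cut reintroduces two fresh $(\mathcal{D}^\dagger\text{-vs-}\mathrm{premise})$ cuts which are commuting/symmetric, so after each ready-cut elimination I re-run the first stage to drive $\mathcal{D}'$ back to a derivation with only critical/copy-first cuts. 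Since $\with(\mathcal{D}')$ is bounded by $\vert\mathcal{D}\vert$ and strictly decreases at each ready-cut elimination, there are at most $\mathcal{O}(\vert\mathcal{D}\vert)$ such rounds; each round costs $\mathcal{O}(\vert\mathcal{D}\vert^2)$ cleanup steps on a derivation of size $\mathcal{O}(\vert\mathcal{D}\vert^2)$ — here the size bound must be tracked carefully, because the cut-free subderivation $\mathcal{D}''$ of $\vdash V:A$ pasted in twice has size bounded independently of the round by Remark~\ref{rem: eta long nf}, so sizes stay polynomially bounded. When no critical cuts remain, Lemma~\ref{lem: existence of the safe exponential cut}.\ref{enum: copy-first} guarantees no copy-first cuts remain either, so the derivation is cut-free. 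Multiplying $\mathcal{O}(\vert\mathcal{D}\vert)$ rounds by $\mathcal{O}(\vert\mathcal{D}\vert^2)$ steps per round gives the claimed $\mathcal{O}(\vert\mathcal{D}\vert^3)$ bound, and the fact that the term $M$ can only change to $M^\dagger$ with $M\to^* M^\dagger$ follows from the copy-cat design of the $\forall$-lazy rules together with Subject reduction (Theorem~\ref{thm: Subject Reduction for LAM}).

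The main obstacle I anticipate is the size-control part of the argument rather than the termination part: commuting a $cut$ upward past a $\with$R$0$ node (and, symmetrically, the ready-cut rule itself) duplicates a subderivation, so naively the derivation size could blow up exponentially across many rounds, exactly the phenomenon Proposition~\ref{prop: lemma size/computation explosion} warns about. The resolution — and the crux of the whole theorem — is that every subderivation that ever gets duplicated is \emph{cut-free} and $\forall$-lazy, hence $\eta$-expandable, hence bounded in size by the $\eta$-long normal form of its conclusion type via Proposition~\ref{prop: properties of eta logn nf}.\ref{point: properties eta long nf 2}; since the relevant types are subtypes of types occurring in the original $\mathcal{D}$, these bounds are uniform in $\vert\mathcal{D}\vert$, and one gets $\vert\mathcal{D}^{(i)}\vert = \mathcal{O}(\vert\mathcal{D}\vert^2)$ throughout. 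Making this quantitative invariant precise — stating exactly which quantity is preserved under both stages and checking it survives the interleaving of commuting-cut cleanup with ready-cut elimination — is the delicate part; once it is in place, the $\mathcal{O}(\vert\mathcal{D}\vert^3)$ count is a routine multiplication of the per-round step bound by the number of rounds.
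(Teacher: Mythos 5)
Your overall architecture coincides with the paper's: rounds that first clear all commuting cuts and then fire one symmetric or ready cut, termination governed by a lexicographic measure built from $\vert\mathcal{D}\vert+2\cdot\with(\mathcal{D})$ (via Lemma~\ref{lem: eliminating a lazy cut}) together with a height measure for the commuting phase, and Lemma~\ref{lem: existence of the safe exponential cut}.\ref{enum: copy-first} closing the argument once no critical cuts remain. The aggregation is also the same: $\mathcal{O}(\vert\mathcal{D}\vert)$ symmetric/ready firings, each preceded by a commuting phase of quadratically many steps.

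However, your size-control argument — which you yourself identify as the crux — rests on a false premise. You claim that commuting a cut upward past a $\with$R$0$ node duplicates a (cut-free) subderivation, and that intermediate derivations therefore grow to size $\mathcal{O}(\vert\mathcal{D}\vert^2)$. In $\mathsf{LAM}$ this situation cannot arise: $\with$R$0$ has an \emph{empty} context (Figure~\ref{fig: linearadditives}), so its conclusion never carries a cut formula on the left of the turnstile, and a cut whose right premise ends in $\with$R$1$ is classified as critical, not commuting. Hence no commuting step in $\mathsf{LAM}$ ever duplicates a subderivation; this is precisely the point of replacing the standard $\with$R of Figure~\ref{fig: The system IMALL2}, for which the duplication of Figure~\ref{fig:commutativeMALL} does occur, by the linear additive rules. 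The correct invariant is that commuting steps preserve $\vert\mathcal{D}\vert$ exactly, symmetric steps decrease it, and ready-cut steps decrease $\vert\mathcal{D}\vert+2\cdot\with(\mathcal{D})$; thus every intermediate derivation has size at most $3\cdot\vert\mathcal{D}\vert$, each commuting phase costs $\mathcal{O}(\vert\mathcal{D}\vert^2)$, and there are at most $3\cdot\vert\mathcal{D}\vert$ phases. Note that your own accounting does not close as stated: if intermediate derivations really had size $\Theta(\vert\mathcal{D}\vert^2)$, a commuting phase on such a derivation would cost $\mathcal{O}(\vert\mathcal{D}\vert^4)$ by your ``push past every rule'' estimate, yielding $\mathcal{O}(\vert\mathcal{D}\vert^5)$ overall rather than the claimed cubic bound. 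The only genuine duplication happens inside the ready-cut rule itself, where the duplicated derivation of $\vdash V:A$ is cut-free and its size is dominated by that of the $\eta$-expanded derivation of the premise $\vdash U:A$ already present in the $\with$R$1$ instance (Remark~\ref{rem: eta long nf}, Proposition~\ref{prop: properties of eta logn nf}.\ref{point: properties eta long nf 2}); this is exactly what the $2\cdot\with(\mathcal{D})$ slack in the measure absorbs.
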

\begin{proof}
Let us define a \textit{$\forall$-lazy cut-elimination strategy} divided into \textit{rounds}.   At each round:
\begin{enumerate}[\{1\}]
	\item \label{enum: round comm}  
	 we eliminate all the commuting instances of $ cut $;
	\item \label{enum: round sym} 
	if a  symmetric instance of $cut$ exists, we eliminate it; otherwise, all instances of $cut$ are either critical or copy-first, and we eliminate a \textit{ready} one, if any.
\end{enumerate}
We now show that the above $\forall$-lazy cut-elimination strategy terminates with a cut-free derivation.  
We proceed by induction on the lexicographical  order of the pairs 
$\langle \vert \mathcal{D}\vert+ 2\cdot \with(\mathcal{D}) , H(\mathcal{D}) \rangle$, where  $H(\mathcal{D})$ is the sum of the heights $h(\mathcal{D}')$ 
of all  subderivations $\mathcal{D}'$ of $\mathcal{D}$ whose conclusion is an instance of $ cut $. During~\ref{enum: round comm},  every commuting $\forall$-lazy cut-elimination step   moves an instance of $ cut $ upward, strictly decreasing $H(\mathcal{D})$ and leaving $ \vert \mathcal{D}\vert+2\cdot \with(\mathcal{D}) $ unaltered. During~\ref{enum: round sym}, every symmetric $\forall$-lazy cut-elimination step shrinks  $ \vert \mathcal{D}\vert $. If  only critical and copy-first cuts are in $\mathcal{D}$ then, by Lemma~\ref{lem: existence of the safe exponential cut}.\ref{enum: copy-first}, either $\mathcal{D}$ has critical cuts or it is cut-free. In the former case, by Lemma~\ref{lem: eliminating a lazy cut}.\ref{enum: eliminating ready cut 1} a ready cut exists. By  Lemma~\ref{lem: eliminating a lazy cut}.\ref{enum: eliminating ready cut 2}, if we  apply the corresponding $\forall$-lazy cut-elimination step $\mathcal{D}\red \mathcal{D}^*$, we have $ \vert \mathcal{D}^*\vert+ 2 \cdot \with(\mathcal{D}^*) < \vert \mathcal{D}\vert+ 2 \cdot \with(\mathcal{D}) $.

We now exhibit a bound on the number of 
$\forall$-lazy cut-elimination steps from $ \mathcal{D} $ to $ \mathcal{D}^\dagger $.
Generally speaking, we can represent a $\forall$-lazy cut-elimination strategy as:
\allowdisplaybreaks
\begin{align}
\label{align:cut-elimination bound diagram}
\mathcal{D}
= \mathcal{D}_0
\underset{cc_0}{\rightsquigarrow^*}
\mathcal{D}'_0
\underset{src_0}{\rightsquigarrow}
\mathcal{D}_1
\,\cdots \underset{cc_i}{\rightsquigarrow^*}
\mathcal{D}'_{i}
\underset{src_i}{\rightsquigarrow}
\mathcal{D}_{i+1}
\underset{cc_{i+1}}{\rightsquigarrow^*}\cdots\,
\mathcal{D}'_{n-1}
\underset{src_{n-1}}{\rightsquigarrow}
\mathcal{D}_n
\underset{cc_n}{\rightsquigarrow^*}
\mathcal{D}'_{n}
= \mathcal{D}^*
\end{align}
where,  for $ 0\leq i\leq n $ and $ 0\leq j\leq  n-1 $,  $ cc_i $ denotes a sequence of $\forall$-cut elimination steps applied to  commuting cuts, while $src_j$ denotes a  $\forall$-cut elimination step applied to either a symmetric  or a ready cut. A bound on the length of the sequence $ cc_i $ is $ \vert\mathcal{D}_{i}\vert^2 $ because every instance of rule in $ \mathcal{D}_{i} $ can, in principle, be commuted with every other. Moreover, $ \vert\mathcal{D}_{j+1}\vert  + 2 \cdot\with(\mathcal{D}_{j+1})< \vert\mathcal{D}'_{j}\vert + 2 \cdot\with(\mathcal{D}'_j)$,
for every $ 0\leq j\leq n-1 $. Finally, since  $ \vert\mathcal{D}_{i}\vert = \vert\mathcal{D}'_{i}\vert $ for all $0 \leq i \leq n$, we have $ n\leq \vert\mathcal{D}\vert + 2\cdot \with (\mathcal{D}) \leq 3 \cdot \vert \mathcal{D}\vert$. Therefore,  the total number of $\forall$-lazy cut-elimination steps in 
\eqref{align:cut-elimination bound diagram} is 
$ O(\vert\mathcal{D}\vert\cdot\vert\mathcal{D}\vert^2) $.
\end{proof}


Recalling that the normal form of a typable term in $\mathsf{LAM}$ exists by Theorem~\ref{thm:lazy normalization for LAML} and is unique by Remark~\ref{rem: uniqueness nf}, we have the following straightforward corollary.

\begin{cor}\label{cor: confluence and progress} Let $\mathcal{D}\triangleleft \Gamma  \vdash M:A$ be  a $\forall$-lazy derivation, and let  $M^\dagger$ be  the normal form of $M$. Then:
	\begin{itemize}
		\item 	 $M^\dagger$  is free from instances of $\mathtt{copy}$ and $\pi_i$;
		\item  if $\Gamma= \emptyset$ then $M^\dagger\in \mathcal{V}$. 
	\end{itemize}
\end{cor}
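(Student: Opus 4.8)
The plan is to obtain the corollary as a short consequence of three already-established facts: the cubic $\forall$-lazy cut-elimination theorem (Theorem~\ref{thm: cut elimination for LAM}), the structural description of cut-free $\forall$-lazy derivations (Proposition~\ref{prop: lazyness propagation}.\ref{point 2: laziness propagation}), and the uniqueness of normal forms for typable terms (Theorem~\ref{thm:lazy normalization for LAML} together with Remark~\ref{rem: uniqueness nf}). First I would apply Theorem~\ref{thm: cut elimination for LAM} to $\mathcal{D}$, producing a cut-free derivation $\mathcal{D}^\dagger\triangleleft\Gamma\vdash N:A$ with $\mathcal{D}\red^*\mathcal{D}^\dagger$; I keep the cut-free term provisionally named $N$ until it is identified with the corollary's $M^\dagger$.

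The crucial intermediate claim --- and this is exactly what the $\forall$-lazy cut-elimination rules were tailored to guarantee --- is that the annotating term is tracked faithfully along $\red$: a commuting $\forall$-lazy cut-elimination step leaves the annotating term unchanged, because in the rule $cut$ the meta-substitution $M[N/x]$ is already carried out and, thanks to the linear constraints of $\mathsf{LAM}$ (Lemma~\ref{lem: linearity}) and the closure conditions of Definition~\ref{defn: the system LAM}, commuting a $cut$ with the rule immediately below it neither causes variable capture nor blocks the involved substitutions; and each symmetric step $(\multimap\text{R},\multimap\text{L})$, $(\with\text{R}0,\with\text{L}i)$, $(\forall\text{R},\forall\text{L})$, as well as each \emph{ready}-cut step, fires exactly one redex in the annotating term (a $\beta$-step or a projection step from~\eqref{eqn: reduction rules lambda calculus with pairs}, and a $\mathtt{copy}$-step from~\eqref{eqn: reduction rules LAML}, respectively). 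Hence $\mathcal{D}\red^*\mathcal{D}^\dagger$ entails $M\rightarrow^* N$.

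Now $\mathcal{D}^\dagger$ has conclusion $\Gamma\vdash N:A$, whose underlying judgement is $\forall$-lazy (it has the same context and type as $\Gamma\vdash M:A$, and $\forall$-laziness does not depend on the term annotation), and $\mathcal{D}^\dagger$ is cut-free; so Proposition~\ref{prop: lazyness propagation}.\ref{point 2: laziness propagation} tells us that $N$ is normal, contains no occurrences of $\mathtt{copy}$ or $\pi_i$, and belongs to $\mathcal{V}$ whenever $\Gamma=\emptyset$. By Theorem~\ref{thm:lazy normalization for LAML} and Remark~\ref{rem: uniqueness nf}, $M$ has a \emph{unique} normal form; since $M\rightarrow^* N$ and $N$ is normal, necessarily $N=M^\dagger$, and transporting the properties of $N$ through this identity yields exactly the two bulleted statements.

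The main obstacle is the faithful-tracking claim of the second paragraph: one has to check that every $\forall$-lazy cut-elimination step projects to a legitimate reduction step (or to the identity, in the case of commuting conversions) on the annotating term, uniformly over all cases of $cut$. This is routine but is precisely the place where the linearity lemma and the Definition~\ref{defn: the system LAM} closure conditions are genuinely used; once it is in place, the corollary follows by a direct chaining of the cited results.
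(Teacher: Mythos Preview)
Your proof is correct and follows the same approach as the paper's: show that each $\forall$-lazy cut-elimination step induces $M\rightarrow^* M'$ on the annotating terms, apply Theorem~\ref{thm: cut elimination for LAM} to obtain a cut-free derivation, and conclude via Proposition~\ref{prop: lazyness propagation}.\ref{point 2: laziness propagation} (your explicit appeal to uniqueness of normal forms via Remark~\ref{rem: uniqueness nf} makes rigorous what the paper leaves implicit in its choice of the name $M^\dagger$). One small imprecision worth correcting: the symmetric step $(\forall\text{R},\forall\text{L})$ and the axiom cuts $(X,ax)$, $(ax,Y)$ do not fire a term redex but leave the annotating term unchanged, which is harmless for your argument since only $M\rightarrow^* M'$ is required.
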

\begin{proof}
	It suffices to observe that,   if a $\forall$-lazy derivation $\mathcal{D}\triangleleft \Gamma \vdash M: A$  rewrites to $\mathcal{D}'\triangleleft \Gamma \vdash M': A$ by a $\forall$-lazy cut-elimination step, then $M\rightarrow^* M'$ (we consider terms up to $\alpha$-equivalence). By Theorem~\ref{thm: cut elimination for LAM} a cut-free $\mathcal{D}^\dagger$ exists such that  $\mathcal{D}^\dagger\triangleleft \Gamma \vdash M^\dagger: A$, for some $M^\dagger$.  We conclude by Proposition~\ref{prop: lazyness propagation}.\ref{point 2: laziness propagation}.
\end{proof}

As we already observed in Section~\ref{subsec: linear additives},  a   $\mathtt{ copy}$ construct  behaves  quite like a  suspended substitution. So, a normal form with shape $\mathtt{copy}^{V} \, N  \mathtt{ \ as \ }x_1, x_2 \mathtt{ \ in\  }\langle P, Q \rangle$  represents a substitution that cannot be performed. Corollary~\ref{cor: confluence and progress} states that, whenever a term has $\forall$-lazy type, its normal form is always free from these \enquote{unevaluated} expressions. This result is analogous to Girard's linear normalization by lazy evaluation for terms having lazy type in $\mathsf{IMALL}_2$ (see Section~\ref{subsec:blowuplazy}). However, the above corollary allows us to gain something more.  On the one hand, indeed,  $\mathsf{LAM}$'s typable terms enjoy linear \textit{strong} normalization (Theorem~\ref{thm:lazy normalization for LAML}).  Therefore, as opposed to $\mathsf{IMALL}_2$,  the system $\mathsf{LAM}$ does not require specific evaluation strategies to avoid exponential reductions. On the other hand, as already remarked,   the  $\forall$-lazy types are more expressive than the  lazy ones (see Figure~\ref{fig: foralllazy}).

\section{Comparing $\mathsf{LAM}$ and $\mathsf{IMLL}_2$}

Following~\cite{curzi2019type}, we exploit the mechanisms of  linear erasure and duplication studied by Mairson and Terui~\cite{mairsonlinear, mairson2003computational} to define a sound translation of $\mathsf{LAM}$ into $\mathsf{IMLL}_2$ (Theorem~\ref{thm: translations for LAM}). A  fundamental result of this section is Theorem~\ref{thm: exponential ompression for LAM}, stating that derivations of $\mathsf{LAM}$ may exponentially compress linear $\lambda$-terms of $\mathsf{IMLL}_2$. On the one hand, these results witness that the former system is not algorithmically more expressive than the latter. On the other hand, in a way similar to~\cite{curzi2019type},  they show that $\mathsf{LAM}$ is able to compactly represent Mairson and Terui's linear erasure and duplication.  

\subsection{Linear Erasure and Duplication in $\mathsf{IMLL}_2$}

Mairson has shown in~\cite{mairsonlinear} that $\mathsf{IMLL}$ is expressive enough to encode boolean circuits. This result was later reformulated by Mairson and Terui  in $\mathsf{IMLL}_2$  to prove results about the complexity of cut-elimination 
\cite{mairson2003computational}, where the advantage of working with $\mathsf{IMLL}_2$   is to assign uniform types to structurally related linear $\lambda$-terms.  In the latter encoding,  the boolean values \enquote{true} and \enquote{false} are  represented by $\mathtt{tt}\triangleq	\lambda x. \lambda y.  x\otimes y  $   and  $\mathtt{ff}\triangleq \lambda x. \lambda y.  y\otimes x $ respectively, with type $\mathbf{B}\triangleq\forall \alpha. \alpha \multimap \alpha \multimap \alpha \otimes \alpha$. The key step of the encoding is the existence of an  \enquote{eraser} $\mathtt{E}_{\mathbf{B}}$ and a \enquote{duplicator} $\mathtt{D}_{\mathbf{B}}$ for the Boolean data type  $\mathbf{B}$:
 \allowdisplaybreaks
\begin{align}
\label{eqn: erasure booleans}
\mathtt{E}_{\mathbf{B}}& \triangleq
\lambda z. \mathtt{let\ }z\mathbf I\mathbf I \mathtt{\ be \ } x\otimes y \mathtt{ \ in \ }(\mathtt{let \ }y \mathtt{ \ be \ } \mathbf I \mathtt{ \ in \ }x) :\mathbf{B}\multimap \mathbf{1}
\\
\label{eqn: duplication booleans}
\mathtt{D}_{\mathbf{B}}& \triangleq
\lambda z. \mathtt{proj}_1 (z (\mathtt{tt}\otimes \mathtt{tt}) (\mathtt{ff}\otimes \mathtt{ff})) : \mathbf{B}\multimap \mathbf{B}\otimes \mathbf{B}
\\
\label{eqn: boolean projection}
\mathtt{proj}_1 &\triangleq
\lambda z. \mathtt{let\ }z \mathtt{\ be \ }x\otimes y \mathtt{\ in\ }(\mathtt{let \ }\mathtt{E}_{\mathbf{B}} \, y \mathtt{\ be \ }\mathbf I \mathtt{\ in \ }x) :
(\mathbf{B}\otimes \mathbf{B}) \multimap\mathbf{B}
\end{align}
where $\mathtt{proj}_1$ is the linear $\lambda$-term projecting the first element of a pair. For $M\in \{ \mathtt{ tt} , \mathtt{ff}\}$, we have $ \mathtt{E}_{\mathbf{B}} \,M \rightarrow_{\beta}^* \mathbf{I}$ and $\mathtt{D}_{\mathbf{B}}\,M\rightarrow^*_\beta  M\otimes  M$. In other words,  linear erasure involves a stepwise \enquote{data consumption} process, while linear duplication  works  \enquote{by selection and erasure}: it contains  \textit{both} possible outcomes of duplication $ \mathtt{tt}\otimes \mathtt{tt} $ and $ \mathtt{ff}\otimes \mathtt{ff} $, and it selects the desired pair   by linearly erasing the other one.

In~\cite{mairson2003computational}, Mairson and Terui  generalize the above mechanism of linear erasure and duplication to the class of closed $\Pi_1$ types:

\begin{defn}[$\Pi_1$ types~\cite{mairson2003computational}]
\label{defn:Pi and Sigma types} 
A type of $\mathsf{IMLL}_2$ is a $\Pi_1$ type if it contains no negative occurrences of  $\forall$. 
\end{defn}

Closed  $\Pi_1$ types represent \emph{finite} data types, because they admit only finitely many   closed and normal inhabitants. An example is $\mathbf{B}$, representing the Boolean data type.

The fundamental result about closed $\Pi_1$ types is the following:

\begin{thm}[Erasure and duplication~\cite{mairson2003computational}]\label{thm: duplication and erasure} 
{\ }
\begin{enumerate}
\item \label{point: erasure} For any closed $\Pi_1$ type $A$ there is a  linear $\lambda$-term $\mathtt{E}_{A}$  of type $A\multimap \mathbf{1}$ such that, for all closed and normal inhabitant $M$ of $A$, $\mathtt{E}_{A} \, M \rightarrow_{\beta  }^* \mathbf I$.
\item \label{point: duplication} For any closed  and \emph{inhabited} $\Pi_1$ type $A$ there is a  linear $\lambda$-term $\mathtt{D}_{A}$  of type $A\multimap A \otimes A$ such that, for all closed and normal inhabitant $M$ of $A$,  $\mathtt{D}_{A} \,M \rightarrow_{\beta \eta}^* 
 M\otimes M $.
\end{enumerate}
\end{thm}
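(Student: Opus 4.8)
The plan is to prove the two statements by simultaneous induction on the structure of the closed $\Pi_1$ type $A$, building the eraser $\mathtt{E}_A$ and duplicator $\mathtt{D}_A$ by recursion on $A$. Recall that a $\Pi_1$ type has no negative occurrences of $\forall$, so by peeling off the outermost connectives we end up with cases $A = \alpha$ (a type variable), $A = B \multimap C$, $A = B \otimes C$ (recalling $\otimes$ is second-order defined), $A = B \with C$, and $A = \forall \alpha . B$ where the bound quantifier is necessarily in positive position. Since $A$ is closed, the type-variable case $A=\alpha$ cannot occur at the top level; it only arises under a $\forall$, where the relevant variable will have been instantiated.

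First I would handle erasure (point~\ref{point: erasure}). The base-like case is when $A$ is (after unfolding $\otimes$, $\mathbf 1$) essentially a tensor of smaller closed $\Pi_1$ types or a unit: for $\mathbf 1$ take $\mathtt{E}_{\mathbf 1} \triangleq \lambda x. x$, and for $A \otimes B$ set $\mathtt{E}_{A\otimes B}\triangleq \lambda z.\, \mathtt{let}\ z\ \mathtt{be}\ x\otimes y\ \mathtt{in}\ (\mathtt{let}\ \mathtt{E}_B\, y\ \mathtt{be}\ \mathbf I\ \mathtt{in}\ \mathtt{E}_A\, x)$, which consumes both components and glues the two resulting $\mathbf I$'s. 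For $A = B \multimap C$ with $B$ closed $\Pi_1$ and inhabited, pick a closed normal inhabitant $N_B$ of $B$ (which exists; if $B$ is not inhabited the function type is still erasable by feeding it a term built from erasers, but the standard argument uses that a $\Pi_1$ negative-$\forall$-free domain admits a canonical closed inhabitant since $A$ itself is inhabited in the cases we need), and set $\mathtt{E}_{B\multimap C}\triangleq \lambda z.\, \mathtt{E}_C (z\, N_B)$. For $A = \with$-pairs, $\mathtt{E}_{B\with C}\triangleq \lambda z.\mathtt{E}_B(\pi_1 z)$ — since we only need erasure we can discard one projection. For $A = \forall \alpha. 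B$, instantiate $\alpha$ with $\mathbf 1$ (or with another small closed type forcing $B\langle \mathbf 1/\alpha\rangle$ to stay $\Pi_1$) and recurse. In each case the reduction $\mathtt{E}_A\, M \to_\beta^* \mathbf I$ follows by unfolding $M$ into its normal form (every closed normal inhabitant of a $\Pi_1$ type has a canonical shape matching the outermost connective) and applying the induction hypothesis to the immediate subterms.

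Then I would build the duplicator (point~\ref{point: duplication}), which is the more delicate part and, I expect, the main obstacle. The idea, following Mairson--Terui, is ``duplication by selection and erasure'': for $A = \forall\alpha.B$, define $\mathtt{D}_A\triangleq \lambda z.\, \mathtt{proj}_1^{A}\big(z[A]\, c_1 \cdots c_k\big)$ where the $c_i$ enumerate, for each way a closed normal inhabitant of $A$ can be formed, a pre-packaged pair $v_i \otimes v_i$ of the corresponding value; the instantiation $z[A]$ plus the arguments $c_i$ causes $z$ to select exactly the pair corresponding to the actual inhabitant $M$, and then $\mathtt{proj}_1^A$ erases (using $\mathtt{E}_{A\otimes A}$ or rather the appropriate eraser on the ``sum'' of the non-selected alternatives) all the unwanted pairs and keeps $M\otimes M$. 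Making this precise requires a careful induction producing, alongside $\mathtt{D}_A$, the auxiliary projection term and verifying the reduction $\mathtt{D}_A\, M \to_{\beta\eta}^* M\otimes M$; the $\eta$ is needed because after selection one typically has an $\eta$-expanded copy of $M$ rather than $M$ on the nose. For the structural cases $\mathbf 1$, $B\otimes C$, $B\multimap C$, $B\with C$ one pushes duplication through using the eraser from point~\ref{point: erasure} on the domains of arrows and on discarded $\with$-components, exactly as in the Boolean case $\mathtt{D}_{\mathbf B}$, $\mathtt{proj}_1$ shown in~\eqref{eqn: duplication booleans}--\eqref{eqn: boolean projection}. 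The hard part is bookkeeping the finitely-many-inhabitants structure of closed $\Pi_1$ types uniformly enough that the induction goes through — in particular, ensuring that instantiating $\forall$ keeps types in the $\Pi_1$ fragment and that the ``selection'' term is typable in $\mathsf{IMLL}_2$ — but no genuinely new idea beyond the Boolean prototype is required, and I would simply cite~\cite{mairson2003computational} for the full verification while reproducing the construction schema above.
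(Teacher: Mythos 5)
The paper does not actually prove this statement: it is imported wholesale from Mairson and Terui \cite{mairson2003computational}, with the detailed construction of $\mathtt{D}_A$ deferred to \cite{curzi2019type}, and the text only supplies the informal three-step description of $\mathtt{D}_A$ (expand to $\eta$-long normal form, compile to a boolean tuple, decode via a lookup table). Measured against that intended construction, your erasure half is broadly in the right spirit --- $\mathtt{E}_A$ really is built compositionally, instantiating the (necessarily positive) quantifiers and feeding canonical closed inhabitants to the domains of arrows --- but you leave the key auxiliary fact as a hedge: the domain $B$ of an arrow is a \emph{negative} subtype of a $\Pi_1$ type, hence has no positive $\forall$, and the statement you need is a mutual induction producing, alongside erasers for the $\Pi_1$ side, canonical closed inhabitants for this dual class (built from the erasers of \emph{their} domains). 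Without stating that lemma the case $B\multimap C$ is unjustified. Also, $\with$ is not a connective of $\mathsf{IMLL}_2$ and $\pi_1$ is not a linear $\lambda$-term, so the case $B\with C$ should not appear in a proof about $\Pi_1$ types of $\mathsf{IMLL}_2$.

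The duplication half has a genuine gap. Duplicators cannot be ``pushed through'' the type structurally: given $f$ of type $B\multimap C$ together with $\mathtt{D}_B,\mathtt{D}_C,\mathtt{E}_B,\mathtt{E}_C$, there is no linear term producing $f\otimes f$ --- applying $f$ consumes it and yields one $C$, from which two copies of $f$ cannot be recovered --- so the claim that the arrow case goes through ``using the eraser on the domains'' is false, and this is precisely why the theorem is hard. Your term $\lambda z.\,\mathtt{proj}_1(z[A]\,c_1\cdots c_k)$ likewise only typechecks when $A$ has the special selector shape of $\mathbf{B}$ (quantified head, arguments of type $\alpha$); for, say, $\forall\alpha.(\alpha\multimap\alpha)\multimap\alpha\multimap\alpha$ the instantiated $z$ does not accept a list of pre-packaged pairs. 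The actual construction is global rather than compositional: enumerate the finitely many $\eta$-long normal inhabitants of the closed $\Pi_1$ type $A$, build an \emph{encoder} $A\multimap\mathbf{B}\otimes\cdots\otimes\mathbf{B}$ identifying which inhabitant the input is, and a \emph{decoder} realized as a nested if-then-else table storing every pair $v_i\otimes v_i$ together with a default pair (this is where the inhabitation hypothesis enters), selecting one entry and linearly erasing the rest. That intermediate boolean encoding is the essential idea of \cite{mairson2003computational}, not bookkeeping, and it is exactly what your sketch omits.
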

We call $\mathtt{E}_A$ \textit{eraser} and $\mathtt{D}_A$ \textit{duplicator} of $A$. 
 Intuitively, by taking as input a closed and normal inhabitant  $M$ of closed $\Pi_1$ type $A$, $\mathtt{D}_A$ implements the following three main operations:
\begin{enumerate}[(1)]
	\item \label{enumerate: duplication step 1}
	\enquote{expand} $ M $ to an $ \eta $-long normal form of $A$, let us say $M'$;
	
	\item \label{enumerate: duplication step 2}
    compile $ M' $ to a linear $ \lambda $-term
    $ \lceil M' \rceil $ which encodes $ M' $ as 
    a boolean tuple;
	
	\item \label{enumerate: duplication step 3}
    copy and decode $ \lceil M' \rceil $ obtaining 
    $  M'\otimes M'  $, which $\eta$-reduces to $  M\otimes M  $.  
\end{enumerate}
Point~\ref{enumerate: duplication step 3} implements Mairson and Terui's \enquote{duplication by selection and erasure} discussed for the type $\mathbf{B}$, and requires  a nested  series of  \texttt{if}-\texttt{then}-\texttt{else}  playing the role of a look-up table that stores all  pairs of closed and normal inhabitants of $A$  (which are always finite, as already observed). Each pair represents a possible outcome of  duplication. Given a boolean tuple $ \lceil M' \rceil $  as input, the nested \texttt{if}-\texttt{then}-\texttt{else} select the corresponding pair 
    $  M'\otimes  M'  $, erasing all the remaining \enquote{candidates}. The inhabitation condition for $A$ stated in Theorem~\ref{thm: duplication and erasure}.\ref{point: duplication} assures the existence of a default pair $  N\otimes N  $,  a sort of \enquote{exception} that we ``throw'' if the boolean tuple in input  encodes no closed normal inhabitant of  $A$.
    
Point~\ref{point: duplication} of Theorem~\ref{thm: duplication and erasure} was only   sketched in~\cite{mairson2003computational}. A detailed proof of the construction of   $\mathtt{D}_A$ is in~\cite{curzi2019type}, which also estimates the complexity of duplicators and erasers:

\begin{prop}[Size  of $\mathtt{E}_A$ and  $\mathtt{D}_{A}$~\cite{curzi2019type}] 
\label{prop: size of duplicator}  If  $A$ is a closed $\Pi_1$ type, then $\vert \mathtt{E}_{A} \vert \in \mathcal{O}(\vert A \vert)$. Moreover, if $A$ is inhabited, then $\vert \mathtt{D}_{A} \vert \in \mathcal{O}(  2^{\vert A \vert^2})$.
\end{prop}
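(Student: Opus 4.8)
The plan is to establish the two size bounds separately, by induction on the structure of the closed $\Pi_1$ type $A$, following the inductive construction of $\mathtt{E}_A$ and $\mathtt{D}_A$ sketched in~\cite{mairson2003computational} and detailed in~\cite{curzi2019type}. The bound $\vert \mathtt{E}_A \vert \in \mathcal{O}(\vert A \vert)$ is the easy part: the eraser is built compositionally from the erasers of the immediate (positive) subtypes, and at each connective ($\multimap$, $\otimes$, $\forall$) one only prepends a constant-size linear $\lambda$-term that applies subsidiary erasers to the components and reassembles the unit via $\mathtt{let}\ \_\ \mathtt{be}\ \mathbf I\ \mathtt{in}\ \_$. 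So $\vert \mathtt{E}_A \vert$ satisfies a recurrence of the form $\vert \mathtt{E}_A \vert \leq \vert \mathtt{E}_{B} \vert + \vert \mathtt{E}_{C} \vert + c$ for the binary connectives and $\vert \mathtt{E}_{\forall \alpha. B} \vert \leq \vert \mathtt{E}_{B\langle \mathbf 1 / \alpha \rangle}\vert + c$ (instantiating the bound variable at a fixed closed type such as $\mathbf 1$ so that the subtype stays closed $\Pi_1$), which unfolds to a linear bound in $\vert A \vert$. One must be slightly careful that instantiating $\forall$ does not blow up the measure; using $\mathbf 1$, whose size is a constant, keeps $\vert B\langle \mathbf 1/\alpha\rangle \vert$ linear in $\vert B \vert$, so the induction still goes through on a linear measure.

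The duplicator bound is the substantive part, and it reflects the three-phase construction recalled in the excerpt: (1) $\eta$-expand the input to an $\eta$-long normal form, (2) compile that normal form to a boolean tuple $\lceil M' \rceil$, (3) run a nested \texttt{if}-\texttt{then}-\texttt{else} look-up table over \emph{all} closed normal inhabitants of $A$ to select and decode the pair $M'\otimes M'$. Phases (1) and (2) contribute only polynomially in $\vert A \vert$: the $\eta$-expansion term and the boolean-tuple encoding have size polynomial in $\vert A \vert$, and so does the number of boolean positions in $\lceil M'\rceil$, which is $\mathcal{O}(\vert A\vert)$ (hence the tuple type lives in $\mathcal{O}(\vert A\vert)$ space as well). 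The dominant term comes from phase (3). The key quantitative lemma is that the number $k_A$ of closed $\eta$-long normal inhabitants of a closed $\Pi_1$ type $A$ is at most $2^{\mathcal{O}(\vert A \vert)}$ — this is proved by an induction on $A$ counting normal forms at each connective, using that a $\Pi_1$ type has no negative $\forall$ so the normal inhabitants are built by a bounded branching determined by the positive structure. The look-up table stores one entry per \emph{pair} of inhabitants, i.e.\ $k_A^2 \leq 2^{\mathcal{O}(\vert A\vert)}$ entries, and each entry is a term of size polynomial in $\vert A \vert$ (a constant-depth-per-level decoder composed with erasers $\mathtt{E}_A$ of linear size). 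Hence $\vert \mathtt{D}_A \vert \leq k_A^2 \cdot \mathrm{poly}(\vert A\vert) + \mathrm{poly}(\vert A \vert) \in 2^{\mathcal{O}(\vert A\vert)}$. Squaring the exponent to absorb the polynomial factor and the $\mathcal{O}(\cdot)$ inside the exponent, we get the stated $\vert \mathtt{D}_A\vert \in \mathcal{O}(2^{\vert A \vert^2})$.

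I expect the main obstacle to be the bookkeeping in the counting lemma $k_A \in 2^{\mathcal{O}(\vert A\vert)}$ and, relatedly, making the recurrence for $\vert \mathtt{D}_A\vert$ clean. The subtlety is that $\mathtt{D}_A$ is \emph{not} built recursively from the $\mathtt{D}_B$ of its subtypes in an obvious compositional way — the look-up table is global to $A$ — so the induction is really on the enumeration of normal inhabitants rather than on the term structure of the duplicators, and one has to track how $\eta$-long normal forms at a $\multimap$ or $\forall$ decompose. The cleanest route is: first prove the inhabitant-counting bound $k_A \leq 2^{c\vert A\vert}$ by structural induction on $A$ (the $\multimap$ case multiplies the counts of subtype inhabitants, the $\otimes$ and $\forall$ cases are similar, atoms contribute a constant), then observe the table has $k_A^2$ rows each of $\mathrm{poly}(\vert A \vert)$ size, then bound $k_A^2 \cdot \mathrm{poly}(\vert A \vert) \leq 2^{\vert A \vert^2}$ for $\vert A \vert$ large enough (adjusting the constant in the $\mathcal{O}$). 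All the remaining pieces — the size of the $\eta$-expander, of the encoder, and of the per-row decoder — are routine term-size computations that I would only sketch, citing~\cite{curzi2019type} for the explicit constructions.
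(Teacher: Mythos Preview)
The paper does not actually prove this proposition: it is stated with a citation to~\cite{curzi2019type} and used as a black box. So there is no ``paper's own proof'' to compare against; your proposal is a standalone reconstruction of the argument behind the cited result.

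As such a reconstruction, your outline is broadly sound and matches the informal description the paper gives of the Mairson--Terui construction (the three phases, the look-up table). Two small points are worth tightening. First, the look-up table does not have $k_A^2$ entries: it has one entry per closed normal inhabitant $M$, storing the \emph{diagonal} pair $M\otimes M$, so $k_A$ entries suffice. This overcount does not spoil the bound, but it misdescribes the construction. Second, your final step --- ``squaring the exponent to absorb the polynomial factor'' --- is not the right justification: $2^{c\vert A\vert}\cdot \mathrm{poly}(\vert A\vert)$ is already $2^{\mathcal{O}(\vert A\vert)}$, so if the inhabitant count were really $2^{\mathcal{O}(\vert A\vert)}$ you would get a bound strictly tighter than the stated $\mathcal{O}(2^{\vert A\vert^2})$. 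The quadratic exponent in~\cite{curzi2019type} comes from a coarser analysis of the encoding/decoding machinery (and possibly of the inhabitant count under $\forall$-instantiation), not from a poly-times-exponential product; since you are only asked to meet the stated bound, this is harmless, but the reasoning as written is slightly off. For a fully rigorous version you would need to go back to the explicit constructions in~\cite{curzi2019type}, as you yourself note.
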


\subsection{A  Translation of $\mathsf{LAM}$ Into $\mathsf{IMLL}_2$ and Exponential Compression}
Following~\cite{curzi2019type}, we  define  a translation $(\_)^\bullet$ from derivations of $\mathsf{LAM}$ into linear $\lambda$-terms with type in $\mathsf{IMLL}_2$. It   maps  closed $\forall$-lazy types into closed $\Pi_1$ types, and  instances of the  inference rules $\with$R$1$ and $\with$L$i$ into, respectively,  duplicators and erasers of closed $\Pi_1$ types. We prove that the translation is sound and   the linear $\lambda$-term $\mathcal{D}^\bullet$ associated with a derivation  $\mathcal{D}$ of $\mathsf{LAM}$ has  size that can be  exponential with respect to the size of $\mathcal{D}$.

\begin{defn} [From $ \mathsf{LAM} $ to $ \mathsf{IMLL}_2 $]
\label{defn: translation LAM  into IMLL2} 
We define a map $(\_)^\bullet$  translating  a derivation 
$\mathcal{D} \triangleleft\Gamma \vdash  M: A $ of $\mathsf{LAM}$ into a linear $\lambda$-term $\mathcal{D}^\bullet$ such that  $\Gamma^\bullet \vdash \mathcal{D}^\bullet :A^\bullet$ is derivable in $\mathsf{IMLL}_2$.
\begin{enumerate}
\item The map $(\_)^\bullet$  is defined on types of $\Theta_\with$ by induction on their structure:
\allowdisplaybreaks
\begin{align*}
\alpha^\bullet &\triangleq \alpha\  &  (A \with B)^\bullet &\triangleq A^\bullet \otimes B^\bullet\\
(A \multimap B)^\bullet &\triangleq A^\bullet \multimap B^\bullet & (\forall \alpha . A)^\bullet &\triangleq \forall \alpha. A^\bullet  
\enspace .
\end{align*}
Notice that $(A \langle B/ \alpha \rangle)^\bullet = A^\bullet \langle B^\bullet/ \alpha \rangle$.  If  $\Gamma= x_1: A_1, \ldots, x_n: A_n$,  we set $\Gamma^\bullet\triangleq  x_1: A_1^\bullet, \ldots, x_n: A_n^\bullet$.
\item \label{point: translation on derivations} 
The map $(\_)^\bullet$ is defined on derivations $\mathcal{D}\triangleleft \Gamma \vdash M:A$ of $\mathsf{LAM}$ by induction on the last rule:
\begin{enumerate}[(a)]
\item if   $\mathcal{D}$ is $ax$ with conclusion $x: A \vdash x:A$, then $\mathcal{D}^\bullet\triangleq x$;
\item if  $\mathcal{D}$  has last rule $cut$ with premises $\mathcal{D}_1\triangleleft \Delta \vdash N:B$ and $\mathcal{D}_2\triangleleft \Sigma, x:B \vdash P: A$, where  $M=P[N/x]$,   then  $\mathcal{D}^\bullet\triangleq \mathcal{D}_2^\bullet [\mathcal{D}_1^\bullet/x]$;
\item  if $\mathcal{D}$ has last rule $\multimap$R with premise $\mathcal{D}_1 \triangleleft \Gamma,x: B \vdash N:C$ and $M= \lambda x.N$,   then $\mathcal{D}^\bullet\triangleq \lambda x. \mathcal{D}_1^\bullet$; 
\item if $\mathcal{D}$ has last rule $\multimap$L with premises $\mathcal{D}_1\triangleleft \Delta \vdash N:B$ and $\mathcal{D}_2\triangleleft \Sigma, x:C \vdash P: A$, where $\Gamma= \Delta, \Sigma, y: B \multimap C$, then $\mathcal{D}^\bullet \triangleq \mathcal{D}_2^\bullet[y \mathcal{D}^\bullet _1/x]$;
\item  if $\mathcal{D}$ has last rule $\with$R$0$ with premises $\mathcal{D}_1\triangleleft  \vdash N_1:B_1$ and $\mathcal{D}_2\triangleleft  \vdash N_2: B_2$ then $\mathcal{D}^\bullet \triangleq  \mathcal{D}_1^\bullet\otimes \mathcal{D}_2^\bullet $;
\item \label{point: translation eraser} if  $\mathcal{D}$ has last rule  $\with$L$i$ with premise $\mathcal{D}_1\triangleleft \Delta, x_i:B_{i}  \vdash N:A$, where $\Gamma= \Delta, x: B_1 \with B_2$,   then $\mathcal{D}^\bullet \triangleq\mathtt{let \ }x \mathtt{ \ be\  }x_1\otimes  x_2 \mathtt{ \ in \ } (\mathtt{let \ } \mathtt{E}_{B^\bullet_{3-i}} x_{3-i} \mathtt{ \ be \ }\mathbf{I} \mathtt{ \ in \ }\mathcal{D}^\bullet_1)$, where  $\mathtt{E}_{B^\bullet_{3-i}}$ is the eraser of  $B^\bullet_{3-i}$;
\item \label{point: translation duplicator} if  $\mathcal{D}$ ends with $\with$R$1$ with premises $\mathcal{D}_1\triangleleft x_1: B \vdash N_1:B_1$,  $\mathcal{D}_2\triangleleft x_2:B \vdash N_2: B_2$, and $\mathcal{D}_3 \triangleleft \vdash V^\bullet :B^\bullet$ then  $\mathcal{D}^\bullet\triangleq \mathtt{ \ let \ }\mathtt{D}_{B^\bullet} x \mathtt{\ be \ }x_1\otimes  x_2  \mathtt{ \ in \ }  \mathcal{D}_1^\bullet \otimes \mathcal{D}_2^\bullet$, where $\mathtt{D}_{B^\bullet}$ is the duplicator  of $B^\bullet$;
\item   if  $\mathcal{D}$ has last rule $\forall$R with premise $\mathcal{D}_1 \triangleleft \Gamma \vdash M:B\langle \gamma / \alpha \rangle$  then $\mathcal{D}^\bullet \triangleq \mathcal{D}_1^\bullet$;
\item   if $\mathcal{D}$ has last rule $\forall$L with premise $\mathcal{D}_1 \triangleleft \Delta, x: B \langle C/ \alpha \rangle \vdash M:A$, where $\Gamma= \Delta, x:\forall \alpha. B$,  then $\mathcal{D}^\bullet\triangleq \mathcal{D}^\bullet _1$.
\end{enumerate}
\end{enumerate}   
\end{defn}
 
\begin{rem}\label{rem: translation well-defined}  Points~\ref{point: translation on derivations}\ref{point: translation eraser}-\ref{point: translation duplicator} are well-defined.  Indeed, since   $B,  B_1, B_2$ in both points are closed and $\forall$-lazy,  the types $B^\bullet,  B_1^\bullet, B_2^\bullet$  are  closed   $\Pi_1$, so that Theorem~\ref{thm: duplication and erasure}.\ref{point: erasure} assures the existence of  $\mathtt{E}_{B_{3-i}^\bullet}$. Moreover, the closed $\Pi_1$ type $B^\bullet$ in  point~\ref{point: translation on derivations}\ref{point: translation duplicator}  is  inhabited by the closed linear $\lambda$-term $\mathcal{D}_3^\bullet$. The latter is also normal, since by Remark~\ref{rem: eta long nf}    $\mathcal{D}_3$  is $\eta$-expanded (hence cut-free). Therefore, Theorem~\ref{thm: duplication and erasure}.\ref{point: duplication} assures that  $\mathtt{D}_{B^\bullet}$ exists.
\end{rem}

We now show that every  $\forall$-lazy cut-elimination step applied to a derivation $\mathcal{D}\triangleleft \Gamma \vdash M: A$  of  $\mathsf{LAM}$ can be simulated  by a sequence of $\beta \eta$-reduction steps  applied to $\mathcal{D}^\bullet$.  

\begin{thm}[Soundness of $(\_)^\bullet$]
\label{thm: translations for LAM} Let $\mathcal{D}$ be a derivation of $\mathsf{LAM}$. If $\mathcal{D}\red \mathcal{D}'$ then $\mathcal{D}^\bullet\rightarrow^*_{\beta \eta}\mathcal{D}'^\bullet$.
\end{thm}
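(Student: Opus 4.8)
The plan is to argue by induction on the structure of $\mathcal{D}$. When the $\forall$-lazy cut-elimination step $\mathcal{D}\red\mathcal{D}'$ fires inside a strict subderivation, the conclusion is immediate: $(\_)^\bullet$ builds $\mathcal{D}^\bullet$ from the translations of the immediate subderivations of $\mathcal{D}$ using only term constructors (abstraction, application, tensor pairing, the $\mathtt{let}\ \cdot\ \mathtt{be}\ \cdot\ \mathtt{in}\ \cdot$ scaffoldings of the $\with$L$i$ and $\with$R$1$ clauses, and identities for $\forall$R and $\forall$L) together with meta-substitutions, while $\rightarrow^*_{\beta\eta}$ is a congruence that is preserved by meta-substitution; hence the inductive hypothesis propagates. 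The real content is the base case, where the eliminated cut sits at the root of $\mathcal{D}$, and there I would proceed by a case analysis on the class of that cut. Since Definition~\ref{defn:Lazy cut-elimination rules} provides no $\forall$-lazy cut-elimination rule for copy-first cuts or for non-ready critical cuts, only commuting cuts, symmetric cuts, and ready cuts have to be treated.

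For commuting cuts and for the symmetric cuts of the form $(X,ax)$ and $(ax,Y)$ I would first record a substitution lemma: the translation of a $cut$ is by definition $\mathcal{D}_2^\bullet[\mathcal{D}_1^\bullet/x]$, and meta-substitution distributes over every term constructor appearing in the clauses defining $(\_)^\bullet$, the freshness side conditions being guaranteed by the linear constraint $FV(\Gamma)\cap FV(\Delta)=\emptyset$ of $cut$ and $\multimap$L and by the fact that all erasers and duplicators occurring in translations are closed terms. Since a commuting step only permutes rule instances, without contracting or duplicating any term constructor, this lemma yields $\mathcal{D}^\bullet=\mathcal{D}'^\bullet$ up to $\alpha$-equivalence (and likewise for $(X,ax)$ and $(ax,Y)$), so $\mathcal{D}^\bullet\rightarrow^*_{\beta\eta}\mathcal{D}'^\bullet$ trivially.

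For the remaining symmetric cuts I would exhibit the redexes explicitly. In the $(\multimap\text{R},\multimap\text{L})$ case $\mathcal{D}^\bullet$ contains a subterm $(\lambda z.\,\mathcal{D}_P^\bullet)\,\mathcal{D}_{N'}^\bullet$ whose single $\beta$-contraction yields $\mathcal{D}'^\bullet$. In the $(\with\text{R}0,\with\text{L}i)$ case $\mathcal{D}^\bullet$ has the form $\mathtt{let}\ (\mathcal{D}_{N_1}^\bullet\otimes\mathcal{D}_{N_2}^\bullet)\ \mathtt{be}\ x_1\otimes x_2\ \mathtt{in}\ (\mathtt{let}\ \mathtt{E}_{A_{3-i}^\bullet}\,x_{3-i}\ \mathtt{be}\ \mathbf I\ \mathtt{in}\ \mathcal{D}_M^\bullet)$; using the derivable tensor reduction of~\eqref{eqn: tensor unit reduction rules}, then the fact that $\mathtt{E}_{C}\,L\rightarrow^*_\beta\mathbf I$ for \emph{every} closed $L$ of closed $\Pi_1$ type $C$ (which follows from Theorem~\ref{thm: duplication and erasure}.\ref{point: erasure} together with strong normalization and confluence of $\mathsf{IMLL}_2$, since $\mathcal{D}_{N_{3-i}}^\bullet$ need not be normal), and finally the unit reduction $\mathtt{let}\ \mathbf I\ \mathtt{be}\ \mathbf I\ \mathtt{in}\ (\cdot)\rightarrow_\beta(\cdot)$, one reaches $\mathcal{D}_M^\bullet[\mathcal{D}_{N_i}^\bullet/x_i]=\mathcal{D}'^\bullet$. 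The $(\forall\text{R},\forall\text{L})$ case is immediate, since both rules translate to the identity and hence $\mathcal{D}^\bullet=\mathcal{D}'^\bullet$.

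The main obstacle is the ready cut $(X,\with\text{R}1)$. Here the left premise of the cut is a cut-free derivation $\mathcal{D}^\dagger\triangleleft{}\vdash V:A$ with $A$ closed and $\forall$-lazy, so by Proposition~\ref{prop: lazyness propagation}.\ref{point 2: laziness propagation} we have $V\in\mathcal{V}$ and $\mathcal{D}^\dagger$ free of $\with$L$i$, $\with$R$1$ and $\forall$L; it follows that $\mathcal{D}^{\dagger\bullet}$ is a \emph{closed normal} inhabitant of the closed $\Pi_1$ type $A^\bullet$ (it is essentially $V$ with $\with$-pairs read as tensor pairs, and no new redex is created since a pair is never in function position in a well-typed value). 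The translation of $\mathcal{D}$ is then $\mathtt{let}\ \mathtt{D}_{A^\bullet}\,\mathcal{D}^{\dagger\bullet}\ \mathtt{be}\ x_1\otimes x_2\ \mathtt{in}\ (\mathcal{D}_{N_1}^\bullet\otimes\mathcal{D}_{N_2}^\bullet)$, and Theorem~\ref{thm: duplication and erasure}.\ref{point: duplication} gives $\mathtt{D}_{A^\bullet}\,\mathcal{D}^{\dagger\bullet}\rightarrow^*_{\beta\eta}\mathcal{D}^{\dagger\bullet}\otimes\mathcal{D}^{\dagger\bullet}$ — this is precisely where the $\eta$-steps of the statement are spent, since the duplicator only reproduces its input up to $\eta$-expansion. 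A single tensor reduction then produces $\mathcal{D}_{N_1}^\bullet[\mathcal{D}^{\dagger\bullet}/x_1]\otimes\mathcal{D}_{N_2}^\bullet[\mathcal{D}^{\dagger\bullet}/x_2]$, which is the translation of the contractum $\langle N_1[V/x_1],N_2[V/x_2]\rangle$ built by $\with$R$0$ from the two new cuts. Beyond this case, the residual difficulties are purely bureaucratic: making the substitution lemma precise enough to cover every commuting configuration uniformly, and pinning down closedness and normality of the translated value derivations so that Theorem~\ref{thm: duplication and erasure} applies in the $(\with\text{R}0,\with\text{L}i)$ and ready-cut cases.
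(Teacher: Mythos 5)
Your proposal is correct and follows essentially the same route as the paper: reduce to the case where the eliminated cut is the last rule, observe that only the two rules of Figure~\ref{fig: cut elimination LAM} are non-trivial, discharge $(\with\text{R}0,\with\text{L}i)$ via the eraser together with the tensor and unit reductions of~\eqref{eqn: tensor unit reduction rules}, and discharge ready cuts via the duplicator applied to the closed normal term $\mathcal{D}^{\dagger\bullet}$. Your extra care in the $(\with\text{R}0,\with\text{L}i)$ case — normalizing $\mathcal{D}_{N_{3-i}}^\bullet$ before invoking Theorem~\ref{thm: duplication and erasure}.\ref{point: erasure}, which is stated only for normal inhabitants — is a detail the paper's proof glosses over, and is a welcome refinement rather than a divergence.
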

\begin{proof} W.l.o.g.~it suffices to consider the case where the last rule of $\mathcal{D}$ is the instance of $cut$   the $\forall$-lazy cut-elimination rule $\mathcal{D}\red \mathcal{D}'$  is applied to. 
The only interesting cases are the $\forall$-lazy cut-elimination rules in Figure~\ref{fig: cut elimination LAM}.
So, suppose that $\mathcal{D}$ ends with a cut $(\with\text{R}0, \with\text{L}i)$, where the premises of $\with\text{R}0$ are   $\mathcal{D}_1\triangleleft \vdash N_1: A_1$ and $\mathcal{D}_2\triangleleft \vdash N_2: A_2$, and the premise of $\with$L$i$ is $\mathcal{D}_3 \triangleleft \Gamma, x_i: A_i \vdash M:B$. Since $\mathcal{D}_{3-i}^\bullet$ is a closed  linear $\lambda$-term of closed $\Pi_1$ type $A^\bullet_{3-i}$,  by applying Theorem~\ref{thm: duplication and erasure}.\ref{point: erasure}   and the reduction rules in~\eqref{eqn: tensor unit reduction rules} we have:
\allowdisplaybreaks
\begin{align*}
 \mathcal{D}^\bullet&= \mathtt{ \ let \ }  \mathcal{D}_1^\bullet\otimes  \mathcal{D}_2^\bullet  \mathtt{ \ be \ }x_1\otimes  x_2 \mathtt{ \ in \ }(\mathtt{ let \ } \mathtt{E}_{A^\bullet_{3-i}} x_{3-i} \mathtt{ \ be \ } \mathbf{I} \mathtt{ \ in\ } \mathcal{D}_3^\bullet)\\
&\rightarrow_\beta \mathtt{ let \ } \mathtt{E}_{A_{3-i}} \mathcal{D}^\bullet_{3-i} \mathtt{ \ be \ } \mathbf{I} \mathtt{ \ in\ } \mathcal{D}_3^\bullet[\mathcal{D}_i^\bullet/x_i]\\ 
&\rightarrow^*_{\beta } \mathtt{ let \ }\mathbf{I} \mathtt{ \ be \ } \mathbf{I} \mathtt{ \ in\ } \mathcal{D}_3^\bullet[\mathcal{D}_i^\bullet/x_i]\\ 
&\rightarrow_{\beta }\mathcal{D}_3^\bullet[\mathcal{D}_i^\bullet/x_i]= \mathcal{D}'^\bullet.
\end{align*}
Finally, suppose that 
$\mathcal{D}$ ends with a \textit{ready} cut $(X, \with\text{R}1)$, for some $X$,  where the left premises of the $cut$  is $\mathcal{D}_1\triangleleft \vdash V:A$ and the premises of  $\with\text{R}1$ are   $\mathcal{D}_2\triangleleft x_1:A \vdash N_1: B_1$,  $\mathcal{D}_3\triangleleft x_2:A \vdash N_2: B_2$ and $\mathcal{D}_4\triangleleft \vdash U:A$.  Since the cut is ready,    $\mathcal{D}_1$ must be cut-free, and hence $\mathcal{D}_1^\bullet$ is a closed and normal linear $\lambda$-term of closed $\Pi_1$ type $A^\bullet$. Therefore, by applying    Theorem~\ref{thm: duplication and erasure}.\ref{point: duplication} and the reduction rules in~\eqref{eqn: tensor unit reduction rules},  we have: 
\allowdisplaybreaks
\begin{align*}
\mathcal{D}^\bullet&= \mathtt{ \ let \ }\mathtt{D}_{A^\bullet} \mathcal{D}_1^\bullet    \mathtt{\ be \ }x_1\otimes  x_2  \mathtt{ \ in \ }  \mathcal{D}_2^\bullet\otimes \mathcal{D}_3^\bullet\\
&\rightarrow^*_{\beta \eta} \mathtt{ \ let \ }  \mathcal{D}_1^\bullet\otimes \mathcal{D}_1^\bullet   \mathtt{\ be \ }x_1\otimes  x_2  \mathtt{ \ in \ }  \mathcal{D}_2^\bullet\otimes  \mathcal{D}_3^\bullet\\
&\rightarrow_\beta \mathcal{D}_2^\bullet[\mathcal{D}_1^\bullet/x_1]\otimes  \mathcal{D}_3^\bullet [\mathcal{D}_1^\bullet/x_2]=\mathcal{D}'^\bullet 
\end{align*}
this concludes the proof.
\end{proof}

The above result stresses a fundamental advantage of  $\mathsf{LAM}$ over $\mathsf{IMLL}_2$: as shown in both~\cite{mairson2003computational} and~\cite{curzi2019type},   the latter type system  requires an extremely complex linear $\lambda$-term to encode linear duplication  (see Theorem~\ref{prop: size of duplicator}), while  the former   can  compactly represent it by means of typed terms with shape:
\begin{equation}
\lambda x.  \mathtt{copy}^{V}  x \mathtt{\ as \ }x_1, x_2 \mathtt{ \ in\ } \langle x_1,x_2 \rangle: A \multimap  A \with A
\end{equation}
Moreover,  linear erasure is expressed by the following simple typed term:
\begin{equation}
 \lambda x.  \pi_2  (\langle  x, \mathbf{I}\rangle)  : A \multimap \mathbf{1}
\end{equation}

This crucial aspect of $\mathsf{LAM}$ can be made apparent by estimating the impact of the translation $(\_)^\bullet$ on the size of derivations, and hence the cost of   \enquote{unpacking}  the inference rules $\with$R$1$ and $\with$L$i$.
\begin{thm}[Exponential compression for $\mathsf{LAM}$] \label{thm: exponential ompression for LAM}Let $\mathcal{D}$ be a derivation in $\mathsf{LAM}$. Then, $\vert \mathcal{D}^\bullet \vert = \mathcal{O}(2^{\vert \mathcal{D} \vert^k})$, for some $k \geq 1$.
\end{thm}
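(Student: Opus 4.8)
The plan is to bound $\vert \mathcal{D}^\bullet \vert$ by structural induction on $\mathcal{D}$, tracking how each clause of the translation in Definition~\ref{defn: translation LAM into IMLL2} inflates the size. The key observation is that all clauses except \ref{point: translation on derivations}\ref{point: translation eraser} and \ref{point: translation on derivations}\ref{point: translation duplicator} are size-linear: they introduce only a constant amount of syntax (a $\lambda$, an application, a $\otimes$, or a meta-level substitution), so on their own they would give a linear bound. The only sources of blow-up are the erasers $\mathtt{E}_{B^\bullet_{3-i}}$ introduced by $\with$L$i$ and, far more seriously, the duplicators $\mathtt{D}_{B^\bullet}$ introduced by $\with$R$1$. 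By Proposition~\ref{prop: size of duplicator} we have $\vert \mathtt{E}_A \vert \in \mathcal{O}(\vert A\vert)$ and $\vert \mathtt{D}_A\vert \in \mathcal{O}(2^{\vert A\vert^2})$ for closed $\Pi_1$ types $A$. So the eraser clause contributes only a term polynomial in the size of a subtype of a type occurring in $\mathcal{D}$, which is $\mathcal{O}(\vert\mathcal{D}\vert)$; it is the duplicator clause that forces a genuinely exponential bound.

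First I would record the needed facts about sizes of types appearing in $\mathcal{D}$: since types annotate judgements in $\mathcal{D}$ and each rule only decomposes or instantiates types, every type $B$ occurring in $\mathcal{D}$ satisfies $\vert B^\bullet\vert \le c\cdot\vert\mathcal{D}\vert$ for a small constant $c$ (note that $(\_)^\bullet$ at worst doubles type size, replacing $\with$ by $\otimes$). Next I would set up the induction with a statement of the form $\vert\mathcal{D}^\bullet\vert \le f(\vert\mathcal{D}\vert)$ for a suitable function $f$. In the cut, $\multimap$L and $\with$R$0$ cases the translation is a substitution or a pairing of the two sub-translations, so $\vert\mathcal{D}^\bullet\vert$ is at most roughly $\vert\mathcal{D}_1^\bullet\vert+\vert\mathcal{D}_2^\bullet\vert+\vert\mathcal{D}_1^\bullet\vert\cdot\vert\mathcal{D}_2^\bullet\vert$ (the product coming from a meta-substitution replacing a variable by a term, by linearity at most once — actually by Lemma~\ref{lem: linearity} exactly once, so it is just a sum plus a constant); the $ax$, $\multimap$R, $\forall$R, $\forall$L cases add only $\mathcal{O}(1)$; the $\with$L$i$ case adds $\vert\mathtt{E}_{B^\bullet_{3-i}}\vert=\mathcal{O}(\vert\mathcal{D}\vert)$ plus a constant to $\vert\mathcal{D}_1^\bullet\vert$; and the $\with$R$1$ case yields $\vert\mathcal{D}^\bullet\vert \le \vert\mathtt{D}_{B^\bullet}\vert+\vert\mathcal{D}_1^\bullet\vert+\vert\mathcal{D}_2^\bullet\vert+\mathcal{O}(1)$ with $\vert\mathtt{D}_{B^\bullet}\vert=\mathcal{O}(2^{\vert B^\bullet\vert^2})=\mathcal{O}(2^{(c\vert\mathcal{D}\vert)^2})$.

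Putting these recurrences together, along a branch of $\mathcal{D}$ of length at most $\vert\mathcal{D}\vert$ one accumulates at most $\vert\mathcal{D}\vert$ additive contributions, each bounded by $\mathcal{O}(2^{(c\vert\mathcal{D}\vert)^2})$, so $\vert\mathcal{D}^\bullet\vert = \mathcal{O}\!\left(\vert\mathcal{D}\vert\cdot 2^{(c\vert\mathcal{D}\vert)^2}\right) = \mathcal{O}(2^{\vert\mathcal{D}\vert^k})$ for, say, $k=3$ (absorbing the constants and the polynomial factor into the exponent). The main obstacle is making the informal "accumulate over a branch" argument precise: because the $cut$/$\multimap$L clauses involve substitution, a naive induction could in principle multiply the sizes of the two premises' translations, which would give a doubly-exponential or worse bound. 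The fix is to invoke Lemma~\ref{lem: linearity} so that each meta-substitution $\mathcal{D}_2^\bullet[\mathcal{D}_1^\bullet/x]$ replaces exactly one occurrence of $x$, turning the recurrence $\vert\mathcal{D}^\bullet\vert \le \vert\mathcal{D}_1^\bullet\vert\cdot\vert\mathcal{D}_2^\bullet\vert$ into the additive $\vert\mathcal{D}^\bullet\vert \le \vert\mathcal{D}_1^\bullet\vert+\vert\mathcal{D}_2^\bullet\vert+\mathcal{O}(1)$ — which is exactly what keeps the overall bound single-exponential. Once that is in place the rest is a routine unwinding of the recurrences, and the stated bound $\mathcal{O}(2^{\vert\mathcal{D}\vert^k})$ follows with room to spare.
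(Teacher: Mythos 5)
Your overall architecture coincides with the paper's: a structural induction in which every clause of the translation except the duplicator clause contributes at most polynomially, and the $\with$R$1$ clause contributes $\vert\mathtt{D}_{B^\bullet}\vert=\mathcal{O}(2^{\vert B^\bullet\vert^2})$ via Proposition~\ref{prop: size of duplicator}. The gap is in the step where you bound the size of the type fed to the duplicator: you assert that \emph{every} type $B$ occurring in $\mathcal{D}$ satisfies $\vert B^\bullet\vert\le c\cdot\vert\mathcal{D}\vert$ \enquote{since each rule only decomposes or instantiates types}. This is false for the size measure the paper actually uses: $\vert\mathcal{D}\vert$ counts rule \emph{instances}, not symbols, so a single axiom $x:A\vdash x:A$ has derivation size $1$ while $\vert A\vert$ is unbounded, and $\forall$L can instantiate $\alpha$ with an arbitrarily large type at unit cost. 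Without a bound on $\vert B^\bullet\vert$ in terms of $\vert\mathcal{D}\vert$, the duplicator contribution $\mathcal{O}(2^{\vert B^\bullet\vert^2})$ cannot be converted into $\mathcal{O}(2^{\vert\mathcal{D}\vert^k})$, so the stated conclusion does not follow from what you have established.

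The paper closes exactly this hole by a mechanism you did not invoke: by the convention of Remark~\ref{rem: eta long nf}, the value premise $\mathcal{D}_3\triangleleft\ \vdash V:A$ of every instance of $\with$R$1$ is $\eta$-expanded, i.e.\ all of its axioms are atomic, so the derivation must build $A$ connective by connective; Proposition~\ref{prop: properties of eta logn nf}.\ref{point: properties eta long nf 1} then gives $\vert A\vert\le 2\cdot\vert\mathcal{D}_3\vert\le 2\cdot\vert\mathcal{D}\vert$, whence $\vert\mathtt{D}_{A^\bullet}\vert=\mathcal{O}(2^{(2\vert\mathcal{D}_3\vert)^2})$ and the induction goes through. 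You should replace your general type-size claim by this specific argument. (Your additive treatment of the $cut$ and $\multimap$L clauses via Lemma~\ref{lem: linearity} is correct and is implicitly what the paper relies on when it declares those cases uninteresting; note, however, that your claim that the $\with$L$i$ clause contributes only $\mathcal{O}(\vert\mathcal{D}\vert)$ rests on the same unjustified type-size bound for $B_{3-i}$, for which no $\eta$-expansion convention is available.)
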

\begin{proof} By structural induction on $\mathcal{D}$. The only  interesting case is when $\mathcal{D}$ ends with $\with$R$1$ with premises  $\mathcal{D}_1\triangleleft x_1:A \vdash N_1: B_1$,  $\mathcal{D}_2\triangleleft x_2:A \vdash N_2: B_2$ and $\mathcal{D}_3\triangleleft \vdash U:A$. By Definition~\ref{defn: translation LAM  into IMLL2},  $\mathcal{D}^\bullet = \mathtt{ \ let \ }\mathtt{D}_{A^\bullet} x    \mathtt{\ be \ }x_1\otimes  x_2  \mathtt{ \ in \ }  \mathcal{D}_1^\bullet\otimes  \mathcal{D}_2^\bullet $. 
By Remark~\ref{rem: eta long nf}, $\mathcal{D}_3$ is $\eta$-expanded, so that $\vert A \vert \leq 2 \cdot \vert \mathcal{D}_3 \vert $ by Proposition~\ref{prop: properties of eta logn nf}.\ref{point: properties eta long nf 1}. Hence,  $\vert \mathtt{D}_{A} \vert \in \mathcal{O}(  2^{(2\cdot \s{\mathcal{D}_3})^2})$ by Proposition~\ref{prop: size of duplicator}. We apply the induction hypothesis on $\mathcal{D}_1$ and $\mathcal{D}_2$ and conclude.
\end{proof}

\section{Conclusions}
We  introduce  $\mathsf{LAM}$,  a type assignment system endowed with  a weaker version of the Linear Logic additive rules $\with$R and $\with$L, called \emph{linear additive rules}.  We prove both  linear strong  normalization and a restricted  cut-elimination theorem. Also, we  present a sound  translation of  $\mathsf{LAM}$ into   $\mathsf{IMLL}_2$,  and we study its complexity.

A future direction is to find linear additive rules  based on the additive connective $\oplus$,  and to prove results similar to Theorem~\ref{thm:lazy normalization for LAML} and Theorem~\ref{thm: cut elimination for LAM}. This goal turns out to be harder, due to the \enquote{classical flavor} of the inference rule $\oplus$L, displayed below:
\begin{prooftree}
\AxiomC{$\Gamma, x:A \vdash M:C$}
\AxiomC{$\Gamma, y:B \vdash  N:C$}
\BinaryInfC{$\Gamma, z:A \oplus B \vdash \mathtt{case}\ z \mathtt{\ of \ }[\mathtt{inj}_1(z)\to M \ \vert \ \mathtt{inj}_2(z)\to N]:C$}
\end{prooftree}
 Let us discuss this point. The linear additive rule $\with$R$1$ prevents  exponential normalization by carefully controlling context-sharing, which involves hidden contractions and is responsible for unrestricted duplication. Finding a linear additive rule corresponding to $\oplus$L means controlling the sharing of types in the right-hand side of the turnstile. But this sharing hides a co-contraction, i.e.~$C \otimes C \multimap C$, which requires  fairly different techniques to be tamed. 

Interesting applications of linear additives are in the field of ICC, and indeed they motivate the tools developed in the present paper. As already discussed in the Introduction, variants of the additive rules expressing non-determinism explicitly  have been used to capture  $\mathsf{NP}$~\cite{matsuoka2004nondeterministic, gaboardi2008soft, maurel2003nondeterministic}.  To the best of our knowledge, all these  characterizations of $\mathsf{NP}$   crucially depend on the choice of a special evaluation strategy  able to avoid  the exponential blow up  described in  Section~\ref{subsec:blowuplazy}.  Linear additives can refine~\cite{matsuoka2004nondeterministic, gaboardi2008soft, maurel2003nondeterministic}, because they do not affect the complexity of normalization, and so they allow for natural cost models that can be implemented with a negligible overhead.  A possible future work could be then to extend Soft Type Assignment ($\mathsf{STA}$), a type system capturing $\mathsf{PTIME}$~\cite{gaboardi2009light},  with a non-deterministic variant of linear additives, and to show that   \textit{Strong} Non-deterministic Polytime Soundness holds for the resulting system. This would allow us to characterize $\mathsf{NP}$  in a \enquote{wider} sense, i.e.~independently of the reduction strategy considered.

In a probabilistic setting, similar goals have  already been achieved. In~\cite{curzi2020probabilistic} Curzi and Roversi studied the type system $\mathsf{PSTA}$, an extension of $\mathsf{STA}$ with a non-deterministic variant of the linear additive rules obtained by  replacing  $\with$L$i$  with the following:
\begin{equation*}
	\AxiomC{$\Gamma, x:A \vdash M:C$}
	\UnaryInfC{$\Gamma, y:A \with A \vdash M[\pi(y)/x]:C$}
	\DisplayProof
\end{equation*}
and by considering the  non-deterministic reduction rule $M \leftarrow \pi (\langle M,N \rangle)\rightarrow N$ in  place of $\pi_i(\langle M_1,M_2 \rangle)\rightarrow M_i$. It is shown that, when  $\mathsf{PSTA}$ is  endowed with a probabilistic big-step reduction relation, it is able to capture the probabilistic polytime functions and problems independently of the reduction strategy.

\section*{Acknowledgments}
I would like to thank L. Roversi for the precious discussions about the topic, and the anonymous reviewers for useful comments and suggestions. This work was supported by a UKRI Future Leaders Fellowship, `Structure vs Invariants in Proofs', project reference MR/S035540/1.

\nocite{*}
\bibliographystyle{eptcs}
\bibliography{generic}
\end{document}